\newcommand{\keywords}[1]{\par\addvspace\baselineskip
        \noindent\keywordname\enspace\ignorespaces#1}
\newcommand{\set}[1]{\{ #1 \}}
\newcommand{\pset}[2]{\{\, #1 : #2 \,\}}
\newcommand{\struct}[1]{\mathcal{#1}}
\newcommand{\pol}{\mapsto}
\newcommand{\flows}{\leadsto}
\newcommand{\owofunc}[1]{\operatorname{#1}}
\newcommand{\obs}{\owofunc{obs}}
\newcommand{\dom}{\owofunc{dom}}
\newcommand{\view}{\owofunc{view}}
\newcommand{\Runs}{\owofunc{Runs}}
\newcommand{\purge}{\owofunc{purge}}
\newcommand{\act}{\owofunc{act}}
\newcommand{\last}{\owofunc{last}}
\newcommand{\Views}{\owofunc{Views}}
\newcommand{\aconc}{\,\hat{\circ}\,}
\newcommand{\mname}[1]{\text{#1}}
\newcommand{\absUp}[2]{\operatorname{Hu}(#1)}
\newcommand{\absUpPol}[2]{\operatorname{Hu}^{#2}(#1)}
\newcommand{\absDown}[2]{\operatorname{Ld}(#1)}
\newcommand{\absDownPol}[2]{\operatorname{Ld}^{#2}(#1)}
\newcommand{\proj}[2]{\operatorname{pr}_{#1}^{#2}}
\newcommand{\absn}{{\cal C}} 
\title{\normalfont On Reductions from Multi-Domain Noninterference to the Two-Level Case}
\author{
  Oliver Woizekowski
  \and
  Ron van der Meyden
}
\institute{
    Department of Computer Science, Kiel University \\
  \email{oliver.woizekowski@email.uni-kiel.de}
  \and
  School of Computer Science and Engineering, 
  UNSW Australia \\
  \email{meyden@cse.unsw.edu.au}
}
\begin{document}

\maketitle

\begin{abstract}
The literature on information flow security
with respect to transitive policies 
has been concentrated largely on the case of policies with two security domains, High and Low, 
because of a presumption that more general policies can be reduced to this two-domain case.
The details of the reduction have not been the subject of careful study, however. 
Many works in the literature use a reduction based on 
a quantification over ``Low-down''  partitionings of domains into those 
below and those not below a given domain in the information flow order. 
A few use  ``High-up" partitionings of domains into those above and those not above a given domain. 
Our paper argues that more general ``cut" partitionings are 
also appropriate, and studies the relationships between the resulting multi-domain notions of security
when the basic notion for the two-domain case to which we reduce is either Nondeducibility on Inputs or Generalized Noninterference.
The Low-down reduction is  shown to be weaker than the others, and while the High-up reduction is 
sometimes equivalent to the cut reduction, both it and the Low-down reduction may have 
an undesirable property of non-monotonicity with respect to a natural ordering on  policies. 
These results suggest that the cut-based partitioning yields a more robust general approach 
for reduction to the two-domain case.

  \keywords{Noninterference, nondeterminism, information flow, covert channels, policies}
\end{abstract}


\section{Introduction}

Information flow security is concerned with finding, preventing and understanding the unwanted flow of information within a system implementation.
One of its applications is the detection of covert channels,
which might arise due to hard-to-foresee side-effects in the combination of smaller components,
or even have been deliberately planted in the implementation by a rogue systems designer.

In order to reason about information flow, one needs to decompose the system into information \emph{domains}.  
Domains are thought of as active components (users, processes, pieces of hardware, organisational units, etc.) and change the system state by performing actions.
Domains may also make observations of the system state. One way for information to flow from one domain to another is for the actions of the first to change the 
observations of the second. 
To describe the allowed flows of information in the system, one can 
specify for each pair of domains in which directions a flow of information is permissible.
This specification is called a \emph{policy} and usually represented as a directed graph: two examples are depicted in Figure \ref{fig:policy-examples}.
Policies are generally taken to be reflexive relations, since nothing can prevent a domain from obtaining
information about itself. Moreover, they are often assumed to be transitive, 
(i.e., if $A \pol B$ and $B \pol C$ then we must also have $A \pol C$) 
since if $B$ may obtain information about $A$, and $B$ may pass this information to $C$, then 
there is nothing to prevent $C$ receiving information about $A$.
\footnote{We confine our attention in this paper to the transitive case. 
Works that have investigated intransitive information flow theory include \cite{HaighYoung87}, \cite{rushby_92} and \cite{RonEsorics2007}.} 

\begin{center}
  \vspace{-0.75cm}
  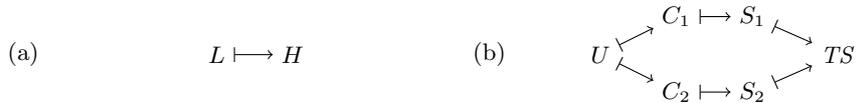
\begin{figure}[h]
  (a)
  \begin{minipage}{0.45\textwidth}
    \centering
    \begin{tikzpicture}[|->, auto]
      \node at (0,0) (L) {$L$};
      \node at (1,0) (H) {$H$};
      \path (L) edge node {} (H);
    \end{tikzpicture}
  \end{minipage}
  (b)
  \begin{minipage}{0.45\textwidth}
    \centering
    \begin{tikzpicture}[|->, auto]
      \node at (0, 0) (U) {$U$};
      \node at (1, +0.5) (C1) {$C_1$};
      \node at (1, -0.5) (C2) {$C_2$};
      \node at (2, +0.5) (S1) {$S_1$};
      \node at (2, -0.5) (S2) {$S_2$};
      \node at (3.125, 0) (TS) {$\mathit{TS}$};

      \path (U) edge node {} (C1);
      \path (U) edge node {} (C2);
      \path (C1) edge node {} (S1);
      \path (C2) edge node {} (S2);
      \path (S1) edge node {} (TS);
      \path (S2) edge node {} (TS);
    \end{tikzpicture}
  \end{minipage}
  \caption{The two-level policy $H \not\pol L$ and a transitive MLS-style policy.}
  \label{fig:policy-examples}
  \end{figure}
  \vspace{-0.75cm}
\end{center}

Policy (a) in Figure~\ref{fig:policy-examples}, which we call $H \not\pol L$,  
is  the simplest and most-studied case. 
Here  we have two domains $H$ and $L$, where $H$ is thought to possess high and $L$ low level clearance in the system, and 
information flow is permitted from $L$ to $H$, but prohibited in the other direction. 
In practice, a larger set of domains is used to represent different security classifications,
such as Unclassified ($U$), Confidential ($C$), Secret ($S$) and Top Secret $(TS$), and each security level 
may moreover be partitioned into compartments representing different types of information 
relevant to `need to know' restrictions. This leads to policies such as the transitive policy 
whose Hasse diagram is depicted in Figure~\ref{fig:policy-examples}(b). Here the 
Confidential classification has two independent compartment  domains ($C_1,C_2$), as does the 
Secret classification  ($S_1, S_2$).  

Informally, the statement $u \pol v$ can be read as ``$u$'s behaviour may influence $v$'s observations'' or ``$v$ may deduce something about $u$'s behaviour''.
A first formal definition for this intuition, called \emph{noninterference} was given by Goguen and Meseguer \cite{GM82}, 
in the context of a \emph{deterministic} automaton-based model.
A generalization to nondeterministic systems is desirable
so one can extend information flow analysis to, for example, the use of unreliable components, randomness or underspecification.
Several works (e.g., \cite{Sutherland86,McCullough88Ulysses,McLean94,Mantel2000Possibilistic,focardi_01,roscoe_95,Ryan2000}) 
extended the theory to nondeterministic systems
and richer semantic models such as process algebras,
resulting
in a multitude of security definitions for several kinds of models,
and with different intentions in mind.

Much of this subsequent literature has confined itself to the 
two-domain policy $H \not\pol L$,  because there has been a view  that more complex
policies can be treated by reduction to this case. 
One obvious way to do so, that we may call the \emph{pointwise} approach,
is to apply a two-domain notion of noninterference for each pair of 
domains $u,v$ in the policy with $u \not \pol v$. However, even in the case 
of deterministic systems, this can be shown to fail to detect situations where 
a domain may have disjunctive knowledge about a pair of other domains, 
neither of which may interfere with it individually (we present an example
of this in Section~\ref{sec:abstraction}).  
Goguen and Meseguer  \cite{GM82} already address this deficiency 
by what we may call a \emph{setwise} approach, which requires
that for each domain $u$, the set of domains $v$ with $v\not \pol u$
does not collectively interfere with $u$. 

However, while the setwise definition deals with what an individual
domain may learn about a group of other domains, it does not
deal with what groups may learn about individuals, or other groups. 
Subsequent work in the literature has taken this issue
of \emph{collusion} into account in reducing to the two-domain case. 
For example, a survey by 
Ryan \cite{Ryan2000} states:
\begin{quote}
It might seem that we have lost generality by assuming that the alphabet of the system is partitioned into High and Low. In fact we can deal with more general MLS-style policy with a lattice of classifications by a set of non-interference constraints corresponding to the various lattice points. For each lattice point $l$ we define High to be the union of the interfaces of agents whose clearance dominates that of $l$. Low will be the complement, i.e., the union of the interfaces of all agents whose clearance does not dominate that of $l$. Notice also that we are assuming that we can clump all the high-level users together and similarly all the low-level users. There is nothing to stop all the low users from colluding. Similarly any high-level user potentially has access to the inputs of all other high users. We are thus again making a worst-case assumption.
\end{quote}
We call the kind of groupings that Ryan describes \emph{High-up coalitions},
and interpret his comments as the suggestion to extend existing, already understood security definitions 
for $H \not\pol L$ to the multi-domain case by generating multiple instances of $H \not\pol L$ formed from the policy in question using High-up coalitions.
Ryan's High-up approach is used in some works (e.g., \cite{forster_97}), 
but many others (e.g., \cite{Mantel_thesis,millen_94,RoscoeWW96,sutherland_86}) 
use instead a dual notion of \emph{Low-down coalitions}, 
where for some domain $l$, the group $L$  is taken to be the set of domains $u$ with $u\pol l$ and $H$ is taken to be the  complement of this set. 

Yet other groupings exist that are neither High-up nor Low-down coalitions. For example,  in 
Figure~\ref{fig:policy-examples}(b), the grouping $L = \{U,C_1,C_2\}$ and $H = \{S_1,S_2,TS\}$, corresponds
to neither a High-up nor a Low-down coalition. It seems no less reasonable to consider $L$ to be a colluding
group that is seeking to obtain $H$ level information. Note that this grouping is a 
\emph{cut} in the sense that there is no $u \in H$ and $v \in L$ such that $u \pol v$. 
Since in such a cut, domains in $L$ cannot individually obtain information about domains in $H$, 
it is reasonable to expect that they should not be able to get such information collectively. 
This motivates a reduction to the two-domain case that quantifies over all cuts. 

Our contribution in this paper is to consider this range of alternative reductions from multi-domain policies to the 
two-domain case, and to develop an understanding of how these definitions are related and 
which are reasonable. Reductions must start with an existing notion of security for the two-domain case. 
We work with two basic security definitions: Generalized Noninterference, which was introduced in \cite{McCullough88},
and Nondeducibility on Inputs, first presented in \cite{Sutherland86}.
Our analysis shows that the relationships between the resulting notions of security are subtle, and
the adequacy of 
a reduction approach may 
depend on the base notion for the two-domain policy. 
Amongst other results, we show that: 
\begin{enumerate}
\item 
When the basic notion for the two-domain case is Generalized Noninterference, 
High-up coalitions yield a notion that is strictly stronger than the 
notion based on Low-down coalitions, which in turn is stronger than the 
pointwise generalization.
For Nondeducibility on Inputs, however, 
 High-up coalitions and Low-down coalitions give independent notions of security. 
Low-down coalitions imply the setwise definition in this case, but High-up coalitions imply only the weaker
pointwise version. 
\item 
  For Generalized Noninterference, 
  High-up coalitions are 
   `complete' 
  in the sense of being equivalent to a reduction quantifying over all cuts.  
  However, this completeness result does not  hold for Nondeducibility on Inputs, 
  where cuts yield a stronger notion of security. 
\item Not all the resulting notions of security have an expected property of monotonicity with respect to a natural restrictiveness
  order on policies. 
  (Security of a system should be preserved when one relaxes policy constraints.) 
   In particular, High-up coalitions with respect to Nondeducibility on Inputs does not have this property, 
   and Low-down coalitions do not have this property for either Generalized Noninterference or Nondeducibility on Inputs. 
\end{enumerate}
These conclusions indicate that while Ryan's proposal to use High-up coalitions is sometimes adequate, a reduction that quantifies over the 
larger set of all cut coalitions seems to 
yield the most generally robust approach 
for reducing multi-domain policies to the two-domain case. 

The structure of the paper is as follows.
In Section~\ref{sec:model}, we introduce our model and show how systems and policies are described.
Our reductions will use two basic security definitions for two-domain policies 
that are recalled and generalized to their obvious pointwise 
versions for the multi-domain case in Section~\ref{sec:secdefs}. 
Section~\ref{sec:abstraction} gives some examples showing why 
the pointwise versions are still weaker than required, and 
it is necessary to consider reductions using groupings of domains. 
The range of reductions we consider are defined in Section~\ref{sec:reduction}.
Our main results are stated in Section~\ref{sec:main},
\iftoggle{long}{
  full proofs of which are given in Section~\ref{sec:proofs}.
}{
  and an outline of the proof technique involved to prove these results is
  presented in Section~\ref{sec:proofs}.
}
Finally, we conclude and motivate further research in Section~\ref{sec:concl}.


\section{Background: Systems and Policy Model}
\label{sec:model}

\paragraph{Notational 
conventions.}
Sequences are represented as $xyz$, or $x \cdot y \cdot z$ if it helps readability.
The set of finite sequences over a set $A$ is denoted $A^*$, 
and 
the empty sequence is denoted $\varepsilon$.
\iftoggle{long}{ 
  The length of $\alpha$ is written as $|\alpha|$.
}{
}
We write $\alpha(i)$ to denote the element with index $i$ of a sequence $\alpha$, where $i \in \mathbb{N}$, and the first element of $\alpha$ is $\alpha(0)$.
We let $\last(\alpha)$ be the last element of $\alpha$ if $\alpha$ is non-empty, and let it be undefined if $\alpha$ is empty.
If $X \subseteq A$  and $\alpha \in A^*$ then let $\alpha|_X$ be the subsequence of $\alpha$ with only elements from $X$ retained.
The set of total functions from $A$ to $B$ is denoted~$B^A$.

\paragraph{Systems.}
We use an automaton-based model similar to the original Goguen-Meseguer one from \cite{GM82}.
A \emph{system} is a structure $(S,A,O,D,\Delta,\obs,\dom,s_I)$ with $S$ a set of \emph{states},
$A$ a finite set of \emph{actions},
$D$ a finite set of domains with at least two members,
$O$ a finite set of \emph{observations} such that $A$ and $O$ are disjoint,
${\Delta \subseteq S {\times} A {\times} S}$ a (nondeterministic) transition relation,
$\obs \colon D {\times} S \to O$ an observation function,
$\dom \colon A \to D$ an assignment of actions to domains,
and $s_I$ the initial state.
We write $\obs_u(s)$ for $\obs(u,s)$.
The value $\obs_u(s)$ represents the observation the domain $u$ makes when the system 
is in 
state $s$.
Observations can also be interpreted as outputs from the system.
For an action $a$, the domain $\dom(a)$ is the domain from which $a$ originates.
The relation $\Delta$ is called \emph{deterministic} if for all $s, s', s'' \mathrel\in S$, $a \in A$:
if $(s,a,s') \in \Delta$ and $(s,a,s'') \in \Delta$ then $s'\mathrel=s''$.
We assume systems to be \emph{input-enabled}, i.e. that for every $s \in S$ and $a \in A$ there is $s' \in S$ with $(s,a,s') \in \Delta$.
The assumption of input-enabledness is made to guarantee that the domains' reasoning is based on their actions and observations only and cannot use system blocking behaviour as a source of information.

A \emph{run} of a system is a sequence $s_0 a_1 s_1 \ldots a_n s_n \in S(AS)^*$ such that for $i<n$,
we have ${(s_i,a_i,s_{i+1}) \in \Delta}$.
It is $\emph{initial}$ if $s_0 = s_I$.
If not explicitly mentioned otherwise, we always assume initial runs.
The set of initial runs of a system $\struct{M}$ will be denoted $\Runs(\struct{M})$.
For a run $r$, the subsequence of actions of $r$ is denoted $\act(r)$ and the subsequence of actions performed by a domain $u$ is denoted $\act_u(r)$.

\paragraph{Notational and diagrammatic conventions for systems.}
If $u$ is a domain and $A$ the action set of a system, we write $A_u$ for the set of actions $a$ with $\dom(a)=u$.
Similarly, for $X$ a set of domains we write $A_X$ for the set of actions $a$ with $\dom(a) \in X$.
Systems are depicted as directed graphs, where the vertices contain the state names.
Domain observations are written near the vertices that represent the states.
Edges are labelled with action names and represent transitions from one state to another.
The initial state is marked with an arrow that points to it.
Self-looping edges are omitted when possible to reduce clutter: thus, the lack of an edge labelled by action $a$
from state $s$ (as would be required by input-enabledness) implies the existence of edge $(s,a,s)$.

\paragraph{Modelling information by views.}
We will be interested in an asynchronous semantics for information, and capture asynchrony by 
treating sequences that differ only by stuttering observations as indistinguishable. 
This can also be described as no domain having access to a global clock.
Intuitively, systems can be imagined as distributed and domains as representing network hosts.
From this intuition it follows, for a given domain $u$, that local state changes within domains distinct from $u$
that do not provide a new observation to $u$
must not generate a copy of $u$'s current observation.
To this end,  we use an `absorptive concatenation'  operator $\aconc$ on sequences.
For all sequences $\alpha$ and $b_0\ldots b_n$ we let $\alpha \aconc \varepsilon = \alpha$ and
\begin{center}
  $\alpha \aconc b_0 \ldots b_n = \begin{cases}
    \alpha \aconc b_1 \ldots b_n & \mbox{if } \alpha \neq \varepsilon \mbox{ and } \last(\alpha)=b_0 \\
      (\alpha \cdot b_0) \aconc b_1 \ldots b_n    & \mbox{otherwise.}                           \\
  \end{cases}$
\end{center}
One can imagine $\alpha \aconc \beta$ as $\alpha \cdot \beta$ with stuttering  at the point of connection removed.
The information a domain acquires over the course of a run is modelled by the notion of \emph{view}.
Considering systems as networks suggests that, during a run, a domain can only directly see the actions performed by itself.
This is reflected in our definition of view by eliminating actions performed by all other domains.
For a domain $u$ the operator $\view_u \colon \Runs(\struct{M}) \to (A \cup O)^*$ is defined inductively:
for the base case $r= s_I$ let $\view_u(r) = \obs_u(s_I)$.
For all $r \in \Runs(\struct{M})$ of the form $r = r'as$, where $r'\in \Runs(\struct{M})$, $a\in A$ and $s\in S$, let 
\begin{center}
  $\view_u(r) = \begin{cases}
    \view_u(r') \cdot a \cdot \obs_u(s)  & \mbox{if } \dom(a)=u \\
    \view_u(r') \aconc \obs_u(s) & \mbox{otherwise.}
  \end{cases}$
\end{center}
An element $\view_u(r)$ is called a \emph{$u$ view}. 
The set of all $u$ views in system $\struct{M}$ is denoted $\Views_u(\struct{M})$.

For an example of a view, see the system in Figure \ref{fig:system-view-example}
(recall that we elide self-loops) and consider
the run ${r = s_I a s_1 b s_2 b s_2 a s_3}$;
the domains are given by the set $\set{A,B}$,
the domain assignment is given by $\dom(a)=A$ and $\dom(b)=B$,
and the observations made by domain $B$ are depicted near the state names.
We have $\view_B(r) = \bot b 1 b 1 2$.

\begin{wrapfigure}{r}[4pt]{0.4\textwidth}
  \vspace{-12pt}
    \centering
    \begin{tikzpicture}[->, auto, scale=1.25]

      \node at (-1, 0) (nirvana) {};
      \node at (0,0) (init) {$s_I$};
      \node at (0,1) (initPrime) {$s_1$};
      \node at (1,0) (s0) {$s_0$};
      \node at (1,1) (s1) {$s_2$};
      \node at (2,1) (s2) {$s_3$};

      \node at (0, -0.2) (initObs) {\tiny $\bot$};
      \node at (1, -0.2) (s0Obs) {\tiny $0$};
      \node at (0, 1+0.2) (initPrimeObs) {\tiny $\bot$};
      \node at (1, 1+0.2) (s1Obs) {\tiny $1$};
      \node at (2, 1+0.2) (s2Obs) {\tiny $2$};

      \path (nirvana) edge node {} (init)
      (init) edge node {$b$} (s0)
      (init) edge node {$a$} (initPrime)
      (initPrime) edge node {$b$} (s1)
      (s1) edge node {$a$} (s2)
      ;

    \end{tikzpicture}
    \caption{System example.}
  \label{fig:system-view-example}
  \vspace{-12pt}
\end{wrapfigure}
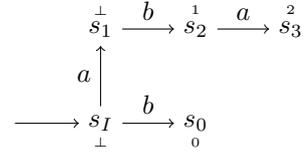
Note that $B$ does not notice the first transition in $r$ because we have $\obs_B(s_I) = \obs_B(s_1)$.
Domain $B$ does, however, learn about the last transition in $r$ due to $\obs_B(s_2) \neq \obs_B(s_3)$.
With the network analogy mentioned above, the last transition might model a communication from $A$ to $B$.

\paragraph{Policies.}
A \emph{policy} is a reflexive binary relation $\pol$ over a set of domains $D$.
We require $\pol$ to be reflexive because we assume that domains are aware of their own behaviour at all times.
We assume also that policies are transitive, to avoid additional complexities associated with the semantics of intransitive policies. 
Transitive policies arise naturally from lattices of security levels.
The policy that has received the most attention in the literature is over the set  $D=\set{H,L}$, consisting of a domain $H$ (or \emph{High}), representing
a high security domain whose activity needs to be protected, and a domain $L$ (or \emph{Low}), representing a low security attacker who aims to  learn High secrets. 
We refer to this policy as  $H \not\pol L$; it is given by the relation $\pol = \set{(H,H),(L,L),(L,H)}$.

If $\pol$ is a policy over some domain set $D$,
we write $u^{\pol}$ for the set $\pset{ v \in D }{ u \pol v }$,
and ${}^{\pol}u$ for the set $\pset{ v \in D  }{ v \pol u }$.
Similarly, the expression ${}^{\not\pol}u$ shall denote the set $\pset{ v \in D }{ v \not\pol u }$.

\paragraph{Further notational conventions for policies.}
Policies are depicted as directed graphs and their vertices carry domain names.
Edges due to reflexivity or transitivity are omitted.


\paragraph{Policy abstractions and cuts.}

A set of domains can be abstracted by grouping its elements into sets. 
Such groupings can be motivated in a number of ways. One is simply that 
we wish to take a coarser view of the system, and reduce the number of domains by 
treating several domains as one. Groupings may also arise from several domains 
deciding to collude in an attack on the security of the system.
Abstractions of a set of domains lead to associated abstractions of policies and systems.

An \emph{abstraction} of a set of domains $D$ is a set ${\cal D}$ of subsets of $D$
with $D = \bigcup_{F \in {\cal D}} F$ and $F \cap G \neq \emptyset$ implies $F=G$ for all $F,G \in {\cal D}$.
Associated with each abstraction ${\cal D}$ of $D$ is a function $f_{\cal D} \colon D\rightarrow {\cal D}$
defined by taking $f_{\cal D}(u)$ to be the unique $F \in {\cal D}$ with $u\in F$. 
For a policy $\pol$ over $D$ we let $\pol^{\cal D}$ be the policy over ${\cal D}$ defined by $F \pol^{\cal D} G$
if and only if there are $x \in F$ and $x' \in G$ with $x \pol x'$.


In order to formalize the idea of a reduction to $H \not\pol L$, we use abstractions that group all domains into
two sets that correspond to the High and Low domains.
A \emph{cut} of a set of domains $D$ with respect to a policy $\pol$
is a tuple $\absn = ({\cal H}, {\cal L})$
such that $\set{{\cal H}, {\cal L}}$ is an abstraction of $D$ and
there does not exist $u\in {\cal H}$ and $v\in {\cal L}$ with $u \pol v$.
When forming policies, we identify cuts with their underlying abstractions, and write $\pol^\absn$ for $\pol^{\set{{\cal H},{\cal L}}}$,
so the last requirement can also be formulated as ${\cal H} \not\pol^\absn {\cal L}$.
We mainly deal with abstractions that are given by cuts in this paper.
See Figure \ref{fig:cut-example} for an illustration of how policy (b) in Figure \ref{fig:policy-examples}
is abstracted using
${\cal C} := ({\cal H}, {\cal L}) = (\set{S_1, S_2, \mathit{TS}}, \set{U, C_1, C_2})$,
where we get ${\cal L} \pol^{\cal C} {\cal H}$ due to $C_1 \pol S_1$ or $C_2 \pol S_2$
and ${\cal H} \not\pol^{\cal C} {\cal L}$ as required for a cut.

\begin{center}
  \vspace{-0.75cm}
  \begin{figure}
   \centering
    \begin{minipage}{0.45\textwidth}
      \centering
      \begin{tikzpicture}[|->, auto]
        \node at (0, 0) (U) {$U$};
        \node at (1, +0.5) (C1) {$C_1$};
        \node at (1, -0.5) (C2) {$C_2$};
        \node at (2, +0.5) (S1) {$S_1$};
        \node at (2, -0.5) (S2) {$S_2$};
        \node at (3.125, 0) (TS) {$\mathit{TS}$};
        

        \path (U) edge node {} (C1);
        \path (U) edge node {} (C2);
        \path (C1) edge node {} (S1);
        \path (C2) edge node {} (S2);
        \path (S1) edge node {} (TS);
        \path (S2) edge node {} (TS);

        \draw [dotted] plot [smooth cycle] coordinates {
          ($(U) + (-0.2, 0)$)
          ($(C1) + (0.1, 0.2)$)
          ($(C2) + (0.1, -0.2)$)
        };
        \draw [dotted] plot [smooth cycle] coordinates {
          ($(TS) + (0.25, 0)$)
          ($(S1) + (-0.1, 0.2)$)
          ($(S2) + (-0.1, -0.2)$)
        };
        \node at (0.25, -0.6) (Low) {\tiny $\cal L$};
        \node at (2.8, 0.535) (High) {\tiny $\cal H$};
      \end{tikzpicture}
    \end{minipage}
    \caption{Illustration of a policy abstraction.}
    \label{fig:cut-example}
  \end{figure}
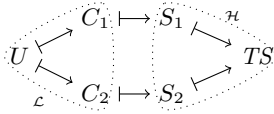
  \vspace{-0.75cm}
\end{center}

\paragraph{Systems and abstractions.}
Systems can be viewed from the perspective of an abstraction.
Intuitively, the actions of an abstract domain $F$ are all the actions of any 
of its subdomains $u\in F$.
It observes the collection of all observations made by the members of $F$
and thus their observations are functions from $F$ to $O$.
Let ${\struct{M}=(S,A,O,D,\Delta,\obs,\dom,s_I)}$ be a system and ${\cal D}$ be an abstraction of $D$.
Then $\struct{M}^{\cal D}$ is the system $(S,A,O',{\cal D},\Delta,\obs^{\cal D},\dom^{\cal D},s_I)$, where
$O'$ is the union of $O^F$ for all $F \in {\cal D}$,
its set of domains is ${\cal D}$,
for a state $s\in S$, the observation 
$\obs_F^{\cal D}(s)$ is the function with domain 
$F \in {\cal D}$ 
that sends each $x \in F$ to $\obs_x(s)$, and
$\dom^{\cal D}(a) 
= f_{\cal D}(\dom(a))$ for all $a \in A$. 
Intuitively, $\obs_F^{\cal D}(s)$ records the observations made in each domain in $F$ at $s$. 
Again, if $\absn = ({\cal H}, {\cal L})$ is a cut we write $\struct{M}^\absn$ for $\struct{M}^{\set{{\cal H},{\cal L}}}$.

\paragraph{Monotonicity with respect to restrictiveness.}
In \cite{RonSebastian2016} the notion of \emph{monotonicity with respect to restrictiveness} is discussed,
which holds for a given notion of security $X$ if, for all systems $\struct{M}$ and policies $\pol$ over the domain set of $\struct{M}$, the following statement holds:
if $\struct{M}$ is $X$-secure with respect to $\pol$
then $\struct{M}$ is $X$-secure with respect to every policy $\pol'$ with $\pol \subseteq \pol'$. 
If a notion of security satisfies this property, we will say that it is \emph{monotonic}.
Intuitively, adding edges to a policy reduces the set of information flow restrictions $u \not \pol v$ implied by the policy, 
making the policy easier to satisfy, so one would expect every sensible notion of security to be monotonic.
However, we will show that some notions of security obtained by a sensible construction based on cuts
do not support this intuition.

\section{Basic Notions of Noninterference}
\label{sec:secdefs}
In this section we recall two security definitions which have been proposed in the literature
for nondeterministic, asynchronous automaton-based models.
We use these as the basic definitions  of security for $H\not \pol L$ in the reductions  that we study. 
For purposes of comparison, we state the definitions using the most obvious pointwise generalization 
from the usual two-domain case to the general multi-domain case. 

\begin{longv} 
For deterministic systems, we define an operator $\cdot \colon S {\times} A \to S$ by $s \cdot a = s'$, where $s'$ is the unique state such that $(s,a,s') \in \Delta$ is satisfied.
This operator is extended to action sequences by setting $s \cdot \varepsilon = s$ and $s \cdot (\alpha a) = (s \cdot \alpha) \cdot a$.
This way action sequences `act' on the system's state set and $s \cdot \alpha$ is the state reached by performing $\alpha$ from $s$.

First, we recall Goguen and Meseguer's original notion of noninterference from \cite{GM84}.
They introduced a function $\purge$, which, for a given domain $u$, eliminates all actions that are supposed to be non-interfering with $u$.
It can be inductively defined as follows:
let $\purge_u(\varepsilon) = \varepsilon$ and for all $\alpha \in A^*$ and $a \in A$, let
\begin{center}
  $\purge_u(\alpha a) =
  \begin{cases}
    \purge_u(\alpha) \cdot a & \mbox{if } \dom(a) \pol u \\
    \purge_u(\alpha)         & \mbox{otherwise.}
  \end{cases}$
\end{center}
Their idea is, with respect to a fixed domain $u$, to deem a system secure if all action sequences that are equivalent under $\purge_u$ yield the same observations for $u$.

\begin{definition}
  \label{def:classical-security}
We say that $\struct{M}$ is \emph{$\mname{P}$-secure} for $\pol$ if for all $u \in D$ and $\alpha,\beta \in A^*$ we have:
if $\purge_u(\alpha)=\purge_u(\beta)$ then $\obs_u(s_I \cdot \alpha) = \obs_u(s_I \cdot \beta)$.
\end{definition}
This statement can be understood as the requirement that observations made by a domain may only depend on the behaviour by domains permitted to interfere with it.
\end{longv} 

\subsection{Nondeducibility on Inputs}

\begin{shortv} 
Goguen and Meseguer's definition of noninterference  \cite{GM84} was for deterministic systems only. 
\end{shortv} 
Historically, Sutherland \cite{Sutherland86} was the first to consider information flow in nondeterministic systems.
He presented a general scheme to instantiate notions of \emph{Nondeducibility}, i.e., epistemic definitions of absence of information flows. 
The notion of Nondeducibility on Inputs is one instance of this general scheme. 

Let $u,v \in D$.
We say that $\alpha \in {A_u}^*$ and $\beta \in \Views_v(\struct{M})$ are \emph{$v$ compatible} if there is $r \in \Runs(\struct{M})$ with $\act_u(r)=\alpha$ and $\view_v(r)=\beta$.
We write $u \flows_I v$ if there are $\alpha \in {A_u}^*$ and $\beta \in \Views_v(\struct{M})$ which are not $v$ compatible.
In that case $v$ gains information about $u$'s behaviour in the following sense:
if $\beta$ is observed by $v$ then $v$ can deduce that $u$ did not perform $\alpha$.
Nondeducibility $u \not \flows_I v$ therefore says that $v$ is unable to make any nontrivial deductions about $u$ behaviour. 
Applying this idea pointwise, we get the following definition of security: 

\begin{definition}
  A system is \emph{$\mname{NDI}_{pw}$-secure} for a policy $\pol$ over domains $D$ when 
  for all $u,v \in D$: if $u \not\pol v$ then $u \not\flows_I v$.
\end{definition}

In the case of the policy $H \not \pol L$ with just two domains,  
$\mname{NDI}_{pw}$ is the notion \emph{Nondeducibility on Inputs} as it is 
usually defined. We denote it as just $\mname{NDI}$ in this case. 
The definition above generalizes this notion in one possible
way to the multi-domain case. We discuss several others below.

\subsection{Generalized Noninterference}

The nondeducibility relation $H \not \flows L$ states that $L$ considers all sequences of actions of $H$ possible, 
but allows that $L$ has some information about how these actions, if any, are interleaved with 
$L$'s 
actions.
See Figure \ref{fig:NDI-weaker-than-GN} for a system that is $\mname{NDI}$-secure but can be argued to leak information about how $H$'s actions are
interleaved into a run.
The observations made by $L$ are written near the state names.
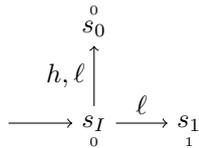
\begin{wrapfigure}{l}{0.5\textwidth}
  \vspace{-0.5cm}
    \centering
    \begin{tikzpicture}[->, auto, scale=1.25]

      \node at (-1, 0) (nirvana) {};
      \node at (0,0) (init) {$s_I$};
      \node at (1,0) (s1) {$s_1$};
      \node at (0,1) (s0) {$s_0$};

      \node at (0, -0.2) (initObs) {\tiny $0$};
      \node at (1, -0.2) (s1Obs) {\tiny $1$};
      \node at (0, 1+0.2) (s0Obs) {\tiny $0$};

      \path (nirvana) edge node {} (init)
      (init) edge node {$h,\ell$} (s0)
      (init) edge node {$\ell$} (s1)
      ;

    \end{tikzpicture}
  \caption{System demonstrating a weakness of  NDI.}
  \label{fig:NDI-weaker-than-GN}
  \vspace{-0.5cm}
\end{wrapfigure}
This system is $\mname{NDI}$-secure because every $L$ view is compatible with every possible sequence of $h$ actions performed by $H$.
However, note that if the view $0 \ell 1$ is observed by $L$ then it obtains the knowledge that it was the very first domain to act.
The stronger notion of \emph{Generalized Noninterference} introduced by McCullough \cite{McCullough88} 
says that $L$ does not have even this weaker form of knowledge. 
The original formulation is 
for a two-domain policy and is 
based on a model that uses sets of event sequences.
We present a straightforward multi-domain 
variant (that is similar to Mantel's combination BSI+BSD \cite{Mantel2000Possibilistic}). 


\begin{definition}
\label{def:GN}
A system $\struct{M}$ is \emph{$\mname{GN}_{pw}$-secure} for $\pol$ if for all
$u,v\in D$ with $u \not \pol v$, the properties
\begin{itemize}
\item $\mname{GN}^+(u,v)$: for all $r \in \Runs(\struct{M})$, 
for all $\alpha_0, \alpha_1 \in A^*$ 
with $\act(r)=\alpha_0\alpha_1$, and all $a \in A_u$ with 
there is $r' \in \Runs(\struct{M})$ with $\act(r')=\alpha_0 a \alpha_1$ and 
$\view_v(r')=\view_v(r)$, and

\item $\mname{GN}^-(u,v)$:
 for all $r \in \Runs(\struct{M})$, 
 all $\alpha_0,\alpha_1\in A^*$ and all $a\in A_u$, 
 with $\act(r)=\alpha_0 a \alpha_1$, 
 there is $r' \in \Runs(\struct{M})$ with  $\act(r')=\alpha_0\alpha_1$ and 
 $\view_v(r')=\view_v(r)$.
\end{itemize}
are satisfied.
\end{definition}

Intuitively, this definition says that actions of domains $u$ with $u \not \pol v$ can be arbitrarily inserted and deleted, 
without changing the set of possible views that $v$ can obtain. In the case of the 
two-domain policy $H \not \pol L$, the notion $\mname{GN}_{pw}$ is equivalent to 
the definition 
of Generalized Noninterference given in \cite{RonChenyi2007}, 
and we denote this case by $\mname{GN}$.
Note that the system in Figure \ref{fig:NDI-weaker-than-GN} is not $\mname{GN}$-secure,
because performing $h$ as first action in a run makes it impossible for $L$ to observe the view $0 \ell 1$.

In \emph{deterministic} systems, for the two-domain policy $H \not \pol L$, the notions $\mname{NDI}_{pw}$ and $\mname{GN}_{pw}$, 
and Goguen and Meseguer's orginal notion of Noninterference are  known to be equivalent. Thus, both  $\mname{NDI}_{pw}$ and $\mname{GN}_{pw}$
are reasonable candidates for the generalization of Noninterference to nondeterministic systems.

\section{Motivation for Abstraction}
\label{sec:abstraction}

The definitions $\mname{NDI}_{pw}$ and $\mname{GN}_{pw}$  have generalized the corresponding definitions $\mname{NDI}$ and $\mname{GN}$  usually given 
for the two-domain policy $H \not \pol L$ in a \emph{pointwise} fashion, stating in different ways that 
there should not be a flow of information from domain $u$ to domain $v$ when $u \not \pol v$. 
We now present some examples that suggest that these pointwise definitions may be weaker than required in the case of 
policies with more than two domains. 

We first present an example which demonstrates that $\mname{NDI}_{pw}$-security
is flawed with respect to combined behaviour of multiple domains.
(Interestingly, this can already be shown in a deterministic system.) 

\begin{example}
  \vspace{-1cm}
  \label{ex:NDI-fruitfly}
    \begin{figure}[h]
  \centering
  \begin{minipage}{0.4\textwidth}
    \begin{tikzpicture}[->, auto]
      \node at (-1, 0) (invis) {};
      \node at (0, 0) (sI) {$s_I$};
      \node at (2, 0) (s0) {$s_0$};
      \node at (4, 0) (s1) {$s_1$};

      \node at (-1, -0.25) () {\tiny $L$:};
      \node at (-1, -0.5) () {\tiny $H_1$:};
      \node at (-1, -0.75) () {\tiny $H_2$:};
      
      \node at (0, -0.25) () {\tiny 0};
      \node at (0, -0.5) () {\tiny $\bot$};
      \node at (0, -0.75) () {\tiny $\bot$};

      \node at (2, -0.25) () {\tiny 0};
      \node at (2, -0.5) () {\tiny $\bot$};
      \node at (2, -0.75) () {\tiny $\bot$};

      \node at (4, -0.25) () {\tiny 1};
      \node at (4, -0.5) () {\tiny $\bot$};
      \node at (4, -0.75) () {\tiny $\bot$};

      \path (invis) edge node {} (sI);
      \path (sI) edge node {${\tiny h_1,h_2}$} (s0);
      \path (s0) edge node {${\tiny \ell}$} (s1);
      \path (sI) edge [loop above] node {${\tiny \ell}$} (sI);
      \path (s0) edge [loop above] node {${\tiny h_1,h_2}$} (s0);
    \end{tikzpicture}
  \end{minipage}
  \begin{minipage}{0.4\textwidth}
    \centering
    \begin{tikzpicture}[|->, auto]
      \node at (0,0) (L) {$L$};
      \node at (-1.5,-0.6) (H1) {$H_1$};
      \node at (-1.5,0.6) (H2) {$H_2$};
      \path (L) edge node {} (H1);
      \path (L) edge node {} (H2);
    \end{tikzpicture}
  \end{minipage}
  \caption{A system and policy showing a weakness of $\mname{NDI}_{pw}$.}
  \label{fig:NDIn-is-bad}
  \vspace{-0.5cm}
  \end{figure}
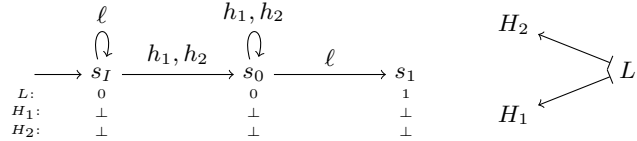
  Consider the system and policy depicted in Figure \ref{fig:NDIn-is-bad}.
  The domain assignment is given by $\dom(l)=L$, $\dom(h_1)=H_1$ and $\dom(h_2)=H_2$.
  We have $H_1 \not\pol L$ and $H_2 \not\pol L$ and show that $H_1 \not\flows_I L$ and $H_2 \not\flows_I L$ hold.
  Let $\alpha = {h_1}^a$ for $a \geq 0$ and $\beta$ be an $L$ view, then $\beta$ must have the form $0 (\ell 0)^b (\ell 1)^c$, where $b,c \geq 0$.
  Consider the run $r = s_I (\ell s_I)^b h_2 s_0 (h_1 s_0)^a (\ell s_1)^c$, which satisfies $\view_L(r) = 0 (\ell 0)^b (\ell 1)^c = \beta$ and $\act_{H_1}(r) = {h_1}^a$,
  and thus $\alpha$ and $\beta$ are $L$ compatible.
  Due to symmetry, we also get $H_2 \not\flows_I L$ with the same argument.
  The system therefore is $\mname{NDI}_{pw}$-secure for the policy.
  However, if $L$ observes the view $0 \ell 1$ then $H_1$ or $H_2$ must have performed $h_1$ or $h_2$, respectively.
  \qed
\end{example}

In the example, domain $L$ cannot know which of $H_1$ or $H_2$ was active upon observing the view $0 \ell 1$,
but $L$ can tell that at least one of them was active nonetheless.
It can be argued that this is a flow of information that is not permitted by the depicted policy.
The example would turn formally insecure if we changed the policy to $H \not\pol L$ and set $\dom(h_1) = \dom(h_2) = H$.
The problem arises as soon as more than one domain must be noninterfering with $L$.

One way to address this weakness of $\mname{NDI}_{pw}$ is to revise the definition so that it deals 
with what a domain can learn about the actions of a set of domains collectively, rather than about these domains individually. 
We may extend the relation $\flows_I$ to sets of domains as follows:
for $X \subseteq D$, $X \neq \emptyset$ and $u \in D$, write $X \flows_I u$ if there are $\alpha \in {A_X}^*$ and $\beta \in \Views_u(\struct{M})$ such that no $r \in \Runs(\struct{M})$ satisfies both $\act_X(r)=\alpha$ and $\view_u(r)=\beta$.
Applying this with the set $X={}^{\not\pol}u$ consisting of all domains that may not interfere with  domain $u$, we obtain the following setwise 
version of Nondeducibility on Inputs:  

\begin{definition}
\label{def:NDI}
A system is \emph{$\mname{NDI}_{sw}$-secure} for $\pol$ if for all $u \in D$, we have that ${}^{\not\pol}u \not\flows_I u$.
\end{definition}

This gives a notion that is intermediate between the pointwise versions of Generalized Noninterference and Nondeducibility on Inputs: 

\begin{proposition}
  \label{thm:GN-in-NDI}
  \label{thm:NDI-strictly-contains-NDIN}
  $\mname{GN}_{pw}$ is strictly contained in $\mname{NDI}_{sw}$,
  and $\mname{NDI}_{sw}$ is strictly contained in $\mname{NDI}_{pw}$.
  A system is $\mname{NDI}_{sw}$-secure for $H \not\pol L$ if and only if it is $\mname{NDI}_{pw}$-secure for $H \not\pol L$.
\end{proposition}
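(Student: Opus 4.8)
The plan is to prove the two inclusions together with their strictness, and then the two-domain collapse, reusing the example systems already introduced. The inclusion chain is $\mname{GN}_{pw} \subseteq \mname{NDI}_{sw} \subseteq \mname{NDI}_{pw}$, so I would prove the two containments as implications between the definitions, exhibit a separating system for each, and finally show that on $H \not\pol L$ the upper and lower ends of the chain meet. I expect the containment $\mname{GN}_{pw} \subseteq \mname{NDI}_{sw}$ to be the only step with real content; the remaining steps are essentially bookkeeping with the $\flows_I$ relation plus appeals to Figures~\ref{fig:NDI-weaker-than-GN} and~\ref{fig:NDIn-is-bad}.

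For $\mname{GN}_{pw} \subseteq \mname{NDI}_{sw}$, fix a $\mname{GN}_{pw}$-secure $\struct{M}$ and a domain $u$, and write $X = {}^{\not\pol}u$. I must show $X \not\flows_I u$, i.e. that every $\alpha \in {A_X}^*$ is $u$-compatible with every $u$-view $\beta$. Starting from any run $r_0$ with $\view_u(r_0)=\beta$, the idea is to transform $r_0$ into a witnessing run by a \emph{delete-then-insert} procedure that preserves $\view_u$. Every domain $p \in X$ satisfies $p \not\pol u$, so $\mname{GN}^-(p,u)$ and $\mname{GN}^+(p,u)$ apply to all of its actions. First I delete the $X$-actions of $r_0$ one at a time via $\mname{GN}^-$ (each application strips one $A_X$-action from the action sequence and leaves $\view_u$ unchanged; the count of $X$-actions strictly decreases, so this terminates), yielding a run $r_1$ with $\act_X(r_1)=\varepsilon$ and $\view_u(r_1)=\beta$. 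Then, writing $\alpha = a_1\cdots a_k$ with each $a_j \in A_{p_j}$, $p_j \in X$, I append the $a_j$ in order using $\mname{GN}^+$ (taking the suffix $\alpha_1=\varepsilon$ at each step), obtaining a run $r_2$ with $\view_u(r_2)=\beta$ and, since $r_1$ contained no $X$-actions, $\act_X(r_2)=a_1\cdots a_k = \alpha$. Thus $\alpha$ and $\beta$ are $u$-compatible, so $X \not\flows_I u$, and $\struct{M}$ is $\mname{NDI}_{sw}$-secure. The main obstacle here is purely one of care: ensuring the insertions leave $\act_X$ exactly equal to $\alpha$ (which is why I first delete all $X$-actions) and that every intermediate object is a genuine member of $\Runs(\struct{M})$, which $\mname{GN}^{\pm}$ guarantees.

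For $\mname{NDI}_{sw} \subseteq \mname{NDI}_{pw}$, suppose $u \not\pol v$; then $u \in {}^{\not\pol}v$ and hence $A_u \subseteq A_{{}^{\not\pol}v}$. Given any $\alpha \in {A_u}^*$ and any $v$-view $\beta$, note $\alpha \in {A_{{}^{\not\pol}v}}^*$, so $\mname{NDI}_{sw}$ (applied with the set ${}^{\not\pol}v$) supplies a run $r$ with $\act_{{}^{\not\pol}v}(r)=\alpha$ and $\view_v(r)=\beta$. Restricting to $A_u \subseteq A_{{}^{\not\pol}v}$ gives $\act_u(r)=\alpha|_{A_u}=\alpha$, witnessing $u$-compatibility; hence $u \not\flows_I v$ and $\struct{M}$ is $\mname{NDI}_{pw}$-secure.

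It remains to handle the two-domain collapse and strictness. For $H \not\pol L$ the only nonempty set of the form ${}^{\not\pol}u$ is ${}^{\not\pol}L = \{H\}$ (while ${}^{\not\pol}H=\emptyset$ imposes no constraint), and for a singleton the set relation $\{H\}\flows_I L$ is by definition exactly $H \flows_I L$; thus $\mname{NDI}_{sw}$ and $\mname{NDI}_{pw}$ both reduce to the single requirement $H \not\flows_I L$, proving the final equivalence. This equivalence also identifies $\mname{NDI}_{sw}$ with $\mname{NDI}$-security on two domains, so the system of Figure~\ref{fig:NDI-weaker-than-GN}, which is $\mname{NDI}$-secure but not $\mname{GN}$-secure, separates $\mname{GN}_{pw}$ from $\mname{NDI}_{sw}$; and Example~\ref{ex:NDI-fruitfly} (Figure~\ref{fig:NDIn-is-bad}), which is $\mname{NDI}_{pw}$-secure yet admits $\{H_1,H_2\}\flows_I L$ (take $\alpha=\varepsilon$ and the view $0\ell1$, realizable only after some $h_1$ or $h_2$), separates $\mname{NDI}_{sw}$ from $\mname{NDI}_{pw}$. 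Both inclusions are therefore strict.
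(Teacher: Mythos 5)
Your proposal is correct and takes essentially the same route as the paper's own proof: the containment $\mname{GN}_{pw} \subseteq \mname{NDI}_{sw}$ via repeated applications of $\mname{GN}^-$ and $\mname{GN}^+$ (your delete-then-insert procedure just spells out the paper's claim that actions in $A_X$ can be removed or inserted at any position without changing the $u$ view), the singleton/restriction argument for $\mname{NDI}_{sw} \subseteq \mname{NDI}_{pw}$ and for the two-domain collapse, and the same separating systems (Figure~\ref{fig:NDI-weaker-than-GN} together with the two-domain equivalence, and Example~\ref{ex:NDI-fruitfly} with $\alpha=\varepsilon$ and the view $0\ell 1$). No gaps; your write-up is in fact somewhat more explicit than the paper's.
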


We remark that there is not a need to give a similar setwise definition of Generalized Noninterference, because
the definition of $\mname{GN}_{pw}$ already allows the set of actions in a run to be
modified,  without change to the view of $u$, by arbitrary insertions and 
deletions of actions with domains $v$ in ${}^{\not\pol}u$, through a sequence of 
applications of $GN^+(v,u)$ and $GN^-(v,u)$. 

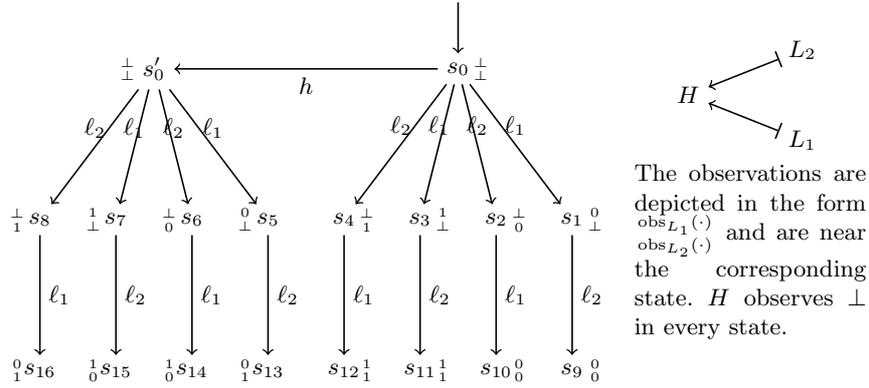
\begin{figure}[t]
  \vspace{-0.5cm}
  \begin{minipage}{0.5\textwidth}
    \centering
    \begin{tikzpicture}[->, semithick, auto, rotate=270]
      \node at (-1,0) (invis) {};
      \node at (0,0) (s0) {$s_0$};
      \path (invis) edge node {} (s0);
      \node at (0, 0.3) () {\tiny ${\bot \atop \bot}$};

      
      \node at (2,+1.5) (s1) {$s_1$};
      \node at (2,+0.5) (s2) {$s_2$};
      \node at (2,-0.5) (s3) {$s_3$};
      \node at (2,-1.5) (s4) {$s_4$};
      \path (s0)
      edge node[above] {$\ell_1$} (s1)
      edge node[above] {$\ell_2$} (s2)
      edge node[above] {$\ell_1$} (s3)
      edge node[above] {$\ell_2$} (s4);
      \node at (2, +1.5+0.3) () {\tiny ${0 \atop \bot}$};
      \node at (2, +0.5+0.3) () {\tiny ${\bot \atop 0}$};
      \node at (2, -0.5+0.3) () {\tiny ${1 \atop \bot}$};
      \node at (2, -1.5+0.3) () {\tiny ${\bot \atop 1}$};

      \node at (4,+1.5) (s9) {$s_9$};
      \node at (4,+0.5) (s10) {$s_{10}$};
      \node at (4,-0.5) (s11) {$s_{11}$};
      \node at (4,-1.5) (s12) {$s_{12}$};
      \path (s1) edge node {$\ell_2$} (s9);
      \path (s2) edge node {$\ell_1$} (s10);
      \path (s3) edge node {$\ell_2$} (s11);
      \path (s4) edge node {$\ell_1$} (s12);
      \node at (4, +1.5+0.3) () {\tiny ${0 \atop 0}$};
      \node at (4, +0.5+0.3) () {\tiny ${0 \atop 0}$};
      \node at (4, -0.5+0.3) () {\tiny ${1 \atop 1}$};
      \node at (4, -1.5+0.3) () {\tiny ${1 \atop 1}$};


      \node at(0,-4) (s0_) {$s_0'$};
      \path (s0) edge node {$h$} (s0_);
      \node at (0, -4-0.35) () {\tiny ${\bot \atop \bot}$};

      \node at (2,-2.5) (s5) {$s_5$};
      \node at (2,-3.5) (s6) {$s_6$};
      \node at (2,-4.5) (s7) {$s_7$};
      \node at (2,-5.5) (s8) {$s_8$};
      \path (s0_)
      edge node[above] {$\ell_1$} (s5)
      edge node[above] {$\ell_2$} (s6)
      edge node[above] {$\ell_1$} (s7)
      edge node[above] {$\ell_2$} (s8);
      \node at (2, -2.5-0.3) () {\tiny ${0 \atop \bot}$};
      \node at (2, -3.5-0.3) () {\tiny ${\bot \atop 0}$};
      \node at (2, -4.5-0.3) () {\tiny ${1 \atop \bot}$};
      \node at (2, -5.5-0.3) () {\tiny ${\bot \atop 1}$};

      \node at (4,-2.5) (s13) {$s_{13}$};
      \node at (4,-3.5) (s14) {$s_{14}$};
      \node at (4,-4.5) (s15) {$s_{15}$};
      \node at (4,-5.5) (s16) {$s_{16}$};
      \path (s5) edge node {$\ell_2$} (s13);
      \path (s6) edge node {$\ell_1$} (s14);
      \path (s7) edge node {$\ell_2$} (s15);
      \path (s8) edge node {$\ell_1$} (s16);
      \node at (4, -2.5-0.3) () {\tiny ${0 \atop 1}$};
      \node at (4, -3.5-0.3) () {\tiny ${1 \atop 0}$};
      \node at (4, -4.5-0.3) () {\tiny ${1 \atop 0}$};
      \node at (4, -5.5-0.3) () {\tiny ${0 \atop 1}$};

    \end{tikzpicture}
  \end{minipage}
  \hspace{2cm}
  \begin{minipage}{0.25\textwidth}
    \centering
    \begin{tikzpicture}[|->, semithick, auto]
      \node at (0,0) (H) {$H$};
      \node at (+1.5,-0.6) (L1) {$L_1$};
      \node at (+1.5,+0.6) (L2) {$L_2$};
      \path (L1) edge node {} (H);
      \path (L2) edge node {} (H);
    \end{tikzpicture}

    \begin{tabular}{p{\textwidth}}
    The observations are depicted in the form ${\obs_{L_1}(\cdot) \atop \obs_{L_2}(\cdot)}$ and are near the corresponding state.
    $H$ observes $\bot$ in every state.
    \end{tabular}
  \end{minipage}
  \caption{System and policy illustrating a collusion attack.}
  \label{fig:GN-fruitfly}
    \vspace{-0.5cm}
\end{figure}

Despite $\mname{NDI}_{sw}$ and $\mname{GN}_{pw}$ being suitable for the multi-domain case and
the latter notion being quite strict, one can argue that neither of them can handle 
collusion,
where multiple domains join forces in order to attack the system as a team.
The system depicted in Figure \ref{fig:GN-fruitfly}, 
a variant of Example 3 and Figure 4 from \cite{RonZhangEngelhardt2012},
 can be shown to 
satisfy $\mname{GN}_{pw}$-security, hence is secure in the strongest sense introduced so far. 
However, if $L_1$ and $L_2$ collude, they can infer from the parity of their observations
that $H$ performed $h$ at the beginning of the run.
This motivates the introduction of 
stronger
\emph{coalition-aware} notions of security.


\section{Reduction-based Notions of Noninterference for Multi-domain Policies}
\label{sec:reduction}

The examples of the previous section indicate that in nondeterministic settings, it is necessary
to deal with groups of agents both on the side of the attackers and the side of the domains being 
attacked. Policy cuts provide types of groupings and enable a reduction to a basic 
notion of security for two-domain policies. The question that then remains is what types
of cut we should use, and which basic notion of security. 
In this section, we define three types of cut and the resulting notions of security when 
$\mname{GN}$ and $\mname{NDI}$ are taken to be the basic notion of security. 

Let $D$ be a set of domains.
For $u \in D$ we define the following two special cuts $\absUp{u}{}$ and $\absDown{u}{}$.
\begin{center}
  \begin{tabular}{ccc}
    $\absUp{u}{} := (u^{\pol}, ~D \setminus u^{\pol})$
    &
    { and }
    &
    $\absDown{u}{} := (D \setminus {}^{\pol} u, ~ {}^{\pol} u)$
  \end{tabular}
\end{center}
The term $\absUp{u}{}$ stands for the cut that forms a High-up coalition starting at domain $u$,
while $\absDown{u}{}$ stands for the cut that forms a Low-down coalition with respect to $u$.
Figure \ref{fig:cut-abstractions} depicts an example of each on the same policy.

\begin{figure}[H]
  \vspace{-0.5cm}
  \begin{minipage}{0.5\textwidth}
    \centering
    \begin{tikzpicture}[|->, semithick, auto, scale=1.25]

      \node at (0,0) (A) {A};
      \node at (1,0) (B) {B};
      \node at (0.5,1) (C) {C};
      \node at (1,2) (D) {D};
      \node at (1.5,1) (E) {E};

      \draw [dotted] plot [smooth cycle] coordinates {
        ($(C)+(-0.3,-0.2)$)
        ($(C)+(+0.2,-0.2)$)
        ($(D)+(+0.3,+0.2)$)
        ($(D)+(-0.2,+0.2)$)
      };
      \draw[dotted] plot [smooth cycle] coordinates {
        ($(A) + (-0.2, -0.2)$)
        ($(A) + (-0.2, +0.2)$)
        ($(B) + (-0.125, +0.3)$)
        ($(E) + (-0.2, +0.2)$)
        ($(E) + (+0.3, +0.2)$)
        ($(B) + (+0.2, -0.2)$)
      };
      \node at (0.25, 1.5) (H) {\tiny ${\cal H}$};
      \node at (1.7, 0.25) (L) {\tiny ${\cal L}$};
   
      \path (A) edge node {} (C)
      (B) edge node {} (C)
      (C) edge node {} (D);

    \end{tikzpicture}
  \end{minipage}
  \begin{minipage}{0.5\textwidth}
    \centering
    \begin{tikzpicture}[|->, semithick, auto, scale=1.25]
      \node at (0,0) (A) {A};
      \node at (1,0) (B) {B};
      \node at (0.5,1) (C) {C};
      \node at (1,2) (D) {D};
      \node at (1.5,1) (E) {E};

      \draw [dotted] plot [smooth cycle] coordinates {
        ($(A)+(-0.2,-0.2)$)
        ($(B)+(+0.2,-0.2)$)
        ($(C)+(0,0.35)$)
      };
      \draw [dotted] plot [smooth cycle] coordinates {
        ($(D)+(-0.3,+0.2)$)
        ($(D)+(+0.2,+0.2)$)
        ($(E)+(+0.3,-0.2)$)
        ($(E)+(-0.2,-0.2)$)
      };
      \node at (1.6, 1.75) (H) {\tiny ${\cal H}$};
      \node at (-0.25, 0.5) (L) {\tiny ${\cal L}$};
      
      \path (A) edge node {} (C)
      (B) edge node {} (C)
      (C) edge node {} (D);
    \end{tikzpicture}
      
  \end{minipage}
  \caption{Cuts $\absUp{C}{}$ and $\absDown{C}{}$ visualized.}
  \label{fig:cut-abstractions}
  \vspace{-0.5cm}
\end{figure}

Abstractions of type $\absUp{\cdot}{}$ are suggested by Ryan
(as discussed in the introduction), 
while the type $\absDown{\cdot}{}$ is what we referred to as its dual.
As already noted in the introduction, there are additional `cut' abstractions that are 
neither High-up nor Low-down. 
In a systematic way, we can now obtain new notions of security 
based on 
cuts as follows.

\begin{definition}
  \label{def:abstraction-definitions}
  Let $\struct{M}$ be a system with domain set $D$ and $\pol$ be a policy over $D$.
  For $X \in \set{\mname{GN}, \mname{NDI}}$, we say $\struct{M}$ is
  \begin{itemize}
  \item 
  Cut $X$-secure ($\mname{C{-}X}$-secure) for $\pol$,
    if $\struct{M}^\absn$ is $X$-secure for $\pol^\absn$ for all  cuts  $\absn$ of $D$,
    
  \item High-up $X$-secure ($\mname{H{-}X}$-secure) for $\pol$,
    if $\struct{M}^{\absUp{u}{H}}$ is $X$-secure for $\pol^{\absUp{u}{H}}$ for all $u \in D$,
    
  \item Low-down $X$-secure ($\mname{L{-}X}$-secure) for $\pol$,
    if $\struct{M}^{\absDown{u}{L}}$ is $X$-secure for $\pol^{\absDown{u}{L}}$ for all $u \in D$.
  \end{itemize}
\end{definition}

There is a straightforward relationship between these notions of $\mname{GN}$ and their $\mname{NDI}$-counterparts.
\begin{proposition}
  \label{thm:straightforward-relationships}
  For all $X \in \set{C, H, L}$:
  the notion $\mname{X{-}GN}$ is strictly contained in $\mname{X{-}NDI}$.
\end{proposition}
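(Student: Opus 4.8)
The plan is to reduce both the containment and its strictness to the two-domain relationship between the base notions that is already recorded in Proposition~\ref{thm:GN-in-NDI}. The crucial observation is that for every cut $\absn = (\mathcal{H}, \mathcal{L})$ of any domain set $D$, the abstracted policy $\pol^\absn$ lives over the two-element set $\{\mathcal{H}, \mathcal{L}\}$, so that ``$\struct{M}^\absn$ is $\mname{GN}$- (resp.\ $\mname{NDI}$-) secure for $\pol^\absn$'' is merely an instance of a two-domain base notion. Since, for a fixed $X \in \{C, H, L\}$, the definitions of $\mname{X{-}GN}$ and $\mname{X{-}NDI}$ quantify over exactly the same family of cuts, it suffices to establish the implication cut-by-cut.

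For the containment I would argue as follows. Assume $\struct{M}$ is $\mname{X{-}GN}$-secure for $\pol$ and let $\absn$ be any cut in the family associated with $X$; then $\struct{M}^\absn$ is $\mname{GN}$-secure for the two-domain policy $\pol^\absn$. Proposition~\ref{thm:GN-in-NDI} gives $\mname{GN}_{pw} \subseteq \mname{NDI}_{sw} \subseteq \mname{NDI}_{pw}$ at the level of individual policies, and specializing this chain to the two domains of $\pol^\absn$ shows that $\struct{M}^\absn$ is then also $\mname{NDI}$-secure for $\pol^\absn$. Letting $\absn$ range over the whole family yields $\mname{X{-}NDI}$-security of $\struct{M}$. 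The argument is uniform in $X$, as it uses only the base implication and not the particular shape of the family; it also covers the case where $\pol^\absn$ is the ``discrete'' two-domain policy, since the chain above holds for arbitrary policies.

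For strictness I would reuse the two-domain system $\struct{M}$ of Figure~\ref{fig:NDI-weaker-than-GN}, which is $\mname{NDI}$-secure but not $\mname{GN}$-secure for $H \not\pol L$. Over the domain set $\{H, L\}$ the single cut $\absn = (\{H\}, \{L\})$ is simultaneously a plain cut, the High-up cut $\absUp{H}{}$, and the Low-down cut $\absDown{L}{}$; under it both $\struct{M}^\absn$ and $\pol^\absn$ are isomorphic to the original system and to $H \not\pol L$. Hence $\struct{M}^\absn$ is not $\mname{GN}$-secure for $\pol^\absn$, so this system fails $\mname{X{-}GN}$-security for every $X \in \{C, H, L\}$, while by the base result it is $\mname{NDI}$-secure under $\absn$.

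The hard part will be confirming that the witness is nonetheless $\mname{X{-}NDI}$-secure, i.e.\ that no cut in any of the three families breaks $\mname{NDI}$. Apart from $\absn$, the remaining cuts of $\{H, L\}$ (in particular the other High-up and Low-down cuts $\absUp{L}{}$ and $\absDown{H}{}$) each have one empty component; for such a degenerate cut the empty abstract domain has no actions and a constant view, so both $\mname{GN}$ and $\mname{NDI}$ hold vacuously. Together with $\mname{NDI}$-security under $\absn$, this makes the one system a single witness separating all three pairs. Verifying the triple identification of $\absn$ and dispatching the degenerate cuts is the only nonroutine bookkeeping; everything else is inherited directly from Proposition~\ref{thm:GN-in-NDI}.
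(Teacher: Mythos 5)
Your proposal is correct and follows essentially the same route as the paper, which justifies the containment by combining Definition~\ref{def:abstraction-definitions} with the two-domain implication from Proposition~\ref{thm:GN-in-NDI}, and obtains strictness from the system of Figure~\ref{fig:NDI-weaker-than-GN}, identifying singleton coalitions with their members. Your additional bookkeeping (the discrete two-domain policy case and the vacuous security of degenerate cuts with an empty component) merely makes explicit what the paper leaves implicit.
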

This follows directly from Definition \ref{def:abstraction-definitions},
the fact that $\mname{GN}$ implies $\mname{NDI}$ due to Proposition \ref{thm:GN-in-NDI}, and
that the system depicted in Figure \ref{fig:NDI-weaker-than-GN} provides separation for each case.
Also, one would expect that reasonable extensions of $\mname{GN}$ and $\mname{NDI}$ agree if applied to $H \not\pol L$,
and this is exactly what we find, since we can identify singleton coalitions with their only member.

\section{Main Result}
\label{sec:main}
We now state the main result of the paper. 
We have a set of definitions of security that address the need to consider groupings
of attackers and defenders in multi-domain policies, based on two basic notions of 
security $\mname{NDI}$ and $\mname{GN}$ for the two-domain case.  
We are now interested in understanding the relationships between these definitions. 
Additionally, we are interested in understanding which definitions satisfy the desirable 
property of monotonicity. 

\begin{theorem}
  \label{thm:main}
  The notions of 
  $\mname{GN}_{pw}$, 
  $\mname{L{-}GN}$,
  $\mname{H{-}GN}$,
  $\mname{C{-}GN}$,
  $\mname{NDI}_{pw}$,
  $\mname{NDI}_{sw}$,
  $\mname{L{-}NDI}$,
  $\mname{H{-}NDI}$ and
  $\mname{C{-}NDI}$-security
  are ordered by implication as depicted in Figure \ref{fig:main-summary}.
  The containment relations are strict;
  arrows due to reflexivity or transitivity are omitted.
  The name of a notion is underlined if and only if it is monotonic.
\end{theorem}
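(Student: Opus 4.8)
The plan is to split Theorem~\ref{thm:main} into three tasks: establishing the implication arrows of Figure~\ref{fig:main-summary}, proving each containment strict, and settling the monotonicity annotations. Several relations are already available. Since every High-up cut $\absUp{u}{}$ and every Low-down cut $\absDown{u}{}$ is a cut of $D$, Definition~\ref{def:abstraction-definitions} immediately yields $\mname{C{-}GN}\Rightarrow\mname{H{-}GN}$, $\mname{C{-}GN}\Rightarrow\mname{L{-}GN}$ and the two $\mname{NDI}$ analogues. The vertical arrows $\mname{X{-}GN}\Rightarrow\mname{X{-}NDI}$, together with their strictness, are Proposition~\ref{thm:straightforward-relationships}, and the chain $\mname{GN}_{pw}\subsetneq\mname{NDI}_{sw}\subsetneq\mname{NDI}_{pw}$ is Proposition~\ref{thm:GN-in-NDI}. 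The remaining ``downward'' implications --- $\mname{L{-}NDI}\Rightarrow\mname{NDI}_{sw}$, $\mname{H{-}NDI}\Rightarrow\mname{NDI}_{pw}$ and $\mname{L{-}GN}\Rightarrow\mname{GN}_{pw}$ --- all rest on a single auxiliary fact that I regard as the technical core and describe next.

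The key lemma I would isolate is a \emph{projection lemma}: whenever ${\cal L}\subseteq{\cal L}'$ are sets of domains, the coalition view $\view_{\cal L}$ is a function of the coalition view $\view_{{\cal L}'}$, i.e.\ there is a map $\pi$ with $\view_{\cal L}=\pi\circ\view_{{\cal L}'}$ on all runs. The intended $\pi$ projects each recorded observation tuple from ${\cal L}'$ down to ${\cal L}$, deletes the actions belonging to ${\cal L}'\setminus{\cal L}$, and re-absorbs any stuttering this creates. The delicate point is that this must commute with the absorptive concatenation $\aconc$ --- a step that is new for ${\cal L}'$ but stuttering for ${\cal L}$ produces a repeated observation that $\pi$ must merge --- so the lemma has to be proved by induction over the structure of a run. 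With $\pi$ in hand, each downward implication becomes a projection: for $\mname{L{-}GN}\Rightarrow\mname{GN}_{pw}$ one notes that in the Low-down cut $\absDown{v}{}$ the low side ${}^{\pol}v$ contains $v$ while $A_v$ lies in the high side, so a witness preserving the ${}^{\pol}v$-view projects to one preserving $\view_v$; the arguments for $\mname{L{-}NDI}\Rightarrow\mname{NDI}_{sw}$ (using ${}^{\not\pol}u$ as the high side of $\absDown{u}{}$) and $\mname{H{-}NDI}\Rightarrow\mname{NDI}_{pw}$ have the same shape, and the failure of High-up to reach the \emph{setwise} notion is explained structurally: $\absUp{\cdot}{}$ groups the \emph{destinations} rather than the sources, so it can only certify one source domain at a time.

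The central positive result, and the step I expect to be the main obstacle, is the completeness of High-up coalitions for Generalized Noninterference, $\mname{H{-}GN}\Rightarrow\mname{C{-}GN}$; together with the trivial converse this gives $\mname{H{-}GN}\equiv\mname{C{-}GN}$. The crucial observation is that the conditions $\mname{GN}^+$ and $\mname{GN}^-$ each require only a \emph{single} insertion or deletion of \emph{one} high action. So fix an arbitrary cut $\absn=({\cal H},{\cal L})$ and an action $a\in A_{\cal H}$ with $\dom(a)=h\in{\cal H}$. Because $\absn$ is a cut, $h\not\pol v$ for every $v\in{\cal L}$, whence ${\cal L}\subseteq D\setminus h^{\pol}$, i.e.\ ${\cal L}$ is contained in the low side of the High-up cut $\absUp{h}{}$, while $a$ is a high action of $\absUp{h}{}$. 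Thus the single GN-obligation for $\absn$ is discharged by the corresponding obligation for $\absUp{h}{}$, which $\mname{H{-}GN}$-security supplies at the finer granularity $D\setminus h^{\pol}$, and the projection lemma pushes the resulting witness down to the required ${\cal L}$-view. Ranging over the individual actions gives $\mname{C{-}GN}$-security. I expect the care in this step to lie entirely in matching the abstract view definitions across the two cuts through $\pi$.

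It then remains to verify strictness and monotonicity. For strictness I would reuse the paper's gadgets --- Figure~\ref{fig:NDI-weaker-than-GN} to separate each $\mname{GN}$ notion from its $\mname{NDI}$ partner, Example~\ref{ex:NDI-fruitfly} to separate $\mname{NDI}_{sw}$ from $\mname{NDI}_{pw}$, and the collusion system of Figure~\ref{fig:GN-fruitfly} to separate $\mname{GN}_{pw}$ from the coalition notions --- and construct a handful of further small nondeterministic systems, obtained by tuning the disjunction and parity gadgets so that exactly one notion fails; these witness the independence of $\mname{H{-}NDI}$ and $\mname{L{-}NDI}$ and the strict strengthening of $\mname{C{-}NDI}$ over each, realized by a genuine cut that is neither High-up nor Low-down. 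For monotonicity, enlarging $\pol$ to $\pol'\supseteq\pol$ only shrinks the prohibited pairs and the sets ${}^{\not\pol}u$, so $\mname{GN}_{pw},\mname{NDI}_{pw},\mname{NDI}_{sw}$ are monotonic; the same phenomenon one level up --- every cut of $\pol'$ is a cut of $\pol$, over which the abstract two-domain constraint is unchanged --- gives monotonicity of $\mname{C{-}GN}$ and $\mname{C{-}NDI}$, and hence of $\mname{H{-}GN}$ via the equivalence above. Finally I would refute monotonicity of $\mname{L{-}GN}$, $\mname{L{-}NDI}$ and $\mname{H{-}NDI}$ with explicit pairs $\pol\subseteq\pol'$: adding an edge enlarges the relevant coalition (the Low side of $\absDown{u}{}$, or the High side of $\absUp{u}{}$), lumping together domains whose combined behaviour leaks, even though each finer instance for $\pol$ was secure. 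Assembling these counterexamples is routine but lengthy; the genuine content lies in the completeness argument and its projection lemma.
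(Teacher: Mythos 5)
Your proposal is correct and takes essentially the same route as the paper's own proof: the same projection lemma, the same completeness argument for $\mname{H{-}GN}=\mname{C{-}GN}$ (anchor the High-up cut $\absUp{\dom(a)}{}$ at the domain of the inserted action and embed the observing coalition in its Low side), the same monotonicity arguments, and the same separating gadgets (indeed no new systems are needed: the systems of Figures~\ref{fig:NDIn-is-bad} and~\ref{fig:GN-fruitfly} with suitably modified policies already give the independence of $\mname{H{-}NDI}$ and $\mname{L{-}NDI}$, and strictness of $\mname{C{-}NDI}$ below each then follows by pure logic from $\mname{C{-}NDI}\Rightarrow\mname{H{-}NDI},\mname{L{-}NDI}$ plus their incomparability). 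One small point to tighten: for a cut in which no Low domain flows to any High domain, the abstracted policy is the separation policy, so $\mname{C{-}GN}$ also imposes insertion/deletion obligations with the actor on the Low side and the observer on the High side; your anchoring argument covers these verbatim (the observing coalition still embeds in $D\setminus\dom(a)^{\pol}$), but it should be stated for an arbitrary obligation rather than only for $a\in A_{\cal H}$ --- the paper avoids this case split by phrasing the translation in terms of an arbitrary vulnerability $(F,\alpha_0,a,\alpha_1,\beta,\pol^{\absn})$.
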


In particular, we find 
for the GN-variants that 
Ryan's proposal to use reductions based on High-up coalitions 
is complete, in the sense that it yields the same notion of security as a quantification over
all cuts. This notion is moreover adequate in the sense of being monotonic. Somewhat surprisingly, 
the dual notion based on Low-down coalitions is strictly weaker, 
and also fails to be monotonic.  

The situation is different for the basic notion of $\mname{NDI}$.
In this case, we see that Ryan's proposal is not complete with respect to quantification over all cuts. 
Indeed, the resulting notion $\mname{H{-}NDI}$ does not even imply the more adequate setwise version of 
of $\mname{NDI}$, although it does imply the pointwise version. 
The Low-down version of NDI does imply the setwise version, and is independent of $\mname{H{-}NDI}$. 
However, neither $\mname{H{-}NDI}$ nor $\mname{L{-}NDI}$ is monotonic. 
This leaves the (monotonic) cut based variant as the most satisfactory notion in this case.

\begin{figure}[t]
  \vspace{-0.5cm}
  \centering
  \begin{minipage}[t]{\textwidth}
    \centering
    \begin{tikzpicture}[->, auto, node distance=1cm]
    \node at (0,0) (hgn) {$\underline{\mname{H{-}GN}}$};
    \node at (2,0) (cgn) {$\underline{\mname{C{-}GN}}$};
    \node at (1,1) (lgn) {$\mname{L{-}GN}$};
    \node at (1,2) (gn) {$\underline{\mname{GN}_{pw}}$};

    \node[right of=hgn] (eq1) {=};
    
    \path (cgn) edge node {} (lgn);
    \path (hgn) edge node {} (lgn);

    \path (lgn) edge node {} (gn);
    

    \node at (6+0,0) (cndi) {$\underline{\mname{C{-}NDI}}$};
    \node at (6+-1,1) (lndi) {$\mname{L{-}NDI}$};
    \node at (6+-1,2) (ndi) {$\underline{\mname{NDI}_{sw}}$};
    \node at (6+0,3) (ndii) {$\underline{\mname{NDI}_{pw}}$};
    \node at (6+1,1.5) (hndi) {$\mname{H{-}NDI}$};

    \path (cndi) edge node {} (lndi)
    (lndi) edge node {} (ndi)
    (ndi) edge node {} (ndii)
    (cndi) edge node {} (hndi)
    (hndi) edge node {} (ndii)

    (gn) edge node {} (ndi)
    (lgn) edge node {} (lndi)
    (cgn) edge node {} (cndi)
    ;

  \end{tikzpicture}
  \end{minipage}
  \vspace{-0.5cm}
  \caption{Implications between our notions of security.}
  \label{fig:main-summary}
\end{figure}
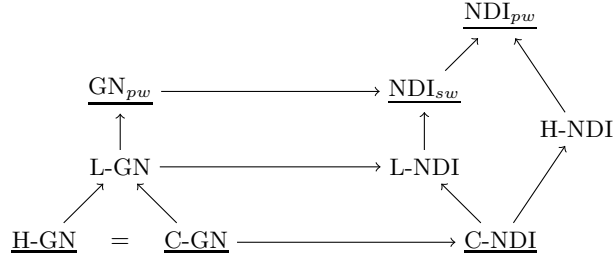

\iftoggle{long}{
  How one arrives at these results is explained in the next section, which gives proofs for all results in this paper.
}{
}

\section{Technical Details}
\label{sec:proofs}

\iftoggle{long}{
\subsection{Relationship between $\mname{GN}_{pw}$ and $\mname{NDI}_{sw}$}

\begin{proof}[of Proposition \ref{thm:GN-in-NDI}]
  \emph{$\mname{GN}_{pw}$ is strictly contained in $\mname{NDI}_{sw}$:}
  For containment, assume a system to be $\mname{GN}_{pw}$-secure for a policy $\pol$,
  let $u$ be a domain and $X$ be a set of domains in $\struct{M}$, where $x \not\pol u$ for all $x \in X$.
  Then the conditions $\mname{GN}^+(x,u)$ and $\mname{GN}^-(x,u)$ guarantee that at any position in the run, any action from $A_X$
  can be inserted or removed `without changing the $u$ view of the run'. More precisely,
  one always finds a run with the same $u$ view and the desired action inserted or removed at any position.
  Therefore, all $u$ views are compatible with all sequences from ${A_X}^*$ and the system is $\mname{NDI}_{pw}$-secure for $\pol$.
  Separation is due to the system depicted in Figure \ref{fig:NDI-weaker-than-GN}
  and the fact that $\mname{NDI}_{\mathit{sw}}$ and $\mname{NDI}$ are equivalent on $H \not\pol L$.

  \emph{$\mname{NDI}_{sw}$ is strictly contained in $\mname{NDI}_{pw}$:}
  Containment is clear, since $\mname{NDI}_{pw}$-security is $\mname{NDI}$-security restricted to the case of singletons
  and thus follows directly from $\mname{NDI}$.

  The system in Example \ref{ex:NDI-fruitfly} separates $\mname{NDI}_{sw}$ and $\mname{NDI}_{pw}$.
  For $\mname{NDI}_{pw}$, it suffices to test if $H_1 \not\flows_I L$ because of symmetry.
  We have $H_1 \not\flows_I L$ because by visiting $s_0$ an appropriate number of times we can add any number of $h_1$ actions to a run without changing its $L$ view.
  As already seen, this system is not $\mname{NDI}$-secure;
  if $L$ observes the view $0 \ell 1$ the action sequence $\varepsilon \in \set{h_1,h_2}^*$ wasn't performed by $\set{H_1,H_2}$.
  \qed
\end{proof}



  


\subsection{Relationships between Cut-based Notions of $\mname{GN}$}

}{ 
  
  This section provides an overview of how to show some relationships claimed by
  Theorem \ref{thm:main}.
  First, the concept of a \emph{vulnerability} of a system is introduced for $\mname{GN}_{pw}$ and $\mname{NDI}_{sw}$.
  A vulnerability of a system is a witness of a security violation for the respective notion of security.
  The proofs to compare the different variants of $\mname{GN}$ and $\mname{NDI}$ are done by contraposition
  and show how, for given cuts $\absn_0$ and $\absn_1$, a vulnerability of $\struct{M}^{\absn_0}$ can be translated into a vulnerability of $\struct{M}^{\absn_1}$.
  This shows that security for all cuts of type $\absn_1$ implies security for all cuts of type $\absn_0$.
  We use technical lemmas, one for $\mname{GN}_{pw}$ and one for $\mname{NDI}_{sw}$, that give sufficient conditions in order to
  facilitate this translation.
  Separation of two notions is done by giving a concrete example that exhibits the one, but not the other,
  security property.

  Due to space constraints, we only present proofs for some relationships between $\mname{GN}$ variants.
  Proofs for all results in this section can be found in \cite{longVersion}.

  \paragraph{$\mname{GN}$ vulnerabilities.}
}
A \iftoggle{long}{\emph{$\mname{GN}$ vulnerability}}{\emph{vulnerability}}  
of a system $\struct{M}$ is a tuple $(u,\alpha_0,a,\alpha_1,\beta,\pol)$, where
$\pol$ is a policy over the domain set of $\struct{M}$,
$u$ is a domain in $\struct{M}$,
$\alpha_0,\alpha_1 \in A^*$,
$a \in A$ with $\dom(a) \not\pol u$ and
$\beta \in \Views_u(\struct{M})$ such that there is a run $r$ that satisfies
$\view_u(r) = \beta$ and at least one of
\begin{itemize}
\item $\act(r) = \alpha_0 \alpha_1$ and no run $r'$ with $\act(r') = \alpha_0 a \alpha_1$ satisfies $\view_u(r') = \beta$,
\item $\act(r) = \alpha_0 a \alpha_1$ and no run $r'$ with $\act(r') = \alpha_0 \alpha_1$ satisfies $\view_u(r') = \beta$.
\end{itemize}

\iftoggle{long}{If the context is clear, we only say \emph{vulnerability}.}{}  
We evidently have a vulnerability of a system if and only if $\mname{GN}^+(\dom(a),u)$ or $\mname{GN}^-(\dom(a),u)$ does
not hold.
Without loss of generality we always assume a violation of $\mname{GN}^+(\dom(a),u)$ if there is a vulnerability,
since the case of a $\mname{GN}^-(\dom(a),u)$ violation is 
similar. 

\ifboolexpr{not togl {long}} { 
  \subsection{Translating $\mname{GN}$ Vulnerabilities}
}

\iftoggle{long}{
The proofs to compare the different cut-based variants of $\mname{GN}$ are done by contraposition
and show how, for given cuts $\absn_0$ and $\absn_1$, a vulnerability of $\struct{M}^{\absn_1}$ can be
translated into a vulnerability of $\struct{M}^{\absn_0}$.}{} 
The next definition formalizes the idea that 
the view of 
an attacking coalition,
e.g. the Low domain of a cut,
has at least as much information as the view of a sub-coalition. 
We will need this to argue that if a coalition possesses enough information to successfully launch an
attack on a system (i.e. it can violate $\mname{GN}^+$)
then, a fortiori, a bigger coalition possesses enough information for an attack.

\begin{definition}
  \label{def:projection}
  Let
  $\struct{M}$ be a system with action set $A$ and observation set $O$,
  let ${\cal D}_0$ and ${\cal D}_1$ be abstractions of its domain set, and
  $F \in {\cal D}_0$,
  $G \in {\cal D}_1$ such that $F \subseteq G$.
  Then the operator
  \[
  \proj{F}{G} \colon A \cup O^G \cup \Views_G(\struct{M}^{{\cal D}_1}) \to \Views_F(\struct{M}^{{\cal D}_0})
  \]
  is defined as follows:
    
  \begin{itemize}
  \item if $a \in A$ then $\proj{F}{G}(a) = a|_F$,
    where $a$ is considered to be a sequence of length one.
    The result is its subsequence of actions that $F$ can perform, i.e. it is either $a$ or $\varepsilon$,
    
  \item if $o \in O^G$ then $\proj{F}{G}(o) = o|_F$,
    that is observations made by $G$ are restricted such that the result is the observation made by $F$,

  \item if $\alpha$ is a $G$ view and $\beta \in O^G \cup A_G \cdot O^G$ such that $\alpha\beta$ is a $G$ view, then
    \[
    \proj{F}{G}(\alpha\beta) = \proj{F}{G}(\alpha) \aconc \proj{F}{G}(\beta).
    \]

  \end{itemize}
  For all other cases let the result be undefined.
  The symbol $\proj{F}{G}(\cdot)$ is chosen to support the intuition that $G$ views are `projected down' to $F$ views.
\end{definition}

That the previous definition is reasonable is established by a correctness lemma which makes the restriction aspect of the operator clear.

\begin{lemma}
  \label{thm:projection-lemma}
  Let $\struct{M}$ be a system,
  let ${\cal D}_0$ and ${\cal D}_1$ be abstractions of its domain set,
  and $F \in {\cal D}_0$,
  $G \in {\cal D}_1$ such that $F \subseteq G$.
  Then for all $r \in \Runs(\struct{M})$ we have
  $\proj{F}{G}(\view_G(r)) = \view_F(r)$.
\end{lemma}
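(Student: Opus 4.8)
The plan is to prove the identity $\proj{F}{G}(\view_G(r)) = \view_F(r)$ by induction on the length of $r$, unwinding the inductive definitions of $\view_G$ in $\struct{M}^{{\cal D}_1}$, of $\view_F$ in $\struct{M}^{{\cal D}_0}$, and of the recursive clause of $\proj{F}{G}$ in lockstep. Before starting, I would record one auxiliary observation that is immediate from the definition of $\view$: for every run $\rho$ ending in a state $t$ and every abstract domain $H$, the view $\view_H(\rho)$ is nonempty and its last symbol is the current observation $\obs_H(t)$. This is what makes the absorptive clause of $\view$ testable by comparing observations of states rather than by inspecting the whole view, and it is what lets me transfer absorption from the level of $G$ down to the level of $F$.

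For the base case $r = s_I$, both sides equal the restriction to $F$ of the initial $G$-observation: $\proj{F}{G}(\obs_G(s_I)) = \obs_G(s_I)|_F = \obs_F(s_I) = \view_F(s_I)$, using $F \subseteq G$. For the inductive step I write $r = r' a s$ with $t$ the last state of $r'$, assume $\proj{F}{G}(\view_G(r')) = \view_F(r')$, and split on the location of $\dom(a)$ into the three cases $\dom(a) \in F$, $\dom(a) \in G \setminus F$, and $\dom(a) \notin G$. In the first two cases $\view_G(r) = \view_G(r') \cdot a \cdot \obs_G(s)$, so the trailing chunk $\beta = a \cdot \obs_G(s) \in A_G \cdot O^G$ is peeled off by the recursive clause of $\proj{F}{G}$ and the induction hypothesis is applied to $\view_G(r')$. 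When $\dom(a) \in F$ the chunk projects to $a \cdot \obs_F(s)$ and the $\aconc$ collapses to ordinary concatenation (an action can never stutter against an observation, as $A$ is disjoint from the function-valued observation alphabet), reproducing $\view_F(r') \cdot a \cdot \obs_F(s) = \view_F(r)$; when $\dom(a) \in G \setminus F$ the action part projects to $\varepsilon$, leaving $\view_F(r') \aconc \obs_F(s)$, which is $\view_F(r)$ because $\dom(a) \notin F$.

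The delicate case is $\dom(a) \notin G$, where both views use absorptive concatenation and the shape of $\view_G(r)$ depends on whether $\obs_G(s)$ equals $\last(\view_G(r')) = \obs_G(t)$. If the two differ, then $\view_G(r) = \view_G(r') \cdot \obs_G(s)$, I peel off $\beta = \obs_G(s) \in O^G$, and the induction hypothesis together with the $\aconc$ already present in the recursive clause yields $\view_F(r') \aconc \obs_F(s) = \view_F(r)$, regardless of whether $F$ itself absorbs. If instead $\obs_G(t) = \obs_G(s)$, then $\view_G(r) = \view_G(r')$ is literally unchanged, there is no chunk to peel, and the recursive clause does not apply; this is the main obstacle. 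Here I would argue that the $F$-view is unchanged as well: restricting the equality $\obs_G(t) = \obs_G(s)$ to $F \subseteq G$ gives $\obs_F(t) = \obs_F(s)$, so by the auxiliary observation $\view_F(r) = \view_F(r') \aconc \obs_F(s) = \view_F(r')$, and the claim for $r$ reduces to the induction hypothesis for $r'$. This monotone transfer of observation-equality from the larger coalition $G$ to the smaller coalition $F$ is exactly where the hypothesis $F \subseteq G$ is essential, and it is the only step that is not purely mechanical.
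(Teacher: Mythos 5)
Your proof is correct and takes essentially the same route as the paper's: an induction over runs, a preliminary observation that projecting a $G$-observation yields the corresponding $F$-observation, and a case split on where $\dom(a)$ lies. You are in fact more careful than the paper on the stuttering subcase ($\dom(a)\notin G$ with $\obs_G(t)=\obs_G(s)$, where $t$ is the last state of $r'$): the paper silently distributes $\proj{F}{G}$ over $\aconc$ at that point, and your auxiliary observation that $\last(\view_H(\rho))$ is always the current observation is exactly the fact needed to justify that step.
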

\iftoggle{long}{
\begin{proof}
  First, we confine ourselves to case of observations.
  Let $o \in O^G$ such that $\obs^{\absn_1}_G(s)=o$ for some state $s$,
  then $o$ is a function that maps each $u \in G$ to an element in $O$.
  The function $\proj{F}{G}(o) = o|_{F \times O}$ has the domain set $F$,
  and is total since $o$ is total and we have $F \subseteq G$.
  We get that $\proj{F}{G}(o)$ is the observation of domain $F$ made in state $s$ and is thus equal to $\obs_F^{\absn_0}(s)$.
  
  The main result is shown by induction over runs.
  The base case follows from the previous paragraph.
  For the induction step, let $r$ be a run of the form $r' as$, where $r \in \Runs(\struct{M})$, $a$ is an action and $s$ a state of $\struct{M}$.
  We distinguish two cases.
  
  If $\dom^{\absn_1}(a) \neq G$
    then we have $\view_G(r as') = \view_G(r) \aconc \obs_G^{\absn_1}(s')$ and
    $\proj{F}{G}(\view_G(r) \aconc \obs_G^{\absn_1}(s')) = \proj{F}{G}(\view_G(r)) \aconc \proj{F}{G}(\obs_G^{\absn_0}(s'))$,
    and this is equal to
    $\view_F(r) \aconc \obs_F^{\absn_0}(s')$ by induction and the special case of observations above.
    This value equals $\view_F(r as')$ by definition of $\view$.
    Note that we must have $\dom^{\absn_0}(a) \neq F$,
    for if we had $\dom^{\absn_0}(a) = F$ then $\dom(a) \in F$, which implies $\dom(a) \in G$ and yields $\dom^{\absn_1}(a) = G$, contrary to the case assumption.

  If $\dom^{\absn_1}(a) = G$
    then $\view_G(r as') = \view_G(r) \cdot a \cdot \obs_G^{\absn_1}(s')$.
    Applying $\proj{F}{G}$ and induction yields
    $\proj{F}{G}(\view_G(r as')) = \view_F(r) \aconc \proj{F}{G}(a) \aconc \obs_F^{\absn_0}(s')$,
    which is equal to $\view_F(r as')$ in both cases $\dom^{\absn_0}(a) = F$ and $\dom^{\absn_0}(a) \neq F$.
  \qed
\end{proof}
}{} 

Conditions under which the translation of a vulnerability is possible are established by the following result:
the attacking coalition may not shrink
and the translation must respect the status of being the attacker's victim.

\begin{lemma}
  \label{thm:GN-vuln-lemma}
  Let $\struct{M}$ be a system and $\pol$ be a policy, 
  and $\absn_0 = ({\cal H}_0,{\cal L}_0)$ be a cut of the domain set of $\struct{M}$
  with respect to $\pol$. 
  Let $(F, \alpha_0, a, \alpha_1, \beta, \pol^{\absn_0})$ be a vulnerability of $\struct{M}^{\absn_0}$.
  Let $\absn_1 = ({\cal H}_1, {\cal L}_1)$ be a cut such that there is $G \in \set{{\cal H}_1,{\cal L}_1}$ with
  $\dom^{\absn_1}(a) \not\pol^{\absn_1} G$ and $F \subseteq G$.
  Then there is $\beta' \in \Views_G(\struct{M}^{\absn_1})$ such that a vulnerability of $\struct{M}^{\absn_1}$
  is given by $(G, \alpha_0, a, \alpha_1, \beta', \pol^{\absn_1})$.
\end{lemma}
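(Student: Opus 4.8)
The plan is to prove this by directly reusing the run that witnesses the vulnerability of $\struct{M}^{\absn_0}$, transporting it to $\struct{M}^{\absn_1}$, and letting $\beta'$ be the $G$-view of that very run. The whole argument hinges on the observation that abstraction changes only the domains and the observation/assignment functions, leaving $S$, $A$, $\Delta$ and $s_I$ untouched, so $\Runs(\struct{M}^{\absn_0}) = \Runs(\struct{M}^{\absn_1}) = \Runs(\struct{M})$ and runs can be shared verbatim between the two abstract systems. Following the stated convention, I would assume without loss of generality that the given vulnerability is a violation of $\mname{GN}^+$, the $\mname{GN}^-$ case being symmetric.

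First I would fix a run $r$ realising the $\mname{GN}^+$ violation for $\struct{M}^{\absn_0}$, so that $\act(r) = \alpha_0 \alpha_1$, $\view_F(r) = \beta$, and no run $r'$ with $\act(r') = \alpha_0 a \alpha_1$ satisfies $\view_F(r') = \beta$. I then set $\beta' := \view_G(r) \in \Views_G(\struct{M}^{\absn_1})$. Three of the four conditions for $(G, \alpha_0, a, \alpha_1, \beta', \pol^{\absn_1})$ to be a vulnerability are then immediate: $\dom^{\absn_1}(a) \not\pol^{\absn_1} G$ holds by hypothesis, while $\beta' \in \Views_G(\struct{M}^{\absn_1})$, together with $\act(r) = \alpha_0\alpha_1$ and $\view_G(r) = \beta'$, hold by construction. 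It thus remains to establish the negative condition, namely that no run $r'$ with $\act(r') = \alpha_0 a \alpha_1$ has $\view_G(r') = \beta'$.

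The crux is exactly this last step, and it is where Lemma~\ref{thm:projection-lemma} does all the work. Suppose, for contradiction, that some run $r'$ with $\act(r') = \alpha_0 a \alpha_1$ satisfied $\view_G(r') = \beta'$. Since $F \in \set{{\cal H}_0,{\cal L}_0}$, $G \in \set{{\cal H}_1,{\cal L}_1}$ and $F \subseteq G$, the projection lemma applies and gives $\view_F(r') = \proj{F}{G}(\view_G(r')) = \proj{F}{G}(\beta') = \proj{F}{G}(\view_G(r)) = \view_F(r) = \beta$. But then $r'$ is a run with $\act(r') = \alpha_0 a \alpha_1$ and $\view_F(r') = \beta$, contradicting the choice of $r$. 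Hence no such $r'$ exists, and $(G, \alpha_0, a, \alpha_1, \beta', \pol^{\absn_1})$ is a $\mname{GN}^+$ vulnerability of $\struct{M}^{\absn_1}$.

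I expect the only genuinely delicate points to be bookkeeping rather than mathematical depth. The conceptual heart is that the projection lemma makes precise the informal ``a larger coalition sees at least as much'' principle: because the $F$-view is a \emph{function} of the $G$-view via $\proj{F}{G}$, the failure of $G$ to reproduce its view $\beta'$ after inserting $a$ is inherited from the failure of $F$ to reproduce $\beta$. The two things to watch are that the same concrete runs $r$ and $r'$ are legitimately reused across both abstractions, and that the hypothesis $\dom^{\absn_1}(a) \not\pol^{\absn_1} G$ — which is \emph{not} automatic from $\dom^{\absn_0}(a) \not\pol^{\absn_0} F$ — is precisely what guarantees the transported tuple qualifies as a vulnerability; this is why it is carried as an assumption of the lemma. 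No computation is required once the projection lemma is invoked.
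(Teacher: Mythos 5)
Your proof is correct and follows essentially the same route as the paper's own argument: take the run witnessing the $\mname{GN}^+$ violation for $F$, define $\beta'$ as its $G$-view, and use Lemma~\ref{thm:projection-lemma} to show that any run on $\alpha_0 a \alpha_1$ attaining $G$-view $\beta'$ would project down to an $F$-view of $\beta$, contradicting the original vulnerability. Your explicit remark that runs are shared verbatim across abstractions (since abstraction leaves $S$, $A$, $\Delta$, $s_I$ unchanged) is a detail the paper uses only implicitly, but it is the same proof.
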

\iftoggle{long}{
\begin{proof}
Since $(F, \alpha_0, a, \alpha_1, \beta, \pol^{\absn_0})$ is a vulnerability of $\struct{M}^{\absn_0}$,
there is a run $r$ on $\alpha_0 \alpha_1$ and an $F$ view of $\beta$ such that no run on $\alpha_0 a \alpha_1$ attains the $F$ view $\beta$.
Due to the prerequisites it suffices to show that there is a $G$ view $\beta'$ attained by some run on $\alpha_0 \alpha_1$
such that no run on $\alpha_0 a \alpha_1$ can attain a $G$ view of $\beta'$,
because then $(G, \alpha_0, a, \alpha_1, \beta', \pol^{\absn_1})$ is a vulnerability
of $\struct{M}^{\absn_1}$,
due to a violation of $\mname{GN}^+(\dom^{\absn_1}(a), G)$,
and we are finished.
By vulnerability, there is a run on $\alpha_0 \alpha_1$ with $F$ view of $\beta$;
let $\beta'$ be $G$ view of that run.
If there were a run $r$ on $\alpha_0 a \alpha_1$ with $\view_{G}(r) = \beta'$ then
this run would satisfy $\view_F(r) = \proj{F}{G}(\view_{G}(r)) = \proj{F}{G}(\beta') = \beta$ by the same lemma,
which contradicts the existence of the vulnerability of $\struct{M}^{\absn_0}$.
Therefore, no such run can exist and we have found a vulnerability of $\struct{M}^{\absn_1}$ as claimed.
\qed
\end{proof}
}{} 

\iftoggle{long}{
Some relationships between cut-based variants of $\mname{GN}$ are trivial and can be seen directly.

\begin{proposition}
  \label{thm:GN-trivial-implications}
  $\mname{C{-}GN}$ implies both $\mname{H{-}GN}$ and $\mname{L{-}GN}$.
\end{proposition}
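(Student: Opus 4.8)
The plan is to exploit the fact that the High-up and Low-down partitionings used to define $\mname{H{-}GN}$ and $\mname{L{-}GN}$ are themselves cuts, so that the proposition reduces to the trivial observation that a universally quantified statement implies each of its instances. Concretely, $\mname{C{-}GN}$-security demands that $\struct{M}^\absn$ be $\mname{GN}$-secure for $\pol^\absn$ for \emph{every} cut $\absn$ of the domain set $D$, whereas $\mname{H{-}GN}$ (resp. $\mname{L{-}GN}$) demands this only for the cuts of the form $\absUp{u}{}$ (resp. $\absDown{u}{}$), ranging over $u \in D$. Thus it suffices to check that each such partitioning is in fact a cut.

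First I would confirm that $\absUp{u}{} = (u^\pol,\, D \setminus u^\pol)$ is a cut. Its two components are complementary subsets of $D$ and hence form an abstraction, so the only nontrivial point is the cut condition: there must be no $v \in u^\pol$ and $w \in D \setminus u^\pol$ with $v \pol w$. This is exactly where transitivity of $\pol$ is used --- if $u \pol v$ and $v \pol w$ then $u \pol w$, forcing $w \in u^\pol$ and contradicting $w \in D \setminus u^\pol$. The argument for $\absDown{u}{} = (D \setminus {}^\pol u,\, {}^\pol u)$ is dual: if $v \in D \setminus {}^\pol u$, $w \in {}^\pol u$ and $v \pol w$, then $v \pol w$ and $w \pol u$ give $v \pol u$ by transitivity, placing $v$ in ${}^\pol u$ --- a contradiction. (These facts are already implicit in the text, which introduces $\absUp{u}{}$ and $\absDown{u}{}$ as cuts; I would simply make the transitivity appeal explicit.)

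Having established that $\{\absUp{u}{} : u \in D\}$ and $\{\absDown{u}{} : u \in D\}$ are subfamilies of the family of all cuts of $D$, I would conclude immediately: if $\struct{M}^\absn$ is $\mname{GN}$-secure for $\pol^\absn$ for every cut $\absn$, then in particular this holds whenever $\absn = \absUp{u}{}$ and whenever $\absn = \absDown{u}{}$, which is precisely $\mname{H{-}GN}$- and $\mname{L{-}GN}$-security. I do not expect any genuine obstacle here; the statement is essentially definitional, and the sole subtlety --- the reliance on transitivity of $\pol$ to guarantee that the High-up and Low-down partitionings respect the information-flow order --- is entirely routine.
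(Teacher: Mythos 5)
Your proof is correct and takes essentially the same route the paper intends: the paper presents Proposition~\ref{thm:GN-trivial-implications} as an immediate consequence of Definition~\ref{def:abstraction-definitions}, since $\mname{C{-}GN}$ quantifies over all cuts while $\mname{H{-}GN}$ and $\mname{L{-}GN}$ quantify only over the subfamilies $\absUp{u}{}$ and $\absDown{u}{}$, exactly as you argue. Your explicit transitivity check that the High-up and Low-down partitionings are genuinely cuts is the only detail the paper leaves implicit, and it is carried out correctly.
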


}{} 

That High-up $\mname{GN}$ implies Cut $\mname{GN}$, and therefore 
these 
notions are equivalent,
might not be apparent, but can be explained by the fact that the Low component $\cal L$ of a cut is
the intersection of the Low components ${\cal L}_0, ..., {\cal L}_{n-1}$ of $n$ High-up cuts
and thus can obtain no more information about High behaviour than each ${\cal L}_i$ individually.
If we can prevent each ${\cal L}_i$ from obtaining any information about how High actions
are interleaved into runs then the same must apply to $\cal L$ as well.

\begin{theorem}
  \label{thm:CGN=HGN}
  The notions $\mname{C{-}GN}$ and $\mname{H{-}GN}$ are equivalent.
\end{theorem}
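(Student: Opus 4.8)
The plan is to prove the two implications separately. The direction $\mname{C{-}GN} \Rightarrow \mname{H{-}GN}$ is immediate and is already recorded as Proposition~\ref{thm:GN-trivial-implications}: every High-up cut $\absUp{u}{}$ is in particular a cut of $D$ with respect to $\pol$, so quantifying over all cuts specialises to quantifying over the High-up cuts. The substantive direction is $\mname{H{-}GN} \Rightarrow \mname{C{-}GN}$, which I would prove by contraposition, using Lemma~\ref{thm:GN-vuln-lemma} to transport a vulnerability that witnesses a failure of $\mname{C{-}GN}$ onto a suitably chosen High-up cut.

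Concretely, suppose $\struct{M}$ is not $\mname{C{-}GN}$-secure for $\pol$. Then there is a cut $\absn = ({\cal H},{\cal L})$ for which $\struct{M}^{\absn}$ is not $\mname{GN}$-secure for $\pol^{\absn}$, i.e. there is a vulnerability $(F,\alpha_0,a,\alpha_1,\beta,\pol^{\absn})$ of $\struct{M}^{\absn}$ with $F \in \set{{\cal H},{\cal L}}$ and $\dom^{\absn}(a) \not\pol^{\absn} F$; as noted before the lemma we may assume this records a violation of $\mname{GN}^{+}$. The crucial step is to read off from $\dom^{\absn}(a) \not\pol^{\absn} F$ what it says about the concrete domain $\dom(a)$. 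Since $\pol^{\absn}$ is defined by existential lifting, $\dom^{\absn}(a) \not\pol^{\absn} F$ means there is no $x \in \dom^{\absn}(a)$ and $y \in F$ with $x \pol y$; because $\dom(a) \in \dom^{\absn}(a)$, this yields $\dom(a) \not\pol y$ for every $y \in F$.

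I would then take $u := \dom(a)$ and let $\absUp{u}{} = (u^{\pol},\, D \setminus u^{\pol})$ play the role of $\absn_1$ in Lemma~\ref{thm:GN-vuln-lemma}, with $G := D \setminus u^{\pol}$ its Low component. Two hypotheses need to be verified. First, $F \subseteq G$: from the previous paragraph every $y \in F$ satisfies $\dom(a) \not\pol y$, hence $y \notin u^{\pol}$, i.e. $y \in G$. Second, $\dom^{\absUp{u}{}}(a) \not\pol^{\absUp{u}{}} G$: by reflexivity $\dom(a) \in u^{\pol}$, so $\dom^{\absUp{u}{}}(a) = u^{\pol}$, and $u^{\pol} \not\pol^{\absUp{u}{}} (D \setminus u^{\pol})$ holds precisely because $\absUp{u}{}$ is a cut. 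Lemma~\ref{thm:GN-vuln-lemma} then supplies some $\beta' \in \Views_{G}(\struct{M}^{\absUp{u}{}})$ making $(G,\alpha_0,a,\alpha_1,\beta',\pol^{\absUp{u}{}})$ a vulnerability of $\struct{M}^{\absUp{u}{}}$; hence $\struct{M}^{\absUp{u}{}}$ is not $\mname{GN}$-secure for $\pol^{\absUp{u}{}}$, so $\struct{M}$ is not $\mname{H{-}GN}$-secure, completing the contrapositive.

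The main obstacle—indeed essentially the only delicate point—is the uniform choice $u = \dom(a)$ together with the verification that it simultaneously places the attacking action in the High component of the High-up cut (by reflexivity) and, through the cut property of the original $\absn$, forces the entire victim coalition $F$ into the Low component $G$. This makes precise the ``intersection of Low components'' intuition stated before the theorem: the victim side of an arbitrary cut already sits inside the Low side of the single canonical High-up cut generated by the domain of the offending action, so Lemma~\ref{thm:GN-vuln-lemma} applies with no loss and no case distinction between whether the attacker originally sat in ${\cal H}$ or in ${\cal L}$.
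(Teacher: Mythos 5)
Your proposal is correct and takes essentially the same route as the paper's proof: the trivial direction via Proposition~\ref{thm:GN-trivial-implications}, then contraposition, transporting a vulnerability $(F,\alpha_0,a,\alpha_1,\beta,\pol^{\absn})$ for an arbitrary cut to the High-up cut $\absUp{\dom(a)}{}$ by verifying $F \subseteq D \setminus \dom(a)^{\pol}$ and applying Lemma~\ref{thm:GN-vuln-lemma}. The only cosmetic difference is that the paper explicitly re-derives, using transitivity of $\pol$, that $\dom(a)^{\pol} \not\pol^{\absUp{\dom(a)}{}} (D \setminus \dom(a)^{\pol})$, whereas you invoke the standing fact that $\absUp{\dom(a)}{}$ is a cut, which rests on the same transitivity argument.
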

\begin{proof}
  \iftoggle{long}{
    Because of Proposition \ref{thm:GN-trivial-implications}  it suffices to show that $\mname{H{-}GN}$ implies $\mname{C{-}GN}$.
  }{ 
    Because $\mname{C{-}GN}$ implies $\mname{H{-}GN}$   it suffices to show that $\mname{H{-}GN}$ implies $\mname{C{-}GN}$.
  }
  The proof is done by contraposition and translates a vulnerability with respect to an arbitrary cut
  into a vulnerability with respect to a $\absUp{\cdot}{}$-style cut.

  Let $\struct{M}$ be a system with domain set $D$, $\pol$ a policy over $D$ and $\absn_0$ a cut of $D$.
  Furthermore, let $(F, \alpha_0, a, \alpha_1, \beta, \pol^{\absn_0})$ be a $\mname{GN}$ vulnerability of $\struct{M}^{\absn_0}$.
    Set $\absn_1 := \absUp{\dom(a)}{}$, ${\cal H} := \dom(a)^{\pol}$ and ${\cal L} := D \setminus \dom(a)^{\pol}$.
    Then we have $\absn_1 = ({\cal H},{\cal L})$.
    We show that the prerequisites for Lemma \ref{thm:GN-vuln-lemma} are satisfied, which gives us a vulnerability of $\struct{M}^{\absn_1}$.
    
    First, we demonstrate that $\dom^{\absn_1}(a){=}{\cal H} \not\pol^{\absn_1} {\cal L}$.
    Let $u \in {\cal H}$ and $v \in {\cal L}$, we must show that $u \not\pol v$.
    Assume $u \pol v$, then by choice of $\absn_1$ we have $\dom(a) \pol u$,
    which implies $\dom(a) \pol u \pol v$ and $\dom(a) \pol v$ by transitivity.
    Therefore $v \in {\cal H}$, which contradicts $v \in {\cal L}$, and hence we have $u \not\pol v$.
    It remains to prove that $F \subseteq {\cal L}$.
    Let $u \in F$,
    then due to vulnerability we have $\dom^{\absn_0}(a) \not\pol^{\absn_0} F$,
    i.e. $\dom(a) \not\pol u$.
    By choice of $\absn_1$ we get $u \not\in {\cal H}$, which is equivalent to $u \in {\cal L}$.
    Now application of Lemma \ref{thm:GN-vuln-lemma} yields a vulnerability of $\struct{M}^{\absn_1}$.
  \qed
\end{proof}



\iftoggle{long}{}{ 
\subsection{Separation of the $\mname{GN}$ Variants}
}

The result obtained by Theorem \ref{thm:CGN=HGN} shows completeness of Ryan's technique for $\mname{GN}$.
From this follows that the High-up variant of $\mname{GN}$ implies the Low-down variant.
There is also an example that demonstrates that 
these
notions are distinct, and thus the High-up variant is stricter.

\begin{theorem}
  \label{thm:downcuts-dont-cut-it}
  $\mname{H{-}GN}$ is strictly contained in $\mname{L{-}GN}$.
\end{theorem}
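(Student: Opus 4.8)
The plan is to prove both containment and strictness, leaning heavily on the completeness result $\mname{C{-}GN}=\mname{H{-}GN}$ just established in Theorem~\ref{thm:CGN=HGN}. For the containment direction, namely that $\mname{H{-}GN}$ implies $\mname{L{-}GN}$, I would first note that every Low-down cut is a genuine cut: for $\absDown{u}{}=(D\setminus{}^{\pol}u,\ {}^{\pol}u)$, if $x\in D\setminus{}^{\pol}u$ and $y\in{}^{\pol}u$ satisfied $x\pol y$, then from $y\pol u$ transitivity would give $x\pol u$, contradicting $x\notin{}^{\pol}u$. Hence the Low-down cuts form a subfamily of all cuts, so $\mname{C{-}GN}$-security implies $\mname{L{-}GN}$-security (this is also recorded in Proposition~\ref{thm:GN-trivial-implications}). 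Combining this with Theorem~\ref{thm:CGN=HGN} gives $\mname{H{-}GN}\Rightarrow\mname{L{-}GN}$.

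For strictness I would exhibit the system and policy of Figure~\ref{fig:GN-fruitfly}, with domains $\set{H,L_1,L_2}$ and $L_1\pol H$, $L_2\pol H$. To see it is \emph{not} $\mname{H{-}GN}$-secure, take the High-up cut $\absUp{H}{}=(\set{H},\set{L_1,L_2})$, noting $\set{H}\not\pol^{\absUp{H}{}}\set{L_1,L_2}$. In this abstraction the single Low coalition observes both components, so the correlation of the two observed bits separates runs containing $h$ from those not containing it: the run on $\ell_1\ell_2$ ending with Low observation $(0,0)$ has no counterpart on $h\,\ell_1\ell_2$ with the same Low view, which refutes $\mname{GN}^+(\set{H},\set{L_1,L_2})$. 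Hence $\struct{M}^{\absUp{H}{}}$ is not $\mname{GN}$-secure.

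To show the same system \emph{is} $\mname{L{-}GN}$-secure, I would check each Low-down cut. The cut $\absDown{H}{}$ is degenerate, since ${}^{\pol}H=\set{H,L_1,L_2}=D$ makes its High component $D\setminus{}^{\pol}H=\emptyset$ contain no actions, so $\mname{GN}^+$ and $\mname{GN}^-$ hold vacuously. The cuts $\absDown{L_1}{}=(\set{H,L_2},\set{L_1})$ and $\absDown{L_2}{}=(\set{H,L_1},\set{L_2})$ are symmetric, so it suffices to treat the first. The structural key is that no action other than $L_1$'s own action $\ell_1$ ever alters $L_1$'s observation (both $h$ and $\ell_2$ leave the relevant component of every state fixed), and that after each $\ell_1$ both successor bits remain reachable irrespective of whether $h$ or $\ell_2$ has occurred. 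Thus every $L_1$-view survives arbitrary insertions and deletions of the High actions $h$ and $\ell_2$, so $\mname{GN}^+$ and $\mname{GN}^-$ hold for $\absDown{L_1}{}$ and $\struct{M}^{\absDown{L_1}{}}$ is $\mname{GN}$-secure.

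The main obstacle is the verification in the last paragraph: one must carry out the $\mname{GN}^+$/$\mname{GN}^-$ bookkeeping for the Low-down abstractions with care for the absorptive concatenation $\aconc$, so that the stuttering observations induced by inserted High actions are correctly elided, and for the branching on the nondeterministic bit choice. The containment direction, by contrast, is immediate once Theorem~\ref{thm:CGN=HGN} is invoked and the Low-down cuts are recognised as genuine cuts.
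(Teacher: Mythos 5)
Your proposal is correct and follows essentially the same route as the paper: containment via Theorem~\ref{thm:CGN=HGN} together with the trivial implication from cut-security to Low-down security, and strictness via exactly the system and policy of Figure~\ref{fig:GN-fruitfly}, arguing that the parity correlation breaks $\mname{GN}$ for the High-up cut $\absUp{H}{}$ while each Low-down abstraction remains $\mname{GN}$-secure. The only cosmetic differences are that you refute $\mname{GN}^+(\set{H},\set{L_1,L_2})$ where the paper refutes $\mname{GN}^-$ (symmetric and equally valid), and you spell out the verification of $\mname{L{-}GN}$-security (including the degenerate cut $\absDown{H}{}$) that the paper merely asserts.
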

\begin{proof}
  \iftoggle{long}{
    Containment follows from the facts that $\mname{H{-}GN} = \mname{C{-}GN}$ by Theorem \ref{thm:CGN=HGN} and the trivial implications from Proposition \ref{thm:GN-trivial-implications}.
  }{ 
    Containment follows from the facts that $\mname{H{-}GN}=\mname{C{-}GN}$ and that $\mname{C{-}GN}$ implies $\mname{L{-}GN}$.
  }
  For separation, we recall 
  Figure \ref{fig:GN-fruitfly}, 
  and modify it slightly to suit our needs.
This system can be verified to be $\mname{GN}$-secure for the separation policy (i.e., the identity relation) on $\set{H,L_1,L_2}$;
add the edges $(L_1, H)$ and $(L_2, H)$ to it and call it $\pol$.
\iftoggle{long}{
  We anticipate the result that $\mname{GN}_{pw}$ is monotonic (see Proposition \ref{thm:GN-monotonic}),
  and get that the system is $\mname{GN}_{pw}$-secure for $\pol$.
}{ 
  We use that $\mname{GN}_{pw}$ is monotonic,
  and get that the system is $\mname{GN}_{pw}$-secure for $\pol$.
}

With respect to $\pol$, the domain set has two Low-down cuts,
which are $\absDown{L_1}{}$ and $\absDown{L_2}{}$. 
The systems $\struct{M}^{\absDown{L_1}{}}$ and $\struct{M}^{\absDown{L_2}{}}$ can be shown to be $\mname{GN}$-secure for $\pol^{\absDown{L_1}{}}$ and $\pol^{\absDown{L_2}{}}$,
respectively,
and therefore $\struct{M}$ is $\mname{L{-}GN}$-secure for $\pol$.
However, 
for the High-up cut $\absUp{H}{}$,
one can see that $\struct{M}^{\absUp{H}{}}$ fails to be $\mname{GN}$-secure for $\pol^{\absUp{H}{}}$.
Consider the run $r := s_0 h s_0' \ell_1 s_5 \ell_2 s_{13}$.
We have $\view_L(r) = {\bot \atop \bot} \ell_1 {0 \atop \bot} \ell_2 {0 \atop 1}$, where $L$ observations are written in the form ${ \obs_{L_1}(\cdot) \atop \obs_{L_2}(\cdot) }$.
By the parity of their final observations after performing $r$, domains $L_1$ and $L_2$ together can determine that $H$ performed $h$ at the very beginning of the run.
Thus, $\struct{M}^{\absUp{H}{}}$ doesn't satisfy the property $\mname{GN}^-(\set{H}, \set{L_1,L_2})$ for $\pol^{\absUp{H}{}}$,
which means that $\struct{M}$ is not $\mname{H{-}GN}$-secure for $\pol$.
\qed
\end{proof}

The weakness of Low-down $\mname{GN}$ is that it assumes a somewhat restricted attacker
that never groups domains into Low that may not interfere with each other according to the policy.
(For example, for the policy in Figure~\ref{fig:GN-fruitfly}, the coalition $\{L_1,L_2\}$ is not covered.)
But nevertheless such coalitions are possible, which provides an argument against Low-down $\mname{GN}$
if coalitions are a risk.
\iftoggle{long}{
  In a later subsection about monotonicity, we will show that Low-down $\mname{GN}$ is not monotonic,
  which one can interpret as further evidence that it might seem problematic.
  However, Low-down $\mname{GN}$ doesn't break all our intuitions;
  as one might expect, it turns out to be stricter than 
  $\mname{GN}_{pw}$.
}{
  However, as one would expect, Low-down $\mname{GN}$ turns out to be stricter than 
  $\mname{GN}_{pw}$.
}

\begin{theorem}
  \label{thm:GN-neq-downGN}
  $\mname{L{-}GN}$ is strictly contained in $\mname{GN}_{pw}$.
\end{theorem}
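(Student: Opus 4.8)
The plan is to establish the two halves separately: the containment $\mname{L{-}GN} \Rightarrow \mname{GN}_{pw}$, and then strictness via a separating example.

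For containment I would argue directly. Assume $\struct{M}$ is $\mname{L{-}GN}$-secure for $\pol$ and fix $u,v \in D$ with $u \not\pol v$; I must verify $\mname{GN}^+(u,v)$ and $\mname{GN}^-(u,v)$. The idea is to inspect the single Low-down cut $\absDown{v}{} = ({\cal H},{\cal L})$ with ${\cal L} = {}^{\pol}v$ and ${\cal H} = D \setminus {}^{\pol}v$. Reflexivity of $\pol$ gives $v \in {\cal L}$, the hypothesis $u \not\pol v$ gives $u \in {\cal H}$, and ${\cal H} \not\pol^{\absDown{v}{}} {\cal L}$ holds by the cut property. Since $\struct{M}$ is $\mname{L{-}GN}$-secure, $\struct{M}^{\absDown{v}{}}$ satisfies $\mname{GN}^+({\cal H},{\cal L})$ and $\mname{GN}^-({\cal H},{\cal L})$. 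To derive $\mname{GN}^+(u,v)$, take a run $r$, a factorisation $\act(r) = \alpha_0\alpha_1$, and $a \in A_u$. As $u \in {\cal H}$ we have $\dom^{\absDown{v}{}}(a) = {\cal H}$, so $\mname{GN}^+({\cal H},{\cal L})$ applied to $r$ (which is also a run of $\struct{M}^{\absDown{v}{}}$) yields $r'$ with $\act(r') = \alpha_0 a \alpha_1$ and $\view_{\cal L}(r') = \view_{\cal L}(r)$. Since $\set{v} \subseteq {\cal L}$, Lemma~\ref{thm:projection-lemma} gives $\view_v(r') = \proj{\set{v}}{{\cal L}}(\view_{\cal L}(r')) = \proj{\set{v}}{{\cal L}}(\view_{\cal L}(r)) = \view_v(r)$ (identifying the singleton view with $\view_v$), which is exactly what $\mname{GN}^+(u,v)$ demands. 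The argument for $\mname{GN}^-(u,v)$ is symmetric, using $\mname{GN}^-({\cal H},{\cal L})$ and deleting the action. Equivalently, one could phrase this by contraposition and lift a $\mname{GN}_{pw}$ vulnerability with victim $\set{v}$ up to the victim ${\cal L}$, in the spirit of Lemma~\ref{thm:GN-vuln-lemma}.

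For strictness I would reuse the collusion system of Figure~\ref{fig:GN-fruitfly}, but choose the policy so that the two colluding Low domains $L_1,L_2$ lie inside a common Low-down coalition. Concretely, take the policy to be the identity relation on $\set{H,L_1,L_2}$ augmented by the single edge $L_1 \pol L_2$, so that $L_1,L_2 \in {}^{\pol}L_2$ while $H$ stays incomparable to both. Because this policy contains the identity relation, for which the system is $\mname{GN}_{pw}$-secure, monotonicity of $\mname{GN}_{pw}$ yields $\mname{GN}_{pw}$-security for the augmented policy. However, the Low-down cut $\absDown{L_2}{}$ has Low component $\set{L_1,L_2}$ and High component $\set{H}$, so it reproduces precisely the coalition of Theorem~\ref{thm:downcuts-dont-cut-it}: the run witnessing the parity attack there shows that $\struct{M}^{\absDown{L_2}{}}$ violates $\mname{GN}^-(\set{H},\set{L_1,L_2})$ for $\pol^{\absDown{L_2}{}}$, whence $\struct{M}$ is not $\mname{L{-}GN}$-secure.

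I expect the main obstacle to lie on the strictness side: the policy must be arranged so that the collusion genuinely inhabits a Low-down cut (the colluders share an upper bound, unlike in Theorem~\ref{thm:downcuts-dont-cut-it}, where it appeared only in a High-up cut) while $\mname{GN}_{pw}$-security is preserved, and one must re-check that the parity attack survives the altered domain assignment and incomparability of $H$. The containment direction is essentially bookkeeping once the cut $\absDown{v}{}$ is identified, the only delicate point being the faithful passage between the singleton view $\view_v$ and the coalition view $\view_{\cal L}$, which Lemma~\ref{thm:projection-lemma} supplies.
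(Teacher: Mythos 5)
Your proof is correct and takes essentially the same route as the paper: the containment is the same cut-lifting argument via $\absDown{v}{}$ and Lemma~\ref{thm:projection-lemma} (the paper merely phrases it contrapositively, translating a $\mname{GN}_{pw}$ vulnerability into one of $\struct{M}^{\absDown{v}{}}$), and the separation is the paper's own example, namely Figure~\ref{fig:GN-fruitfly} with the edge $(L_1,L_2)$ added, monotonicity of $\mname{GN}_{pw}$, and the parity attack on the cut $\absDown{L_2}{}$. The only cosmetic difference is that your separating policy omits the edges $(L_1,H)$ and $(L_2,H)$ that the paper inherits from Theorem~\ref{thm:downcuts-dont-cut-it}; both choices work, since the abstracted system and the failure of $\mname{GN}^-(\set{H},\set{L_1,L_2})$ are unaffected.
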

\begin{proof}
  Containment is shown by contraposition.
  Let $\struct{M}$ be a system with domain set $D$ and $\pol$ a policy over $D$.
  Assume that $\struct{M}$ is not $\mname{GN}_{pw}$-secure for $\pol$
  and has a vulnerability $(u,\alpha_0,a,\alpha_1,\beta,\pol)$.

  Set $\absn := \absDown{u}{}$, ${\cal L} := {}^{\pol}u$ and ${\cal H} := D \setminus {}^{\pol}u$.
  We show, using Lemma \ref{thm:projection-lemma},
  that there is $\beta'$ so that $({\cal L}, \alpha_0, a, \alpha_1, \beta', \pol^\absn)$
  is a vulnerability in $\struct{M}^\absn$.
  First, we have $\dom^\absn(a) = {\cal H}$, due to $\dom(a) \not\pol u$,
  which implies $\dom^\absn(a) \not\pol^\absn {\cal L}$.
  Next, we demonstrate existence of a suitable $\beta'$.
  We identify observations made by $v$ with observations made by the singleton coalition $\set{v}$,
  and consider the trivial abstraction of $D$, which is $\pset{\set{w}}{w \in D}$.
  Then we clearly have $\set{v} \subseteq {\cal L}$ and can apply Lemma \ref{thm:projection-lemma}.
  Due to vulnerability, there is a run on $\alpha_0\alpha_1$ which has a $\set{u}$ view of $\beta$ such that
  no run on $\alpha_0 a \alpha_1$ has a $\set{u}$ view of $\beta$.
  Let $\beta'$ be the ${\cal L}$ view of this run.
  If there were a run $r$ on $\alpha_0 a \alpha_1$ with ${\cal L}$ view of $\beta'$, then
  $\view_u(r) = \proj{\set{u}}{{\cal L}}(\view_{\cal L}(r)) = \proj{\set{u}}{{\cal L}}(\beta') = \beta$
  by identification of $u$ and $\set{u}$ and Lemma \ref{thm:projection-lemma},
  contradicting the violation of $\mname{GN}^+(u,v)$ in $\struct{M}$.
  Therefore, no such run can exist and
  $({\cal L}, \alpha_0, a, \alpha_1, \beta',\pol^{\absn})$ is a vulnerability of $\struct{M}^\absn$.
  
  For separation,
  take the example from Theorem \ref{thm:downcuts-dont-cut-it} and add the additional edge $(L_1,L_2)$ to $\pol$.
  \iftoggle{long}{
    The system is still $\mname{GN}_{pw}$-secure for $\pol$ due to Proposition \ref{thm:GN-monotonic},
  }{
    The system is still $\mname{GN}_{pw}$-secure for $\pol$, as $\mname{GN}_{pw}$ is monotonic,
  }
  but since we have $\set{H} \not\pol^{\absDown{L_2}{}} \set{L_1,L_2}$,
  the system $\struct{M}^{\absDown{L_2}{}}$ is not $\mname{GN}$-secure by the argument in the proof of Theorem \ref{thm:downcuts-dont-cut-it}.
  \qed
\end{proof}

\iftoggle{long}{
  This concludes our study of cuts in the context of $\mname{GN}$.
}{}

\iftoggle{long}{

\subsection{Relationships between Cut-based Notions of $\mname{NDI}$}

In this subsection, an \emph{$\mname{NDI}$ vulnerability} of a system $\struct{M}$ is a tuple $(u,\alpha,\beta,\pol)$,
where $\pol$ is a policy over the domain set of $\struct{M}$,
$u$ is a domain in $\struct{M}$,
${\dom(a) \not\pol u}$ for all actions $a$ that occur in $\alpha$,
and there is no run $r$ of $\struct{M}$ with $\act_{{}^{\not\pol}u}(r)=\alpha$ and $\view_u(r)=\beta$.
If the context is clear, we only say \emph{vulnerability}.
Clearly, a system is $\mname{NDI}_{sw}$-secure if and only if it has no vulnerabilities.

We will follow the same strategy as used in the previous subsection,
and first provide a lemma to translate vulnerabilities,
then give proofs for the relationships claimed in Theorem \ref{thm:main}.

In order to translate vulnerabilities from one cut to another,
we again must make sure that the attacking coalition doesn't shrink.
Additionally, since $\mname{NDI}_{sw}$ deals with combined behaviour,
the translation must make sure that some noninterference constraints,
which are pairs $(u,v)$ such that $u \not\pol v$,
are preserved.
\begin{lemma}
  \label{thm:NDI-vuln-lemma}
  Let $\struct{M}$ be a system, $\pol$ a policy over its domain set,
  $\absn_0=({\cal H}_0,{\cal L}_0)$ be a cut of its domain set,
  and $F \in \set{{\cal H}_0,{\cal L}_0}$ such that
  $(F,\alpha,\beta,\pol^{\absn_0})$ is a vulnerability of $M^{\absn_0}$.
  Let $\absn_1 = ({\cal H}_1,{\cal L}_1)$ be a cut such that there is $G \in \set{{\cal H}_1,{\cal L}_1}$ with
  \begin{enumerate}
    \item for all actions $a$ that occur in $\alpha$, we have $\dom^{\absn_1}(a) \not\pol^{\absn_1} G$, and
    \item $F \subseteq G$.
  \end{enumerate}
  Then there is $\beta' \in \Views_{G}(\struct{M}^{\absn_1})$ so that
  $(G, \alpha, \beta', \pol^{\absn_1})$ is a vulnerability of $\struct{M}^{\absn_1}$.
\end{lemma}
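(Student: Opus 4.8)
This lemma is the $\mname{NDI}$ analogue of Lemma~\ref{thm:GN-vuln-lemma}, and I would prove it by the same strategy: exhibit a concrete witnessing view $\beta'$ for the enlarged coalition $G$ and use the projection machinery of Lemma~\ref{thm:projection-lemma} to transfer the vulnerability. The key idea is that an $\mname{NDI}$ vulnerability $(F,\alpha,\beta,\pol^{\absn_0})$ asserts that there is a sequence $\alpha \in (A_F)^*$ and an $F$-view $\beta$ that are \emph{not} $F$-compatible, i.e.\ no run simultaneously has $\act_F(r)=\alpha$ and $\view_F(r)=\beta$. I want to produce a $G$-view $\beta'$ that is not $G$-compatible with the same action sequence $\alpha$. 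Since $F \subseteq G$, the sequence $\alpha$ of $F$-actions is also a sequence of $G$-actions, so $\alpha \in (A_G)^*$ is unchanged; condition~(1) guarantees that every action in $\alpha$ is performed by a domain that does not interfere with $G$, so $(G,\alpha,\beta',\pol^{\absn_1})$ really is a candidate vulnerability with respect to the noninterference constraint $\dom^{\absn_1}(a)\not\pol^{\absn_1}G$.

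First I would construct $\beta'$. Unlike the $\mname{GN}$ case, here the definition of vulnerability does not hand me a single reference run realizing $\beta$ together with $\alpha$ — on the contrary, the whole point is that no such run exists. So I cannot simply read off $\beta'$ as ``the $G$-view of that run.'' Instead I would argue by contraposition on $G$-compatibility: suppose, for contradiction, that \emph{every} $G$-view $\beta'$ is $G$-compatible with $\alpha$; I must derive that $\beta$ is $F$-compatible with $\alpha$, contradicting the original vulnerability. The natural move is to take $\beta'$ to be \emph{a} $G$-view whose projection $\proj{F}{G}(\beta')$ equals $\beta$. Such a $\beta'$ exists because $\beta$ is a genuine $F$-view, hence is the $\view_F$ of some run $r_0$, and then $\view_G(r_0)$ is a $G$-view with $\proj{F}{G}(\view_G(r_0))=\view_F(r_0)=\beta$ by Lemma~\ref{thm:projection-lemma}. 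Fixing this $\beta' := \view_G(r_0)$ as the witness, I then suppose there were a run $r$ with $\act_G(r)=\alpha$ and $\view_G(r)=\beta'$. Restricting, $\act_F(r)$ is the subsequence of $\alpha$ lying in $A_F$; since $\alpha$ already consists only of $F$-actions (it was drawn from $(A_F)^*$ in the $\absn_0$-vulnerability), $\act_F(r)=\alpha$. And $\view_F(r)=\proj{F}{G}(\view_G(r))=\proj{F}{G}(\beta')=\beta$ by Lemma~\ref{thm:projection-lemma}. This run would then witness $F$-compatibility of $\alpha$ and $\beta$, contradicting the hypothesis that $(F,\alpha,\beta,\pol^{\absn_0})$ is a vulnerability. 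Hence no such $r$ exists and $(G,\alpha,\beta',\pol^{\absn_1})$ is a vulnerability of $\struct{M}^{\absn_1}$.

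I expect the main obstacle to be the bookkeeping around the action sequence $\alpha$ under the change of abstraction. In the $\absn_0$-vulnerability, $\alpha \in (A_F)^{\ast}$ records actions of the abstract domain $F$; I must check that under $\absn_1$ these same concrete actions are precisely the $G$-actions of any candidate run, so that the equation $\act_G(r)=\alpha$ is the right one to assume and $\act_F(r)=\alpha$ genuinely follows. This uses $F \subseteq G$ together with the fact that $\dom^{\absn_i}$ is $f_{\absn_i}\circ\dom$, so a concrete action assigned to $F$ under $\absn_0$ is assigned to the $G$ containing $F$ under $\absn_1$, while condition~(1) ensures no \emph{additional} $G$-actions sneak into $\alpha$ in a way that violates the noninterference requirement. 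The projection step itself is clean once Lemma~\ref{thm:projection-lemma} is in hand, so the real care is in verifying the action-restriction identity; I would state it explicitly and lean on the assumption, inherited from the $\absn_0$-vulnerability, that $\alpha$ contains only actions of domains not interfering with the relevant coalition.
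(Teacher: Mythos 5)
Your proposal has the same skeleton as the paper's proof of this lemma: take $\beta'$ to be the $G$ view of a run attaining $\beta$, so that $\proj{F}{G}(\beta')=\beta$, and then use Lemma~\ref{thm:projection-lemma} to push a hypothetical run witnessing compatibility for $(G,\alpha,\beta')$ back down to a run contradicting the original vulnerability. However, you have misread what an $\mname{NDI}$ vulnerability is, and this breaks the argument. In the paper's definition, $\alpha$ is \emph{not} a sequence of $F$'s own actions: the definition requires $\dom^{\absn_0}(a)\not\pol^{\absn_0}F$ for every action $a$ occurring in $\alpha$, so $\alpha$ consists of actions of the coalition ${\cal F}:={}^{\not\pol^{\absn_0}}F$ of domains that must not interfere with $F$ (the secret ``High'' inputs), and the run that must not exist is one with $\act_{\cal F}(r)=\alpha$ and $\view_F(r)=\beta$. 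Your reading ($\alpha\in (A_F)^*$ and $\act_F(r)=\alpha$) makes the notion about what $F$ can deduce concerning its \emph{own} actions, which is vacuous --- $\view_F(r)$ already contains $\act_F(r)$ --- and it is inconsistent with hypothesis (1) of the lemma: for $a\in A_G$ we have $\dom^{\absn_1}(a)=G\pol^{\absn_1}G$ by reflexivity of policies, so under your reading hypothesis (1) would force $\alpha=\varepsilon$.

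Once the definition is corrected, the step you describe as mere bookkeeping becomes the crux, and your argument for it points in the wrong direction. You argue: $A_F\subseteq A_G$ (from $F\subseteq G$), hence $\act_F(r)$ is the restriction of $\act_G(r)=\alpha$ to $A_F$, which equals $\alpha$ because all of $\alpha$ lies in $A_F$. In the actual setting the constrained coalitions are the complements ${\cal F}$ and ${\cal G}:={}^{\not\pol^{\absn_1}}G$, and $F\subseteq G$ yields the \emph{opposite} inclusion of action sets, $A_{\cal G}\subseteq A_{\cal F}$, since the non-interfering side shrinks as the observing side grows. From $\act_{\cal G}(r)=\alpha$ one therefore cannot conclude $\act_{\cal F}(r)=\alpha$ by restriction: the run $r$ may contain actions of domains in $G\setminus F$, which lie in $A_{\cal F}\setminus A_{\cal G}$ and would appear in $\act_{\cal F}(r)$ but not in $\alpha$. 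This is precisely the delicate point of the lemma; the paper's own proof passes over it with the brief remark that ``by vulnerability $\alpha$ consists of actions by domains in ${\cal F}$ only,'' which by itself establishes only that $\alpha$ is a subsequence of $\act_{\cal F}(r)$, not equality. Any complete write-up must confront the possible occurrences of $G\setminus F$ actions in $r$ directly; the inclusion $A_F\subseteq A_G$ that your argument rests on is simply not the relevant one.
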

\begin{proof}
  Due to the prerequisites, it only remains to show the existence of a suitable $\beta'$.
  Let $\beta'$ be a $G$ view with $\proj{F}{G}(\beta') = \beta$.
  Set ${\cal F} := {}^{\not\pol_0}F$ and   
  ${\cal G} := {}^{\not\pol_1}G$.  
  Prerequisite 1 gives us $\alpha \in {A_{\cal G}}^*$.
  If there were a run $r$ of $\struct{M}^{\absn_1}$ with $\act_{\cal G}(r) = \alpha$ and $\view_G(r) = \beta'$,
  then the same run would satisfy $\act_{\cal F}(r) = \alpha$,
  since by vulnerability $\alpha$ consists of actions by domains in ${\cal F}$ only,
  and because we have $\view_F(r) = \proj{F}{G}(\view_{G}(r)) = \proj{F}{G}(\beta') = \beta$ by Lemma \ref{thm:projection-lemma},
  which contradicts the vulnerability of $\struct{M}^{\absn_1}$.
  Therefore, no such run can exist and $(G, \alpha, \beta', \pol^{\absn_1})$ is a vulnerability in $\struct{M}^{\absn_1}$.
  \qed
\end{proof}

Just as with $\mname{GN}$, some relationships are trivial and can be seen from Definition \ref{def:abstraction-definitions} right away.

\begin{proposition}
  \label{thm:NDI-trivial-implications}
  The notion $\mname{C{-}NDI}$ implies $\mname{H{-}NDI}$ and $\mname{L{-}NDI}$.
\end{proposition}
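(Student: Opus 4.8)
The plan is to exploit that $\mname{C{-}NDI}$ is defined by a quantification over \emph{all} cuts of $D$, whereas $\mname{H{-}NDI}$ and $\mname{L{-}NDI}$ only restrict attention to the two special families $\{\absUp{u}{} \mid u \in D\}$ and $\{\absDown{u}{} \mid u \in D\}$. Hence both implications reduce to a single observation: each High-up abstraction $\absUp{u}{}$ and each Low-down abstraction $\absDown{u}{}$ is itself a cut of $D$ with respect to $\pol$. Once this is established, $\mname{C{-}NDI}$-security -- that is, $\mname{NDI}$-security of $\struct{M}^{\absn}$ for $\pol^{\absn}$ at every cut $\absn$ -- specializes directly to the defining conditions of $\mname{H{-}NDI}$ and $\mname{L{-}NDI}$.

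First I would check $\absUp{u}{} = (u^{\pol},\, D \setminus u^{\pol})$. Its two components are disjoint and cover $D$, so they form an abstraction, and reflexivity gives $u \in u^{\pol}$, so the High component is nonempty. For the no-flow condition of a cut, suppose $x \in u^{\pol}$ and $y \in D \setminus u^{\pol}$ with $x \pol y$; then $u \pol x$ and $x \pol y$, so transitivity of $\pol$ yields $u \pol y$, i.e. $y \in u^{\pol}$, contradicting $y \notin u^{\pol}$. Thus there is no edge from the High to the Low component and $\absUp{u}{}$ is a cut. This is precisely the transitivity argument already used in the proof of Theorem \ref{thm:CGN=HGN}.

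The dual check for $\absDown{u}{} = (D \setminus {}^{\pol}u,\, {}^{\pol}u)$ is symmetric: the components partition $D$, reflexivity gives $u \in {}^{\pol}u$, and if $x \in D \setminus {}^{\pol}u$ and $y \in {}^{\pol}u$ satisfied $x \pol y$, then $y \pol u$ together with transitivity would force $x \pol u$, i.e. $x \in {}^{\pol}u$, a contradiction; hence $\absDown{u}{}$ is a cut as well. With both families shown to consist of cuts, the proposition follows immediately, so there is no genuine obstacle here -- the statement is a pure instance of specializing a universal quantifier to a subfamily. The only points requiring any care are the two appeals to transitivity of $\pol$ and the light use of reflexivity to rule out degenerate (empty-High or empty-Low) abstractions.
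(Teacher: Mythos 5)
Your proposal is correct and takes essentially the same route as the paper: the paper states this proposition without proof as something that ``can be seen from Definition \ref{def:abstraction-definitions} right away,'' the point being exactly yours---$\mname{C{-}NDI}$ quantifies over all cuts, while $\mname{H{-}NDI}$ and $\mname{L{-}NDI}$ quantify over the subfamilies of High-up and Low-down abstractions, which are themselves cuts. Your explicit transitivity check that $\absUp{u}{}$ and $\absDown{u}{}$ satisfy the no-flow condition of a cut is precisely the detail the paper leaves implicit (and uses elsewhere, e.g.\ inside the proof of Theorem \ref{thm:CGN=HGN}), so you have simply filled in what the paper regards as trivial.
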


Contrary to $\mname{GN}$, however, where High-up $\mname{GN}$ is strictly contained in Low-down $\mname{GN}$,
we have instead the somewhat surprising situation that the corresponding variants of $\mname{NDI}$ are incomparable.
The next theorem provides the necessary examples.

\begin{theorem}
  \label{thm:LNDI-incomparable-HNDI}
  The notions $\mname{L{-}NDI}$ and $\mname{H{-}NDI}$ are incomparable with respect to implication.
\end{theorem}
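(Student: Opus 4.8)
The plan is to establish incomparability by exhibiting two separating systems, one for each direction of non-implication. For the direction $\mname{L{-}NDI} \not\Rightarrow \mname{H{-}NDI}$ I would reuse the system and policy $\pol$ from the proof of Theorem~\ref{thm:downcuts-dont-cut-it} (the collusion system of Figure~\ref{fig:GN-fruitfly} with $L_1 \pol H$, $L_2 \pol H$). That proof already shows this system is $\mname{L{-}GN}$-secure for $\pol$, so by Proposition~\ref{thm:straightforward-relationships} ($\mname{L{-}GN} \subseteq \mname{L{-}NDI}$) it is $\mname{L{-}NDI}$-secure. It remains to show it is not $\mname{H{-}NDI}$-secure, which I would do at the one High-up cut $\absUp{H}{} = (\set{H},\set{L_1,L_2})$. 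Taking the run $r := s_0 h s_0' \ell_1 s_5 \ell_2 s_{13}$ from that proof, its Low view $\beta := \view_{\set{L_1,L_2}}(r)$ ends in the odd-parity observation ${0 \atop 1}$. The High action sequence $\varepsilon \in A_{\set{H}}^{*}$ is not $\set{L_1,L_2}$-compatible with $\beta$, since every run performing no $H$-action remains in the upper part of the system, whose only reachable final $\set{L_1,L_2}$-observations after $\ell_1,\ell_2$ are the even-parity ones ${0 \atop 0}$ and ${1 \atop 1}$. Hence $\set{H} \flows_I \set{L_1,L_2}$ in $\struct{M}^{\absUp{H}{}}$, so $\struct{M}$ is not $\mname{H{-}NDI}$-secure. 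This direction is routine given the earlier results.

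For the direction $\mname{H{-}NDI} \not\Rightarrow \mname{L{-}NDI}$ I would reduce to the setwise notion: since $\mname{L{-}NDI} \Rightarrow \mname{NDI}_{sw}$ (which follows by applying Lemma~\ref{thm:NDI-vuln-lemma} to translate a setwise vulnerability at a domain $v$ into a vulnerability of the Low-down cut $\absDown{v}{}$), it suffices to produce a single system that is $\mname{H{-}NDI}$-secure but not $\mname{NDI}_{sw}$-secure; such a system cannot be $\mname{L{-}NDI}$-secure. The intended behaviour is that some domain $v$ learns, via a genuine coalition of its non-interferers ${}^{\not\pol}v$, disjunctive information about their combined activity, so that $\mname{NDI}_{sw}$ fails at $v$, while every High-up cut remains secure. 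Crucially, the failing Low-down/setwise cut must have a subject side that is \emph{not} a principal up-set of $\pol$, so that it never coincides with a High-up cut $\absUp{u}{}$; this forces the use of at least two mutually incomparable ``secret'' domains and a correspondingly richer policy than that of Example~\ref{ex:NDI-fruitfly}.

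The hard part, and the step I expect to be the main obstacle, is engineering this second system so that \emph{all} High-up cuts are secure. The danger is the phenomenon already visible for Example~\ref{ex:NDI-fruitfly}: a High-up cut $\absUp{s_1}{}$ absorbs the other secret domains into the attacker coalition $D \setminus s_1^{\pol}$, and if the leaked information about the secret coalition can be localized to $s_1$ once the companion secrets are known, the attacker reconstructs it and the cut fails. A deterministic ``effect'' or any count/parity function of the secrets' behaviour localizes in exactly this way, so the separating leak must be made genuinely irreducible through nondeterminism, dual to the individual-cover trick in Figure~\ref{fig:GN-fruitfly}: each secret domain's contribution must admit nondeterministic cover that survives even when the remaining secrets are revealed to the attacker, while the joint view of the single victim $v$ still pins down that \emph{some} secret acted. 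Verifying $\mname{H{-}NDI}$-security then amounts to checking, cut by cut, that for each High-up attacker this cover makes every subject action sequence compatible with every attacker view, using Lemma~\ref{thm:projection-lemma} to relate views across abstractions.

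An alternative route for this direction, worth keeping in reserve, is to exploit the non-monotonicity of $\mname{L{-}NDI}$ asserted in Theorem~\ref{thm:main}: starting from a system that is $\mname{L{-}NDI}$-secure (hence also $\mname{H{-}NDI}$-secure if its High-up cuts are well-behaved) for a policy $\pol$, one adds policy edges to obtain $\pol' \supseteq \pol$ for which a newly created Low-down coalition becomes a cut that leaks, while no new High-up cut is introduced that leaks. If such a witness is already available from the monotonicity analysis, then $(\struct{M},\pol')$ directly separates $\mname{H{-}NDI}$ from $\mname{L{-}NDI}$, and the two separating systems together yield the claimed incomparability; the strictness and placement in Figure~\ref{fig:main-summary} follow immediately.
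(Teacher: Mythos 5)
Your first direction ($\mname{L{-}NDI} \not\Rightarrow \mname{H{-}NDI}$) is correct and coincides with the paper's own argument: the paper uses the same system and policy from Figure~\ref{fig:GN-fruitfly}, and refutes $\mname{H{-}NDI}$ at the cut $\absUp{H}{}$ with the same run and the same $\varepsilon$-incompatibility (your final observation ${0 \atop 1}$ is in fact what the paper intends; its printed view has a typo there). The only difference is that you derive $\mname{L{-}NDI}$-security from $\mname{L{-}GN}$-security via Proposition~\ref{thm:straightforward-relationships}, whereas the paper verifies the two Low-down cuts directly; your shortcut is sound.

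The second direction is where the proposal has a genuine gap: no witness system is ever produced, and the target you set for it is provably unachievable. You ask for a victim $v$ whose view ``pins down that \emph{some} secret acted'' while every High-up cut stays secure; but that is an incompatibility between a view $\beta$ of $v$ and the sequence $\alpha = \varepsilon$, and such $\varepsilon$-vulnerabilities cannot coexist with $\mname{H{-}NDI}$-security for any transitive policy. Indeed, among runs attaining $\beta$ choose $r$ whose set $S$ of acting domains from ${}^{\not\pol}v$ is minimal ($S \neq \emptyset$ by the vulnerability), and pick $w \in S$. Transitivity gives $w^{\pol} \subseteq {}^{\not\pol}v$, so $v$ lies in the Low side $G = D \setminus w^{\pol}$ of $\absUp{w}{}$; NDI-security of this cut applied to $\varepsilon$ and $\view_G(r)$ yields a run $r'$ with no $w^{\pol}$-actions and $\view_G(r') = \view_G(r)$. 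By Lemma~\ref{thm:projection-lemma}, $r'$ still attains $\beta$, yet its set of acting non-interferers is $S \setminus w^{\pol} \subsetneq S$, contradicting minimality (or, if empty, the vulnerability itself). Consequently any separating example must leak \emph{pure interleaving} information: it must exclude some non-empty $\alpha$ mixing actions of incomparable domains while $\varepsilon$, every single-domain sequence, and some interleaving of every pair remain compatible; your nondeterministic-cover idea, which protects exactly the $\varepsilon$ and single-domain cases, does not address this, and neither does your fallback via non-monotonicity. For comparison, the paper's own proof of this direction simply asserts that the system of Example~\ref{ex:NDI-fruitfly} is $\mname{H{-}NDI}$-secure, based on a claim that every attainable coalition view of $\absUp{H_1}{}$ in which $L$'s observation becomes $1$ contains at least one $h_2$. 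Your ``localization'' remark correctly refutes that claim: the run $s_I h_1 s_0 \ell s_1$ gives the coalition $\set{H_2,L}$ the view ${0 \atop \bot}\,\ell\,{1 \atop \bot}$, which contains no $h_2$ and is incompatible with $\varepsilon$, so the paper's witness fails $\mname{H{-}NDI}$ by exactly the mechanism above (this is also what Lemma~\ref{thm:NDI-vuln-lemma} predicts, its first prerequisite being vacuous for $\alpha=\varepsilon$). So you have correctly diagnosed why the paper's example cannot work, but your proposal supplies no valid replacement, and this half of the incomparability statement remains unproven.
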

\begin{proof}
  \emph{$\mname{H{-}NDI}$ does not imply $\mname{L{-}NDI}$:}
  Consider the system and policy in Example \ref{ex:NDI-fruitfly}.
  In the proof of Proposition \ref{thm:NDI-strictly-contains-NDIN} it is shown that the system violates $\mname{NDI}_{sw}$-security with respect to the cut $\absDown{L}{}$,
  which groups $H_1$ and $H_2$ together.
  It is therefore not $\mname{L{-}NDI}$-secure.
  However, it is $\mname{H{-}NDI}$-secure for the depicted policy.
  To show this, it is enough to prove $\mname{NDI}$-security of the system with respect to $\absUp{H_1}{}$,
  because the case $\absUp{H_2}{}$ is symmetrical to it.

  Set $\absn := \absUp{H_1}{}$, ${\cal H} := \set{H_1}$ and ${\cal L} := \set{H_2,L} \setminus H$.
  Then $\absn = ({\cal H},{\cal L})$ and ${\cal H} \not\pol^\absn {\cal L}$.
  Let $\alpha \in \set{h_1}^*$ and $\beta$ be an ${\cal L}$ view.
  Then $\alpha$ has the form ${h_1}^k$ for $k \geq 0$ and $\beta$ is an element of the language described by one of the regular expressions
  ${0 \atop \bot} (\ell {0 \atop \bot})^n (h_2 {0 \atop \bot})^m$
  for $n,m \geq 0$, or
  ${0 \atop \bot} (\ell  {0 \atop \bot})^n (h_2 {0 \atop \bot})^m \ell {1 \atop \bot} ((l + h_2) {1 \atop \bot})^k$
  for $n\geq 0$, $m \geq 1$ and $k \geq 0$,
  where ${\cal L}$ observations are noted as ${ \obs_L(\cdot) \atop \obs_{H_2}(\cdot) }$.
  It is clear that there is a run that demonstrates the compatibility of $\alpha$ and $\beta$:
  the state $s_1$ can be visited $k$ times for performing $\alpha$.
  We therefore have ${\cal H} \not\flows_I {\cal L}$ and conclude that the system is $\mname{H{-}NDI}$-secure.

  \emph{$\mname{L{-}NDI}$ does not imply $\mname{H{-}NDI}$:}
  Consider the system in Figure \ref{fig:GN-fruitfly}.
  To prove it $\mname{L{-}NDI}$-secure, it suffices to do so for the cut $\absn := \absDown{L_1}{}$,
  because the case of the only other $\absDown{\cdot}{}$ cut is symmetric to it.
  Views perceived by $\set{L_1}$ (here, we identify $\set{L_1}$ with $L_1$) have the form $0 (\ell_1 0)^n$ or $0 (\ell_1 1)^n$ for $n \geq 0$.
  All these views are $\set{L_1}$ compatible with all $\alpha \in \set{l_2, h}^*$,
  because they can be attained by the system performing $l_1$ only,
  or $\alpha$ can be added to a run by looping at states $s_0$, $s_1$, $s_3$, $s_9$ or $s_{11}$.
  The system is therefore $\mname{L{-}NDI}$-secure.
  
  However, it is not $\mname{H{-}NDI}$-secure;
  take the cut $\absUp{H}{}$,
  set ${\cal H} := \set{H}$ and ${\cal L} := \set{L_1,L_2}$,
  and consider the action sequence $\varepsilon$ performed by domain ${\cal H}$.
  The ${\cal L}$ view $\beta := {\bot \atop \bot} \ell_1 {0 \atop \bot} \ell_2 {0 \atop \bot}$ can only be attained if the first action performed in the system is $h$.
  Therefore $\varepsilon$ and $\beta$ are not compatible and we have ${\cal H} \flows_I {\cal L}$.
  \qed
\end{proof}

These results show that Ryan's technique is not `complete' for Nondeducibility on Inputs, as
High-up $\mname{NDI}$ and Low-down $\mname{NDI}$ are incomparable.
The question if High-up $\mname{NDI}$ is complete can now be answered, because if
High-up $\mname{NDI}$ implied Cut $\mname{NDI}$, then High-up $\mname{NDI}$ would
also imply Low-down $\mname{NDI}$ due to Proposition \ref{thm:NDI-trivial-implications},
which would contradict the result from Theorem \ref{thm:LNDI-incomparable-HNDI}.
With symmetry, the same argument holds for High-up $\mname{NDI}$ and Low-down $\mname{NDI}$ swapped,
and we have the following corollary.

\begin{corollary}
  $\mname{C{-}NDI}$ is strictly contained in both $\mname{H{-}NDI}$ and $\mname{L{-}NDI}$.
\end{corollary}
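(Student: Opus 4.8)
The plan is to reduce both strictness claims to the two results that immediately precede the corollary. Proposition~\ref{thm:NDI-trivial-implications} already supplies the two containments $\mname{C{-}NDI} \Rightarrow \mname{H{-}NDI}$ and $\mname{C{-}NDI} \Rightarrow \mname{L{-}NDI}$, so the containment part of the statement is done and only properness of each containment remains. I would derive properness from the incomparability established in Theorem~\ref{thm:LNDI-incomparable-HNDI}.

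For the strictness of $\mname{C{-}NDI}$ inside $\mname{H{-}NDI}$, I would argue by contradiction. Suppose the containment were not strict, so that conversely $\mname{H{-}NDI} \Rightarrow \mname{C{-}NDI}$. Composing with $\mname{C{-}NDI} \Rightarrow \mname{L{-}NDI}$ from Proposition~\ref{thm:NDI-trivial-implications} would yield $\mname{H{-}NDI} \Rightarrow \mname{L{-}NDI}$, contradicting the half of Theorem~\ref{thm:LNDI-incomparable-HNDI} stating that $\mname{H{-}NDI}$ does not imply $\mname{L{-}NDI}$. Symmetrically, for the strictness of $\mname{C{-}NDI}$ inside $\mname{L{-}NDI}$, assuming $\mname{L{-}NDI} \Rightarrow \mname{C{-}NDI}$ and composing with $\mname{C{-}NDI} \Rightarrow \mname{H{-}NDI}$ gives $\mname{L{-}NDI} \Rightarrow \mname{H{-}NDI}$, contradicting the other half of Theorem~\ref{thm:LNDI-incomparable-HNDI}. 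If one prefers explicit witnesses, the separating systems already appear in the proof of that theorem: the system of Example~\ref{ex:NDI-fruitfly} is $\mname{H{-}NDI}$-secure yet fails $\mname{L{-}NDI}$, hence fails $\mname{C{-}NDI}$ by Proposition~\ref{thm:NDI-trivial-implications}, separating $\mname{C{-}NDI}$ from $\mname{H{-}NDI}$; and the system of Figure~\ref{fig:GN-fruitfly} is $\mname{L{-}NDI}$-secure yet fails $\mname{H{-}NDI}$, hence fails $\mname{C{-}NDI}$, separating $\mname{C{-}NDI}$ from $\mname{L{-}NDI}$.

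There is no real technical obstacle in the corollary itself, since all the substantive work lies in the two cited results. The only point demanding care is bookkeeping of directions: Theorem~\ref{thm:LNDI-incomparable-HNDI} must be invoked once as ``$\mname{H{-}NDI}$ does not imply $\mname{L{-}NDI}$'' and once as ``$\mname{L{-}NDI}$ does not imply $\mname{H{-}NDI}$'', so one should state explicitly that both non-implications are used, each refuting one of the two hypothetical non-strict containments.
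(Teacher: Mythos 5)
Your proposal is correct and follows essentially the same route as the paper: the containments come from Proposition~\ref{thm:NDI-trivial-implications}, and strictness is obtained by contradiction, since a converse implication composed with the trivial containments would make $\mname{H{-}NDI}$ and $\mname{L{-}NDI}$ comparable, contradicting Theorem~\ref{thm:LNDI-incomparable-HNDI}. The explicit witnesses you mention are the same separating systems the paper uses inside that theorem, so nothing in your argument diverges from the paper's.
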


The previous theorem alone doesn't yield evidence in favor of High-up or Low-down.
As with $\mname{GN}$, one might expect the High-up variant to be more adequate,
but it turns out that this isn't the case.
We can argue against High-up using the system presented in Example \ref{ex:NDI-fruitfly}.
As shown in the proof of Theorem \ref{thm:LNDI-incomparable-HNDI}, it is High-up $\mname{NDI}$- but
not Low-down $\mname{NDI}$-secure due to the cut $\absDown{L}{}$ introducing a vulnerability.
The cut $\absDown{L}{}$ aggregates the domains into $\set{L}$ and ${}^{\not\pol}L = \set{H_1, H_2}$.
But if $L$ can infer from observing a certain view that the domains in ${}^{\not\pol}L$ did not perform some action sequence,
this means that the system is not $\mname{NDI}_{sw}$-secure.
In other words, $\mname{NDI}_{sw}$ and Low-down $\mname{NDI}$ are equivalent notions on the example.

\begin{corollary}
  \label{corollary:HNDI-doesnt-imply-NDI}
  $\mname{H{-}NDI}$ does not imply $\mname{NDI}_{sw}$.
\end{corollary}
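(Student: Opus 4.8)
The plan is to reuse the system of Example~\ref{ex:NDI-fruitfly} as a single witness that is $\mname{H{-}NDI}$-secure but fails $\mname{NDI}_{sw}$. Both halves of this claim have already been established in earlier proofs, so the work consists mainly in assembling them and recording that one counterexample does double duty. Concretely, I would first fix the system and policy of that example, with $\dom(l)=L$, $\dom(h_1)=H_1$, $\dom(h_2)=H_2$, and the policy giving $L \pol H_1$, $L \pol H_2$, so that ${}^{\not\pol}L = \set{H_1,H_2}$.

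For the positive half I would invoke the argument already carried out in the proof of Theorem~\ref{thm:LNDI-incomparable-HNDI}, where this same system is shown to be $\mname{H{-}NDI}$-secure. The point is that the only High-up cuts to be checked are $\absUp{H_1}{}$ and $\absUp{H_2}{}$, which are symmetric; in each, the High coalition is a singleton $\set{H_i}$, and every High action sequence ${h_i}^k$ is $\mathcal L$-compatible with every $\mathcal L$ view, since the required $h_i$ actions can always be inserted by looping at the appropriate state. Hence ${\cal H} \not\flows_I {\cal L}$ for each such cut, and the system is $\mname{H{-}NDI}$-secure.

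For the negative half I would point to the computation in the proof of Proposition~\ref{thm:NDI-strictly-contains-NDIN}: taking $u = L$, the $L$ view $0\,\ell\,1$ is not $L$-compatible with the empty sequence $\varepsilon \in \set{h_1,h_2}^*$, because reaching an $L$ observation of $1$ requires leaving $s_I$, which is possible only after some $h_1$ or $h_2$ has occurred. This exhibits $\set{H_1,H_2} \flows_I L$, i.e. ${}^{\not\pol}L \flows_I L$, so the system is not $\mname{NDI}_{sw}$-secure. Combining the two halves gives a system that is $\mname{H{-}NDI}$-secure yet not $\mname{NDI}_{sw}$-secure, which is exactly the asserted separation.

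There is no genuine technical obstacle here; the only thing to be careful about is keeping the two tests conceptually straight. The subtlety — already isolated in the surrounding discussion — is that a singleton High coalition $\set{H_i}$ in a High-up cut can never detect the \emph{disjunctive} information ``at least one of $H_1, H_2$ acted'', whereas the setwise test, which quantifies over the whole set ${}^{\not\pol}L = \set{H_1,H_2}$ at once, does detect it. Making this contrast explicit is what turns the reuse of the example into a clean proof of the corollary.
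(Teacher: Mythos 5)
Your proposal is correct and matches the paper's own argument: the paper also uses the system of Example~\ref{ex:NDI-fruitfly} as the single witness, taking its $\mname{H{-}NDI}$-security from the proof of Theorem~\ref{thm:LNDI-incomparable-HNDI} and its $\mname{NDI}_{sw}$-failure from the incompatibility of the $L$ view $0\,\ell\,1$ with $\varepsilon \in \set{h_1,h_2}^*$ established in the proof of Proposition~\ref{thm:NDI-strictly-contains-NDIN}. The only cosmetic difference is that the paper phrases the negative half via the coincidence of $\mname{NDI}_{sw}$ with Low-down $\mname{NDI}$ on this example (the cut $\absDown{L}{}$), whereas you cite the underlying computation directly; the substance is identical.
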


The weakness of High-up $\mname{NDI}$ is, similar to Low-down $\mname{GN}$, that it doesn't group
domains into High that are incomparable in the policy, while $\mname{NDI}$ does.
The definition of $\mname{NDI}$ is a natural extension of the pointwise application of two-level Nondeducibility on Inputs,
so there is an argument that High-up is not adequate in the setting of $\mname{NDI}$.
The case against it can be made even stronger by proving that Low-down $\mname{NDI}$ does not have this undesirable property.
which is what the next result accomplishes.

\begin{theorem}
  \label{thm:LNDI-implies-NDI}
  $\mname{L{-}NDI}$ implies $\mname{NDI}_{sw}$.
\end{theorem}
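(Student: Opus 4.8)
The plan is to argue by contraposition, paralleling the containment half of the proof of Theorem~\ref{thm:GN-neq-downGN}: for every $\mname{NDI}_{sw}$-violation I will exhibit a single Low-down cut at which $\struct{M}$ fails to be $\mname{NDI}$-secure. So suppose $\struct{M}$ is not $\mname{NDI}_{sw}$-secure for $\pol$. Then there is a domain $u$ with ${}^{\not\pol}u \flows_I u$, i.e.\ there are $\alpha \in {A_{{}^{\not\pol}u}}^*$ and $\beta \in \Views_u(\struct{M})$ such that no $r \in \Runs(\struct{M})$ satisfies both $\act_{{}^{\not\pol}u}(r) = \alpha$ and $\view_u(r) = \beta$; this is exactly an $\mname{NDI}$ vulnerability $(u,\alpha,\beta,\pol)$ of $\struct{M}$.

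The crucial observation is that the Low-down cut $\absn := \absDown{u}{} = ({\cal H},{\cal L})$, with ${\cal H} = D \setminus {}^{\pol}u$ and ${\cal L} = {}^{\pol}u$, has its High component equal precisely to ${}^{\not\pol}u$, the very set over which $\mname{NDI}_{sw}$ quantifies at $u$. Consequently $\alpha \in {A_{\cal H}}^*$, and since $\absn$ is a cut we have ${\cal H} \not\pol^\absn {\cal L}$, so ${}^{\not\pol^\absn}{\cal L} = \set{\cal H}$ and $\alpha$ is a legal attack sequence of the abstract attacker ${\cal H}$ against the abstract victim ${\cal L}$. Moreover, reflexivity gives $u \pol u$, hence $u \in {\cal L}$; identifying $u$ with the singleton $\set{u}$ of the trivial abstraction we obtain $\set{u} \subseteq {\cal L}$, so Lemma~\ref{thm:projection-lemma} applies to $F = \set{u}$ and $G = {\cal L}$.

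To produce the witness, I fix a run $r_0$ realizing $\beta \in \Views_u(\struct{M})$ and set $\beta' := \view_{\cal L}(r_0) \in \Views_{\cal L}(\struct{M}^\absn)$; by Lemma~\ref{thm:projection-lemma} (and the identification of $u$ with $\set{u}$) we then have $\proj{\set{u}}{{\cal L}}(\beta') = \view_u(r_0) = \beta$. I claim $({\cal L},\alpha,\beta',\pol^\absn)$ is an $\mname{NDI}$ vulnerability of $\struct{M}^\absn$. Indeed, suppose some run $r$ satisfied $\act_{\cal H}(r) = \alpha$ and $\view_{\cal L}(r) = \beta'$. Then Lemma~\ref{thm:projection-lemma} would give $\view_u(r) = \proj{\set{u}}{{\cal L}}(\view_{\cal L}(r)) = \proj{\set{u}}{{\cal L}}(\beta') = \beta$, while $\act_{{}^{\not\pol}u}(r) = \act_{\cal H}(r) = \alpha$ (since the abstract assignment $\dom^\absn$ sends exactly the actions of ${}^{\not\pol}u = {\cal H}$ to ${\cal H}$), contradicting the original vulnerability $(u,\alpha,\beta,\pol)$. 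Hence no such $r$ exists, $\struct{M}^{\absDown{u}{}}$ is not $\mname{NDI}$-secure for $\pol^{\absDown{u}{}}$, and therefore $\struct{M}$ is not $\mname{L{-}NDI}$-secure, which completes the contrapositive.

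The only genuine content is the identity ${\cal H} = D \setminus {}^{\pol}u = {}^{\not\pol}u$, which singles out the Low-down cut at $u$ as the canonical witness and is why the argument needs only the projection lemma rather than the full vulnerability-translation machinery of Lemma~\ref{thm:NDI-vuln-lemma}. The main point to verify carefully is that the projected view $\beta'$ is actually realizable---this is exactly what the witnessing run $r_0$ supplies---and that Lemma~\ref{thm:projection-lemma} pulls any hypothetical attack run on $\alpha$ back to a violation at the singleton level $u$; the remaining steps are routine bookkeeping about abstract domain assignments and the agreement of $\act_{\cal H}$ in $\struct{M}^\absn$ with $\act_{{}^{\not\pol}u}$ in $\struct{M}$.
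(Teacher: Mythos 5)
Your proof is correct and follows essentially the same route as the paper: contraposition using the Low-down cut $\absDown{u}{}$, whose High component is exactly ${}^{\not\pol}u$, with Lemma~\ref{thm:projection-lemma} pulling a hypothetical abstract attack run back to a violation at $u$. The only difference is that you inline (and even slightly sharpen, by constructing $\beta'$ from an explicit realizing run) the vulnerability-translation step that the paper factors out as Lemma~\ref{thm:NDI-vuln-lemma}, applied there with $F=\set{u}$ and $G={\cal L}$; this changes nothing of substance.
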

\begin{proof}
  Let $\struct{M}$ be a system with domain set $D$ and $\pol$ be a policy over $D$.
  Assume $\struct{M}$ is not $\mname{NDI}_{sw}$-secure for $\pol$,
  then there are $u \in D$, $\alpha \in {A_{{}^{\not\pol}u}}^*$ and $\beta \in \Views_u(\struct{M})$ so that
  $(u,\alpha,\beta,\pol)$ is a vulnerability of $\struct{M}$.
  Clearly, for the abstraction ${\cal D}$ given by $\pset{ \set{v} }{ v \in D }$
  the system $\struct{M}^{\cal D}$ has the vulnerability $(\set{u}, \alpha, \beta, \pol^{\cal D})$.
  
  Consider the cut $\absn := \absDown{u}{}$,
  we use Lemma \ref{thm:NDI-vuln-lemma} to show that there is a vulnerability of $\struct{M}^{\absn}$.
  Set ${\cal L} := {}^{\pol}u$ and ${\cal H} := D \setminus {}^{\pol}u$.
  We obviously have $\set{u} \subseteq {\cal L}$, so it only remains to show that $\alpha$ consists
  only of actions performed by domains that may not interfere with ${\cal L}$ with respect to $\pol^{\absn}$.
  Let $a$ be an action that occurs in $\alpha$.
  Since $\dom(a) \not\pol u$ by vulnerability in $\struct{M}^{\cal D}$ we get that
  $\dom(a) \not\in {\cal L}$ by choice of $\absn$, which means $\dom(a) \in {\cal H}$,
  and this implies $\dom^{\absn}(a) = {\cal H}$ and therefore $\dom^{\absn}(a) \not\pol^{\absn} {\cal L}$.

  Application of Lemma \ref{thm:NDI-vuln-lemma} now gives us a vulnerability of $\struct{M}^{\absn}$,
  which means that $\struct{M}$ is not $\mname{L{-}NDI}$-secure for $\pol$.
  \qed
\end{proof}

For the last relationship, the notion High-up $\mname{NDI}$ is compared with $\mname{NDI}_{pw}$.
Recall that $\mname{NDI}_{pw}$ doesn't properly deal with combined behaviour,
but this is what one would expect from a sensible Nondeducibility notion for multi-domain policies.
The statement made by the next proposition therefore shouldn't be interpreted as redeeming.

\begin{proposition}
  $\mname{H{-}NDI}$ implies $\mname{NDI}_{pw}$.
\end{proposition}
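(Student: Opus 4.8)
The plan is to argue by contraposition, following the same pattern as the proof of Theorem~\ref{thm:LNDI-implies-NDI}. Assume $\struct{M}$ is not $\mname{NDI}_{pw}$-secure for $\pol$. Then there are domains $p,q \in D$ with $p \not\pol q$ and $p \flows_I q$, so we may fix $\alpha \in {A_p}^*$ and $\beta \in \Views_q(\struct{M})$ that are not $q$ compatible, i.e.\ no run $r$ satisfies both $\act_p(r)=\alpha$ and $\view_q(r)=\beta$. The aim is to exhibit a single domain $u$ for which $\struct{M}^{\absUp{u}{}}$ is not $\mname{NDI}$-secure, since this is precisely what it means for $\struct{M}$ to fail $\mname{H{-}NDI}$-security.

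First I would lift this pointwise failure to a (setwise) $\mname{NDI}$ vulnerability of the fully refined system. Writing ${\cal D} := \pset{\set{w}}{w \in D}$ for the trivial abstraction and identifying each singleton with its member, I claim $(\set{q}, \alpha, \beta, \pol^{\cal D})$ is a vulnerability of $\struct{M}^{\cal D}$. The side condition that every action occurring in $\alpha$ has a domain not interfering with $\set{q}$ is immediate, because $\alpha \in {A_p}^*$ and $p \not\pol q$. For the main condition, suppose some run $r$ satisfied $\act_{{}^{\not\pol}\set{q}}(r)=\alpha$ and $\view_q(r)=\beta$; since $\alpha$ consists solely of $p$-actions and $p \in {}^{\not\pol}\set{q}$, the $p$-subsequence of $\act_{{}^{\not\pol}\set{q}}(r)=\alpha$ is again $\alpha$, so $\act_p(r)=\alpha$, contradicting the choice of $\alpha,\beta$. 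Hence no such $r$ exists.

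Next I would take $u := p$ and apply Lemma~\ref{thm:NDI-vuln-lemma} to translate this vulnerability up to the High-up cut $\absn := \absUp{p}{}$, with the trivial abstraction playing the role of $\absn_0$ (exactly as in the proof of Theorem~\ref{thm:LNDI-implies-NDI}). Set ${\cal H} := p^{\pol}$ and ${\cal L} := D \setminus p^{\pol}$, so $\absn = ({\cal H},{\cal L})$. To discharge the hypotheses of the lemma with $F := \set{q}$ and $G := {\cal L}$, I would check that $\set{q} \subseteq {\cal L}$, which holds because $p \not\pol q$ forces $q \notin p^{\pol}$, and that every action of $\alpha$ is non-interfering with ${\cal L}$ in $\pol^{\absn}$: each such action is a $p$-action with $\dom^{\absn}(a)={\cal H}$ (since $p \in p^{\pol}$ by reflexivity), and ${\cal H} \not\pol^{\absn} {\cal L}$ because $\absn$ is a cut. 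Lemma~\ref{thm:NDI-vuln-lemma} then supplies an ${\cal L}$ view $\beta'$ making $({\cal L}, \alpha, \beta', \pol^{\absn})$ a vulnerability of $\struct{M}^{\absn}$, so $\struct{M}^{\absUp{p}{}}$ is not $\mname{NDI}$-secure and $\struct{M}$ is not $\mname{H{-}NDI}$-secure, completing the contraposition.

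I expect the only genuinely delicate point to be the lifting step of the second paragraph, which bridges the pointwise input constraint (on $\act_p$ alone) and the setwise constraint (on $\act_{{}^{\not\pol}\set{q}}$). The argument goes through precisely because $\alpha$ contains only $p$-actions, so pinning the combined non-interfering input to $\alpha$ automatically pins $p$'s input to $\alpha$ as well; this is exactly the asymmetry that lets High-up suffice for the weak pointwise notion even though, by Corollary~\ref{corollary:HNDI-doesnt-imply-NDI}, it is too weak for the setwise version $\mname{NDI}_{sw}$.
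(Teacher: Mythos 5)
Your proof is correct and takes essentially the same route as the paper's: contraposition, lifting the pointwise failure of $\mname{NDI}_{pw}$ to a vulnerability $(\set{q},\alpha,\beta,\pol^{\cal D})$ of the trivially abstracted system, and then translating it to the High-up cut $\absUp{p}{}$ via Lemma~\ref{thm:NDI-vuln-lemma}. The paper's version is terser---it asserts the lifting step and the lemma's prerequisites without spelling them out---but your added detail (in particular, that $\alpha$ consisting solely of $p$-actions is what lets the combined-input constraint pin down $\act_p$, and what lets the High-up cut capture the acting domain in High) is exactly the ``important point'' the paper itself highlights parenthetically.
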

\begin{proof}
  Let $\struct{M}$ be a system with domain set $D$ and $\pol$ a policy over $D$.
  Assume that $\struct{M}$ is not $\mname{NDI}_{pw}$-secure for $\pol$,
  then there are $u,v \in D$ with $u \not\pol v$ and $u \flows_I v$.
  We can proceed as in the proof of Theorem \ref{thm:LNDI-implies-NDI}:
  the system $\struct{M}^{\cal D}$ has a vulnerability $(\set{v},\alpha,\beta,\pol^{\cal D})$,
  where ${\cal D} := \pset{ \set{w} } { w \in D }$,
  and choosing $\absn := \absUp{u}{}$ yields a vulnerability of $\struct{M}^{\absn}$ via Lemma \ref{thm:NDI-vuln-lemma}.
  
  (An important point here is that $\alpha$ consists of actions by a single domain only,
  whereas in the proof of Theorem \ref{thm:LNDI-implies-NDI} the sequence $\alpha$ can contain actions by multiple domains.
  If only a single domain $u$ is acting, a High-up cut can capture $u$ in its abstracted High domain;
  in the case of multiple active domains it might not, see Example \ref{ex:NDI-fruitfly}.)
  \qed
\end{proof}


\subsection{Relationships between the GN and NDI Variants}


\begin{proposition}
  The notion $\mname{GN}_{pw}$ doesn't imply any of $\mname{H{-}NDI}$, $\mname{L{-}NDI}$ or $\mname{C{-}NDI}$.
  The notion $\mname{L{-}GN}$ does not imply $\mname{H{-}NDI}$ or $\mname{C{-}NDI}$.
\end{proposition}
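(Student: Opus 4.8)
The plan is to prove each non-implication by exhibiting a witnessing system, reusing the two variants of the example in Figure~\ref{fig:GN-fruitfly} together with the implications already recorded in Figure~\ref{fig:main-summary}. The organising principle is that non-implications propagate downward: since $\mname{C{-}NDI}$ implies both $\mname{H{-}NDI}$ and $\mname{L{-}NDI}$ (Proposition~\ref{thm:NDI-trivial-implications}), any system failing $\mname{H{-}NDI}$ or $\mname{L{-}NDI}$ also fails $\mname{C{-}NDI}$; and since $\mname{L{-}GN}$ implies $\mname{GN}_{pw}$ (Theorem~\ref{thm:GN-neq-downGN}), every $\mname{L{-}GN}$-secure witness is automatically $\mname{GN}_{pw}$-secure. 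Consequently only two concrete separating systems are needed.

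First I would take the system $\struct{M}$ of Figure~\ref{fig:GN-fruitfly} under its depicted policy. By Theorem~\ref{thm:downcuts-dont-cut-it} this system is $\mname{L{-}GN}$-secure, hence $\mname{GN}_{pw}$-secure, and in the proof of Theorem~\ref{thm:LNDI-incomparable-HNDI} it is shown \emph{not} to be $\mname{H{-}NDI}$-secure (the cut $\absUp{H}{}$ lets the coalition $\set{L_1,L_2}$ deduce that $h$ occurred first). By the downward-propagation remark it is then also not $\mname{C{-}NDI}$-secure. This single witness therefore establishes $\mname{GN}_{pw}\not\Rightarrow\mname{H{-}NDI}$, $\mname{GN}_{pw}\not\Rightarrow\mname{C{-}NDI}$, $\mname{L{-}GN}\not\Rightarrow\mname{H{-}NDI}$ and $\mname{L{-}GN}\not\Rightarrow\mname{C{-}NDI}$, which settles the entire second sentence and two of the three claims in the first.

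The only remaining claim is $\mname{GN}_{pw}\not\Rightarrow\mname{L{-}NDI}$, for which the first witness is useless, being $\mname{L{-}GN}$- and hence $\mname{L{-}NDI}$-secure. Here I would reuse the separating example from Theorem~\ref{thm:GN-neq-downGN}: the same underlying system, but with the policy extended by the edge $(L_1,L_2)$. Monotonicity of $\mname{GN}_{pw}$ (Proposition~\ref{thm:GN-monotonic}) keeps it $\mname{GN}_{pw}$-secure, while the Low-down cut $\absDown{L_2}{}$ now groups $L_1$ and $L_2$ into a single Low coalition, with $\set{H}$ as High. Because this abstraction $({\cal H},{\cal L})=(\set{H},\set{L_1,L_2})$ is literally the one analysed for the cut $\absUp{H}{}$ in Theorem~\ref{thm:LNDI-incomparable-HNDI}, the same $\set{L_1,L_2}$-view witnesses $\set{H}\flows_I\set{L_1,L_2}$; thus $\struct{M}^{\absDown{L_2}{}}$ is not $\mname{NDI}$-secure and $\struct{M}$ is not $\mname{L{-}NDI}$-secure.

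The step requiring most care is this last one. It is tempting to read off NDI-insecurity from the GN-insecurity already established in Theorem~\ref{thm:GN-neq-downGN}, but since $\mname{GN}$ is strictly stronger than $\mname{NDI}$ that inference is invalid. I must instead exhibit a genuine Nondeducibility violation for the coalition $\set{L_1,L_2}$ directly, which is exactly the computation carried out for the $\absUp{H}{}$ cut in Theorem~\ref{thm:LNDI-incomparable-HNDI}; it is the coincidence of the two abstractions $({\cal H},{\cal L})=(\set{H},\set{L_1,L_2})$ that lets me transfer that computation verbatim and close the argument.
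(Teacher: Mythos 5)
Your proof is correct and takes essentially the same route as the paper's: the same two witnesses (the system of Figure~\ref{fig:GN-fruitfly} with its depicted policy, and that system with the added edge $(L_1,L_2)$ and the cut $\absDown{L_2}{} = (\set{H},\set{L_1,L_2})$), combined with monotonicity of $\mname{GN}_{pw}$ and downward propagation of failures through $\mname{C{-}NDI} \Rightarrow \mname{H{-}NDI}, \mname{L{-}NDI}$. If anything, you are more careful than the paper at precisely the point you flag: the paper's write-up justifies the $\mname{H{-}NDI}$ failure by pointing at the $\mname{H{-}GN}$ failure of Theorem~\ref{thm:downcuts-dont-cut-it} (an inference in the wrong direction, since $\mname{H{-}GN}$ is the stronger notion), whereas you correctly invoke the genuine nondeducibility violation computed in the proof of Theorem~\ref{thm:LNDI-incomparable-HNDI}.
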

\begin{proof}
  The system depicted in Figure \ref{fig:GN-fruitfly}
  is $\mname{GN}_{pw}$- but not $\mname{H{-}GN}$-secure as argued in the proof of Theorem \ref{thm:downcuts-dont-cut-it}
  for the corresponding policy
  and therefore not $\mname{H{-}NDI}$-secure (since $\mname{H{-}GN}$ implies $\mname{H{-}NDI}$).
  As a trivial consequence, we get that $\mname{GN}_{pw}$ doesn't imply $\mname{C{-}NDI}$.
  Also, we have that $\mname{L{-}GN}$ doesn't imply $\mname{H{-}NDI}$,
  since otherwise $\mname{H{-}NDI}$ implied $\mname{L{-}GN}$ and, since $\mname{L{-}GN}$ implies $\mname{L{-}NDI}$,
  also $\mname{L{-}NDI}$, contradicting Theorem \ref{thm:LNDI-incomparable-HNDI}.
  As a consequence, we find that $\mname{L{-}GN}$ doesn't imply $\mname{C{-}NDI}$.

  To see that $\mname{GN}_{pw}$ doesn't imply $\mname{L{-}NDI}$,
  add the edge $(L_1,L_2)$ to the policy considered in the previous paragraph,
  and consider the cut $\absDown{L_2}{}$, which is given by $(\set{H}, \set{L_1,L_2})$
  and is the same cut used in the proof of Theorem \ref{thm:downcuts-dont-cut-it} to demonstrate a violation of $\mname{H{-}NDI}$-security.
  Therefore, the system is not $\mname{L{-}NDI}$-secure either.

  \qed
\end{proof}

If all results from this subsection are combined, we obtain exactly the containment diagram as claimed by Theorem \ref{thm:main}.


\subsection{Monotonicity}

The statement $u \not\pol v$ can be understood as a noninterference constraint
and adding the edge $u \pol v$ removes this constraint from a policy.
If a system is secure (for a sensible definition of `secure') and constraints are discarded from the policy,
it seems reasonable to expect that security is preserved.
In this subsection we investigate which of our notions support this intuition.

We have to compare cuts of the same domain set but with respect to different policies,
which is why we make explicit which policy a cut refers to by writing, for example, $\absUpPol{\cdot}{\pol}$.

If not mentioned otherwise, all systems in this subsection refer to their set of domains as $D$,
and we have two policies $\pol_0$ and $\pol_1$ with $\pol_0 \subseteq \pol_1$.

\begin{theorem}
  \label{thm:GN-monotonic}
  The notions $\mname{GN}$, $\mname{H{-}GN}$ and $\mname{C{-}GN}$ are monotonic.
\end{theorem}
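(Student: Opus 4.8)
The plan is to treat the three notions in order of increasing complexity, bootstrapping each from the previous one. The base case is $\mname{GN}_{pw}$ (the notion the theorem abbreviates as $\mname{GN}$; its two-domain specialisation is the case of a two-element domain set). The key observation driving the whole argument is that the properties $\mname{GN}^+(u,v)$ and $\mname{GN}^-(u,v)$ of Definition~\ref{def:GN} make no reference to the policy whatsoever: they constrain only runs, actions and views of the fixed system $\struct{M}$. A policy enters the definition of $\mname{GN}_{pw}$-security only by selecting \emph{which} pairs $(u,v)$ must satisfy these properties, namely those with $u \not\pol v$. So if $\pol \subseteq \pol'$, taking contrapositives, $u \not\pol' v$ implies $u \not\pol v$, whence the set of constraints imposed by $\pol'$ is a subset of those imposed by $\pol$. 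Security for $\pol$ therefore immediately yields security for $\pol'$, establishing monotonicity of $\mname{GN}_{pw}$.

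For $\mname{C{-}GN}$ I would reduce to the base case using two elementary facts about cuts. First, every cut with respect to the larger policy $\pol'$ is also a cut with respect to $\pol$: a cut $\absn = ({\cal H},{\cal L})$ requires the absence of any $u \in {\cal H}$, $v \in {\cal L}$ with $u \pol' v$, and since $\pol \subseteq \pol'$ this forbids $u \pol v$ as well. Thus the family of cuts quantified over \emph{shrinks} as the policy grows. Second, for a fixed cut $\absn$ the abstracted system $\struct{M}^\absn$ depends only on the underlying partition $\set{{\cal H},{\cal L}}$ and not on the policy, while the abstracted policies satisfy $\pol^\absn \subseteq \pol'^\absn$ (if $x \pol x'$ witnesses $F \pol^\absn G$ then $x \pol' x'$ witnesses $F \pol'^\absn G$). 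Now assume $\struct{M}$ is $\mname{C{-}GN}$-secure for $\pol$ and let $\absn$ be any cut for $\pol'$. It is then a cut for $\pol$, so $\struct{M}^\absn$ is $\mname{GN}$-secure for $\pol^\absn$; since $\pol^\absn \subseteq \pol'^\absn$, monotonicity of $\mname{GN}_{pw}$ applied on the two-domain system $\struct{M}^\absn$ gives $\mname{GN}$-security for $\pol'^\absn$. As $\absn$ was an arbitrary cut for $\pol'$, the system is $\mname{C{-}GN}$-secure for $\pol'$.

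Finally, $\mname{H{-}GN}$ comes for free: Theorem~\ref{thm:CGN=HGN} shows that $\mname{H{-}GN}$ and $\mname{C{-}GN}$ coincide as security notions for \emph{every} policy, so the monotonicity statement transfers verbatim from $\mname{C{-}GN}$. The argument is short, and I do not expect a substantive obstacle; the only place demanding care is keeping the inclusion directions consistent, since the cut family \emph{contracts} when passing from $\pol$ to $\pol'$ whereas the two-domain abstracted policies \emph{expand}, and both facts must align for the base case to apply. Getting this bookkeeping right — which object shrinks and which grows — is the one step I would check most carefully.
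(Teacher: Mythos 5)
Your proof is correct, but it decomposes the problem in the opposite way from the paper. For the base case both arguments are the same: the properties $\mname{GN}^+(u,v)$ and $\mname{GN}^-(u,v)$ are policy-independent, and enlarging the policy only shrinks the set of pairs $(u,v)$ required to satisfy them (the paper phrases this contrapositively, observing that a vulnerability with respect to $\pol_1$ is already a vulnerability with respect to $\pol_0$). The divergence is in the cut-based notions: the paper proves monotonicity of $\mname{H{-}GN}$ \emph{directly}, by translating a vulnerability of $\struct{M}^{\absUpPol{\dom(a)}{\pol_1}}$ into one of $\struct{M}^{\absUpPol{\dom(a)}{\pol_0}}$ via Lemma~\ref{thm:GN-vuln-lemma} --- machinery that is genuinely needed there, because the High-up partitions induced by $\pol_0$ and $\pol_1$ are in general \emph{different} partitions of $D$ --- and then obtains $\mname{C{-}GN}$ from Theorem~\ref{thm:CGN=HGN}. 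You instead prove monotonicity of $\mname{C{-}GN}$ directly and obtain $\mname{H{-}GN}$ from the same equivalence. Your direct leg is more elementary: since every cut for $\pol'$ is a cut for $\pol$, the abstracted system $\struct{M}^{\absn}$ does not depend on the policy, and $\pol^{\absn} \subseteq {\pol'}^{\absn}$, you need only the base case applied to the fixed two-domain system, with no translation of vulnerabilities between distinct partitions. This is precisely why your route would not work for $\mname{H{-}GN}$ head-on (the High-up partition moves when the policy grows, which is the situation Lemma~\ref{thm:GN-vuln-lemma} handles), but Theorem~\ref{thm:CGN=HGN} covers that, and its proof does not rest on monotonicity, so there is no circularity. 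Both routes deliver the theorem; yours trades the paper's translation lemma for elementary bookkeeping about cuts, at the cost of making the equivalence theorem load-bearing for $\mname{H{-}GN}$ rather than for $\mname{C{-}GN}$.
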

\begin{proof}
  \emph{$\mname{GN}$ is monotonic:}
  Let $(u, \alpha_0, a, \alpha_1, \beta, \pol_1)$ be a $\mname{GN}$-vulnerability of some system $\struct{M}$,
  then $\dom(a) \not\pol_1 u$,
  which implies $\dom(a) \not\pol_0 u$ since $\not\pol_1 \subseteq \not\pol_0$.
  Because domain assignments and $u$ views are not affected by the policy,
  we have that $(u, \alpha_0, a, \alpha_1 ,\beta, \pol_0)$ is a $\mname{GN}$ vulnerability of $\struct{M}$.

  \emph{$\mname{H{-}GN}$ and $\mname{C{-}GN}$ are monotonic:}
  Since  $\mname{H{-}GN}$ and $\mname{C{-}GN}$ are equivalent by Theorem \ref{thm:CGN=HGN},
  it suffices to prove it for $\mname{H{-}GN}$ only.
  
  Set $\absn_1 := \absUpPol{\dom(a)}{\pol_1}$,
  ${\cal H}_1 := \dom(a)^{\pol_1}$ and ${\cal L}_1 := D \setminus \dom(a)^{\pol_1}$.
  Let $\struct{M}$ be a system such that $({\cal L}_1,\alpha_0,a,\alpha_1,\beta,\pol_1)$ is a
  $\mname{GN}$ vulnerability of $\struct{M}^{\absn}$.
  Then we have ${\cal H}_1 \not\pol_1^{\absn_1} {\cal L}_1$.
  Furthermore, set
  $\absn_0 := \absUpPol{\dom(a)}{\pol_0}$,
  ${\cal H}_0 := \dom(a)^{\pol_0}$, and ${\cal L}_0 := D \setminus \dom(a)^{\pol_0}$.
  This implies ${\cal H}_0 \not\pol_0^{\absn_0} {\cal L}_0$.
  We show that $({\cal L}_0, \alpha_0, a, \alpha_1, \pol_0)$ is a $\mname{GN}$ vulnerability of $\struct{M}^{\absn_0}$ by
  demonstrating that we have ${\cal L}_1 \subseteq {\cal L}_0$ and $\dom^{\absn_0}(a) \not\pol_0^{\absn_0} {\cal L}_0$,
  and then applying Lemma \ref{thm:GN-vuln-lemma}.

  For the former, let $u \in {\cal L}_1$.
  Then $\dom(a) \not\pol_1 u$ which implies $\dom(a) \not\pol_0 u$,
  and this gives us $u \in {\cal L}_0$.
  For the latter, by choice of $\absn_0$ we clearly have that $\dom(a) \in {\cal H}_0$ and therefore $\dom^{\absn_0}(a) = {\cal H}_0$,
  which implies $\dom^{\absn_0}(a) \not\pol_0^{\absn_0} {\cal L}_0$.
  Lemma \ref{thm:GN-vuln-lemma} now gives us the existence of a $\mname{GN}$ vulnerability of $\struct{M}^{\absn_0}$.
  \qed
\end{proof}
That High-up $\mname{GN}$ is monotonic can be explained with the fact that adding edges can never grow the Low
component in a given High-up cut.
Since according to the definition, High is taken to be $u^{\pol}$ for a given domain $u$, adding an edge
might grow High, which in turn would shrink Low.
So adding edges can never increase the knowledge of the Low coalition when High-up cuts are used.

The case is different for Low-down $\mname{GN}$.
Adding edges to a policy can join two formerly incomparable elements
which then become members of Low in a Low-down coalition.
The reason is the definition of the Low component, which for a given domain $u$ is taken to be ${}^{\pol}u$,
or in other words, Low will be all domains from which $u$ is permitted to learn.
Protection due to Low-down cuts thus requires a somewhat friendly attacker,
so there is an argument that this can be considered a flaw in the definition itself.

\begin{proposition}
  $\mname{L{-}GN}$ is not monotonic.
\end{proposition}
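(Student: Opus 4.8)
The plan is to refute monotonicity by exhibiting one system $\struct{M}$ and two policies $\pol_0 \subseteq \pol_1$ for which $\struct{M}$ is $\mname{L{-}GN}$-secure under $\pol_0$ but not under $\pol_1$. Both ingredients are already available from the separation arguments of Theorems \ref{thm:downcuts-dont-cut-it} and \ref{thm:GN-neq-downGN}, so the work is mostly in assembling them and checking the policy inclusion. I would take $\struct{M}$ to be the collusion system of Figure \ref{fig:GN-fruitfly}, let $\pol_0$ be the policy obtained from the identity relation on $\set{H,L_1,L_2}$ by adding $(L_1,H)$ and $(L_2,H)$, and set $\pol_1 := \pol_0 \cup \set{(L_1,L_2)}$.

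First I would record the two facts I need. From Theorem \ref{thm:downcuts-dont-cut-it}, $\struct{M}$ is $\mname{L{-}GN}$-secure for $\pol_0$: the only Low-down cuts there are $\absDownPol{L_1}{\pol_0} = (\set{H,L_2},\set{L_1})$ and $\absDownPol{L_2}{\pol_0} = (\set{H,L_1},\set{L_2})$, and neither places $L_1$ and $L_2$ into a common Low coalition, so each abstracted system is $\mname{GN}$-secure. From the separation half of Theorem \ref{thm:GN-neq-downGN}, $\struct{M}$ is not $\mname{L{-}GN}$-secure for $\pol_1$: adding $(L_1,L_2)$ enlarges the Low component based at $L_2$, since now ${}^{\pol_1}L_2 = \set{L_1,L_2}$ gives $\absDownPol{L_2}{\pol_1} = (\set{H},\set{L_1,L_2})$, exactly the cut that fuses the two attackers; by the run $r := s_0 h s_0' \ell_1 s_5 \ell_2 s_{13}$ analysed in the proof of Theorem \ref{thm:downcuts-dont-cut-it}, the coalition $\set{L_1,L_2}$ reads off from the parity of its observations that $H$ performed $h$, so $\mname{GN}^-(\set{H},\set{L_1,L_2})$ fails and $\struct{M}^{\absDownPol{L_2}{\pol_1}}$ is not $\mname{GN}$-secure.

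It then remains only to verify the routine side conditions: that $\pol_1$ is a legitimate (reflexive, transitive) policy, for which the single chain to check is $L_1 \pol_1 L_2 \pol_1 H$, already closed by $L_1 \pol_1 H$; and that $\pol_0 \subseteq \pol_1$, which holds by construction. Combining these, $\struct{M}$ is $\mname{L{-}GN}$-secure for $\pol_0$ but not for the larger policy $\pol_1$, so $\mname{L{-}GN}$ is not monotonic. I expect no real obstacle, as the heavy lifting is inherited; the one point worth stating explicitly is the conceptual reason behind the failure. Whereas the High-up Low component $D \setminus u^{\pol}$ can only shrink as edges are added, the Low-down component ${}^{\pol}u$ can grow, fusing two previously incomparable domains into one colluding coalition. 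It is this wrong-way monotonicity of ${}^{\pol}(\cdot)$ that lets a relaxation of the policy hand the attacker more power.
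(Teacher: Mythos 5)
Your proof is correct and takes essentially the same route as the paper's: the paper's own proof of this proposition likewise takes the system of Figure \ref{fig:GN-fruitfly}, invokes Theorem \ref{thm:downcuts-dont-cut-it} for $\mname{L{-}GN}$-security under the original policy, and invokes the separation argument of Theorem \ref{thm:GN-neq-downGN} to show that adding the edge $(L_1,L_2)$ destroys $\mname{L{-}GN}$-security via the cut $(\set{H},\set{L_1,L_2})$. Your extra checks (transitivity of the enlarged policy, the explicit computation of the Low-down cuts, and the closing remark on why ${}^{\pol}(\cdot)$ grows under added edges) are all sound and merely make explicit what the paper leaves implicit.
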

\begin{proof}
  As shown in the proof of Theorem \ref{thm:downcuts-dont-cut-it},
  the system depicted in Figure \ref{fig:GN-fruitfly} is $\mname{L{-}GN}$-secure.
  The proof of Theorem \ref{thm:GN-neq-downGN} demonstrates that adding the edge $(L_1,L_2)$ to the policy turns it non-$\mname{L{-}GN}$-secure.
  \qed
\end{proof}


This concludes the investigation of monotonicity for our variants of $\mname{GN}$.
But there are two more things to note about adding edges to a policy when $\mname{GN}$ is used:
(1) The attack surface does not increase, since already generic
$\mname{GN}$ requires all domains to be unable to infer the occurrence or nonoccurrence of \emph{any} secret events.
In fact, it might even get smaller.
(2) Adding edges can decrease the number of cuts, so there might be fewer requirements on the system in order to
be secure.
For Cut $\mname{GN}$, the results from Theorem \ref{thm:GN-monotonic} suggests that a smaller number of cuts can compensate for
a possible increase of Low's knowledge.

The argument for Cut $\mname{NDI}$ being monotonic is similar to the case of High-up and Cut $\mname{GN}$.
That $\mname{NDI}$ and $\mname{NDI}_{pw}$ are monotonic is straightforward.

\begin{proposition}
  The notions $\mname{NDI}_{pw}$, $\mname{NDI}$ and $\mname{C{-}NDI}$ are monotonic.
\end{proposition}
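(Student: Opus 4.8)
The plan is to treat the three notions in increasing order of difficulty, exploiting the fact that the relation $\flows_I$ is a property of the system alone and never mentions the policy. Throughout I fix a system with domain set $D$ and two policies $\pol_0 \subseteq \pol_1$, and use the resulting inclusions ${\not\pol_1} \subseteq {\not\pol_0}$ and hence ${}^{\not\pol_1}u \subseteq {}^{\not\pol_0}u$ for every $u$. For $\mname{NDI}_{pw}$, I would observe that $\mname{NDI}_{pw}$-security for $\pol$ is exactly the conjunction of the constraints $u \not\flows_I v$ over all pairs with $u \not\pol v$; since the pairs barred by $\pol_1$ are a subset of those barred by $\pol_0$, security for $\pol_0$ immediately gives security for $\pol_1$. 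For the setwise notion (written $\mname{NDI}$ in the statement, i.e.\ $\mname{NDI}_{sw}$) the obligation for $\pol$ is ${}^{\not\pol}u \not\flows_I u$ for each $u$, and here the one micro-step needing care is that enlarging the attacker set from ${}^{\not\pol_1}u$ to ${}^{\not\pol_0}u$ cannot rescue security: if $(\alpha,\beta)$ witnesses ${}^{\not\pol_1}u \flows_I u$, then $\alpha \in A_{{}^{\not\pol_1}u}^{*} \subseteq A_{{}^{\not\pol_0}u}^{*}$, and any run $r$ with $\act_{{}^{\not\pol_0}u}(r)=\alpha$ would, because $\alpha$ uses only the smaller alphabet, also satisfy $\act_{{}^{\not\pol_1}u}(r)=\alpha$; so the same $(\alpha,\beta)$ witnesses ${}^{\not\pol_0}u \flows_I u$. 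Contrapositively, security for $\pol_0$ implies security for $\pol_1$.

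For $\mname{C{-}NDI}$ the argument reduces cleanly to the $\mname{NDI}_{sw}$ case applied to each abstracted system, in the spirit of the High-up/Cut $\mname{GN}$ proof. I would take an arbitrary cut $\absn=({\cal H},{\cal L})$ of $D$ with respect to $\pol_1$ and establish two compatibility facts. First, $\absn$ is also a cut with respect to $\pol_0$, since ${\cal H}\not\pol_1{\cal L}$ together with ${\not\pol_1}\subseteq{\not\pol_0}$ gives ${\cal H}\not\pol_0{\cal L}$. Second, $\pol_0^{\absn}\subseteq\pol_1^{\absn}$, directly from the definition of the abstracted policy (if $x\pol_0 x'$ then $x\pol_1 x'$). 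Assuming $\struct{M}$ is $\mname{C{-}NDI}$-secure for $\pol_0$, the first fact places $\struct{M}^{\absn}$ among the abstracted systems covered by that hypothesis, so $\struct{M}^{\absn}$ is $\mname{NDI}_{sw}$-secure for $\pol_0^{\absn}$. The second fact then lets me invoke monotonicity of $\mname{NDI}_{sw}$ (the case just settled) on the two-domain system $\struct{M}^{\absn}$ with $\pol_0^{\absn}\subseteq\pol_1^{\absn}$, concluding that $\struct{M}^{\absn}$ is $\mname{NDI}_{sw}$-secure for $\pol_1^{\absn}$. As $\absn$ ranged over all cuts with respect to $\pol_1$, this yields $\mname{C{-}NDI}$-security for $\pol_1$.

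The bulk of the work is conceptual bookkeeping rather than a single hard step: one must correctly track the two opposite directions of containment (cuts survive when edges are added, whereas abstracted policies grow), and isolate the one genuinely substantive point, namely the subsequence argument in the $\mname{NDI}_{sw}$ case showing that a larger attacker coalition cannot destroy a deducibility witness expressed over the smaller coalition's alphabet. If one instead wished to mirror the paper's vulnerability-translation style, the same conclusion for $\mname{C{-}NDI}$ follows by keeping the cut $\absn$ fixed and transferring a vulnerability $(G,\alpha,\beta,\pol_1^{\absn})$ of $\struct{M}^{\absn}$ to a vulnerability $(G,\alpha,\beta,\pol_0^{\absn})$, using $\pol_0^{\absn}\subseteq\pol_1^{\absn}$ for the attacker condition $\dom^{\absn}(a)\not\pol^{\absn}G$ and the subsequence argument for the run condition; this is the across-policies analogue of Lemma \ref{thm:NDI-vuln-lemma}.
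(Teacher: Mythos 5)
Your proof is correct. For $\mname{NDI}_{pw}$ and for the setwise notion it takes essentially the same route as the paper; in fact your explicit subsequence step (a run $r$ with $\act_{{}^{\not\pol_0}u}(r)=\alpha$ also satisfies $\act_{{}^{\not\pol_1}u}(r)=\alpha$ because $\alpha$ uses only the smaller alphabet) spells out precisely the point the paper compresses into the phrase ``we get that $\alpha$ and $\beta$ are compatible.''

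Where you genuinely diverge is $\mname{C{-}NDI}$. The paper keeps the cut $\absn$ fixed and directly transfers a vulnerability $({\cal L},\alpha,\beta,\pol_1^{\absn})$ of $\struct{M}^{\absn}$ into one for $\pol_0^{\absn}$, invoking Lemma \ref{thm:NDI-vuln-lemma} with $F=G={\cal L}$ --- an application that is, strictly speaking, an across-policies reading of a lemma stated for one policy and two cuts. You instead argue compositionally: every cut with respect to $\pol_1$ remains a cut with respect to $\pol_0$, the abstracted policies satisfy $\pol_0^{\absn}\subseteq\pol_1^{\absn}$, and then the monotonicity of $\mname{NDI}_{sw}$, which you have just established for arbitrary systems, applies to the two-domain system $\struct{M}^{\absn}$ (where the base notion $\mname{NDI}$ and $\mname{NDI}_{sw}$ coincide, as the paper notes in Proposition \ref{thm:GN-in-NDI}). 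Your factoring buys modularity: it reuses the base case verbatim, needs no vulnerability machinery, and avoids the slight mismatch in how the paper invokes its lemma. The paper's route, in exchange, stays within the uniform vulnerability-translation framework that it uses for all the other cut-to-cut comparisons in Section \ref{sec:proofs}. Your closing observation that the paper-style alternative requires an across-policies analogue of Lemma \ref{thm:NDI-vuln-lemma} is exactly the right diagnosis of that mismatch.
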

\begin{proof}
  \emph{$\mname{NDI}_{pw}$ is monotonic:}
  Let $\struct{M}$ be $\mname{NDI}_{pw}$-secure for $\pol_0$ and $u, v \in D$ such that $u \flows_I v$.
  Then $u \pol_0 v$ by $\mname{NDI}_{pw}$-security of $\struct{M}$ for $\pol_0$ and $u \pol_1 v$ by $\pol_0 \subseteq \pol_1$.
  Therefore, $\struct{M}$ is $\mname{NDI}_{pw}$-secure for $\pol_1$.
  

  \emph{$\mname{C{-}NDI}$ is monotonic:}
  Let $\struct{M}$ be a system and $\absn = ({\cal H}, {\cal L})$ be a cut of $D$, then we have $H \not\pol_1^\absn L$.
  Assume $\struct{M}^\absn$ has an $\mname{NDI}$ vulnerability $({\cal L},\alpha,\beta,\pol_1)$.
  We show that $({\cal L},\alpha,\beta,\pol_0)$ is a vulnerability of that system, too.

  First, let $u \in {\cal H}$ and $v \in {\cal L}$.
  Then we have $u \not\pol_1 v$, which implies $u \not\pol_0 v$.
  Therefore ${\cal H} \not\pol_0^{\absn} {\cal L}$ and $\absn$ is a valid cut with respect to $\pol_0$.
  This also gives us $\dom^\absn(a) \not\pol_0^\absn {\cal L}$ for all actions $a$ that occur in $\alpha$,
  because we must have $\dom^\absn(a) = {\cal H}$ by vulnerability.
  
  Finally, clearly $\beta$ remains a valid ${\cal L}$ view and trivially we have ${\cal L} \subseteq {\cal L}$.
  Therefore, due to Lemma \ref{thm:NDI-vuln-lemma}, we get that $({\cal L},\alpha,\beta,\pol_0)$ is an $\mname{NDI}$
  vulnerability of $\struct{M}^\absn$.

  \emph{$\mname{NDI}$ is monotonic:}
  Let $\struct{M}$ be a system that is $\mname{NDI}$-secure for $\pol_0$,
  $u \in D$,
  $X_0 := {}^{\not\pol_0}u$, $X_1 :={}^{\not\pol_1}u$,
  $\alpha \in {A_{X_1}}^*$ and $\beta \in \Views_u(\struct{M})$.
  Then, because of $\not\pol_1 \subseteq \not\pol_0$, we have $X_1 \subseteq X_0$,
  which implies $\alpha \in {A_{X_0}}^*$.
  With $\mname{NDI}$-security of $\struct{M}$ for $\pol_0$ we get that $\alpha$ and $\beta$ are compatible,
  and therefore $\struct{M}$ is $\mname{NDI}$-secure for $\pol_1$ as well.
  \qed
\end{proof}

Merging two incomparable domains into a Low-down cut is also possible for $\mname{NDI}$,
and here we too obtain the result that the Low-down variant fails to be monotonic.
In fact, the same system as in the case of Low-down $\mname{GN}$ can be used.

However, the notion High-up $\mname{NDI}$ does not share the monotonicity property with its $\mname{GN}$ counterpart.
While it's true that Low might shrink if edges are added and thus has less knowledge at hand for an attack,
and that the new policy might have fewer cuts,
the set of action sequences that have to be compatible with any Low view grows,
which increases the attack surface (i.e., Low might now be able to exclude certain High behaviours).
The next result suggests that this increase can outweigh the loss of knowledge experienced by Low and
the fewer number of cuts combined.

\begin{theorem}
  The notions $\mname{L{-}NDI}$ and $\mname{H{-}NDI}$ are not monotonic.
\end{theorem}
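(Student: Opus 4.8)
The plan is to refute monotonicity for each notion by exhibiting one system $\struct{M}$ together with two policies $\pol_0 \subseteq \pol_1$ for which $\struct{M}$ is secure with respect to $\pol_0$ but insecure with respect to $\pol_1$. Both witnesses reuse systems and computations already established, and both exploit the same mechanism: adding a single edge between two domains that are incomparable under $\pol_0$ forces those domains into a common coalition of some cut under $\pol_1$, thereby enabling a combined-behaviour (collusion/disjunction) attack that no cut of $\pol_0$ exposes.

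For $\mname{L{-}NDI}$ I would take the system $\struct{M}$ of Figure~\ref{fig:GN-fruitfly} with $\pol_0$ given by $L_1 \pol H$ and $L_2 \pol H$, for which $\struct{M}$ is already shown to be $\mname{L{-}NDI}$-secure in the proof of Theorem~\ref{thm:LNDI-incomparable-HNDI}. Put $\pol_1 := \pol_0 \cup \set{(L_1,L_2)}$; this stays transitive, since the only new composite $L_1 \pol L_2 \pol H$ requires $L_1 \pol H$, which already holds. Under $\pol_1$ we have ${}^{\pol_1}L_2 = \set{L_1,L_2}$, so the Low-down cut $\absDownPol{L_2}{\pol_1}$ equals $(\set{H},\set{L_1,L_2})$ — the very coalition whose collusion attack on this system is exhibited in the proofs of Theorem~\ref{thm:downcuts-dont-cut-it} and Theorem~\ref{thm:LNDI-incomparable-HNDI}. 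From the parity of their final observations $L_1$ and $L_2$ jointly detect whether $h$ occurred, so some $\set{L_1,L_2}$ view is incompatible with the empty $\set{H}$-action sequence and $\set{H}\flows_I\set{L_1,L_2}$ in $\struct{M}^{\absDownPol{L_2}{\pol_1}}$. Hence $\struct{M}$ is not $\mname{L{-}NDI}$-secure for $\pol_1$. This is the $\mname{NDI}$-analogue of the $\mname{L{-}GN}$ argument, using the same system and the same added edge.

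For $\mname{H{-}NDI}$ I would take the deterministic system and policy of Example~\ref{ex:NDI-fruitfly}, with $\pol_0$ given by $L \pol H_1$ and $L \pol H_2$, for which $\struct{M}$ is shown $\mname{H{-}NDI}$-secure in the proof of Theorem~\ref{thm:LNDI-incomparable-HNDI}. Put $\pol_1 := \pol_0 \cup \set{(H_1,H_2)}$, which remains transitive (the new composite $L \pol H_1 \pol H_2$ needs $L \pol H_2$, already present). Under $\pol_1$ we have $H_1^{\pol_1} = \set{H_1,H_2}$, so the High-up cut $\absUpPol{H_1}{\pol_1}$ equals $(\set{H_1,H_2},\set{L})$. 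In $\struct{M}^{\absUpPol{H_1}{\pol_1}}$ the single Low domain $\set{L}$ attains the view $0\,\ell\,1$ only after some $h_1$ or $h_2$ has occurred, so the empty action sequence of $\set{H_1,H_2}$ is incompatible with $0\,\ell\,1$ and $\set{H_1,H_2}\flows_I\set{L}$. This is precisely the $\mname{NDI}_{sw}$ failure of the example, now realised as an ordinary $\mname{NDI}$ failure of the abstracted two-domain system; hence $\struct{M}$ is not $\mname{H{-}NDI}$-secure for $\pol_1$.

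The numerical calculations are all inherited from the earlier proofs, so the only genuinely delicate point is selecting, in each case, an added edge that simultaneously keeps the policy transitive and causes the two incomparable domains to be pooled into the attacking coalition of a cut of the required type — High-up for $\mname{H{-}NDI}$, Low-down for $\mname{L{-}NDI}$ — under the enlarged policy. I therefore expect the main obstacle to be the bookkeeping that verifies $(\set{H},\set{L_1,L_2})$ is exactly $\absDownPol{L_2}{\pol_1}$ and $(\set{H_1,H_2},\set{L})$ is exactly $\absUpPol{H_1}{\pol_1}$, and that each is still a valid cut (no forbidden edge from its High part to its Low part); everything else reduces to the witnesses already presented for Theorem~\ref{thm:LNDI-incomparable-HNDI}.
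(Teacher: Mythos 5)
Your proposal is correct and takes essentially the same route as the paper's own proof: the identical witnesses (the system of Figure~\ref{fig:GN-fruitfly} with the added edge $(L_1,L_2)$ for $\mname{L{-}NDI}$, and the system of Example~\ref{ex:NDI-fruitfly} with the added edge $(H_1,H_2)$ for $\mname{H{-}NDI}$), the identical cuts $\absDownPol{L_2}{\pol_1} = (\set{H},\set{L_1,L_2})$ and $\absUpPol{H_1}{\pol_1} = (\set{H_1,H_2},\set{L})$, and the same incompatibility witnesses inherited from Theorem~\ref{thm:LNDI-incomparable-HNDI} and Proposition~\ref{thm:NDI-strictly-contains-NDIN}. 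The paper merely phrases the final step by observing that these cuts coincide with $\absUpPol{H}{\pol_0}$ and $\absDownPol{L}{\pol_0}$ respectively, whereas you recompute the attacks directly; the substance is the same.
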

\begin{proof}
  \emph{$\mname{L{-}NDI}$ is not monotonic:}
  In the proof of Theorem \ref{thm:LNDI-incomparable-HNDI} it is shown that the the system in Figure \ref{fig:GN-fruitfly} is $\mname{L{-}NDI}$-secure for the depicted policy, which we call $\pol_0$,
  given by $L_1 \pol_0 H$ and $L_2 \pol_0 H$ (excluding edges due to reflexivity).
  Let $\pol_1$ be the policy obtained by taking $\pol_0$ and adding the additional edge $L_1 \pol_1 L_2$.
  Then the system is not $\mname{L{-}NDI}$-secure for $\pol_1$:
  consider the cut $\absDownPol{L_2}{\pol_1}$.
  It is equivalent to $\absUpPol{H}{\pol_0}$.
  In the proof of Theorem \ref{thm:LNDI-incomparable-HNDI},
  it is demonstrated that the system is not $\mname{NDI}$-secure for $\pol_0^{\absUp{H}{}}$.
  But since $\absUp{H}{}$ and $\absDown{L_2}{}$ are equal,
  we get that it isn't $\mname{NDI}$-secure for $\pol_1^{\absDown{L_2}{}}$ either.
  
  \emph{$\mname{H{-}NDI}$ is not monotonic:}
  Consider the system $\struct{M}$ in Example \ref{ex:NDI-fruitfly} and call the depicted policy $\pol_0$,
  which is given by $L \pol_0 H_1$ and $L \pol_0 H_2$, excluding edges due to reflexivity.
  Let $\pol_1$ be the policy $\pol_0$ with the additional edge $H_1 \pol_1 H_2$.
  The system is $\mname{H{-}NDI}$-secure for $\pol_0$ due to Theorem \ref{thm:LNDI-incomparable-HNDI},
  but it is not $\mname{H{-}NDI}$-secure for $\pol_1$.
  To see this, take the cut $\absUpPol{H_1}{\pol_1}$,
  which is equivalent to $\absDownPol{L}{\pol_0}$.
  And as argued in the proof of Proposition \ref{thm:NDI-strictly-contains-NDIN},
  the system $\struct{M}^{\absDownPol{L}{\pol_0}}$ has an $\mname{NDI}$ vulnerability,
  and thus $\struct{M}$ is not $\mname{H{-}NDI}$-secure for $\pol_1$.
  \qed
\end{proof}

Since $\mname{NDI}$ is monotonic, it cannot be equal to Low-down $\mname{NDI}$, and
therefore this containment must be strict.
With the same argument we get strict containment of High-up $\mname{NDI}$ in pointwise $\mname{NDI}.$

\begin{corollary}
  $\mname{L{-}NDI}$ is strictly contained in $\mname{NDI}$, and
  $\mname{H{-}NDI}$ is strictly contained in $\mname{NDI}_{pw}$.
\end{corollary}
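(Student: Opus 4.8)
The plan is to obtain both strict containments by combining the implications already established with the monotonicity results just proved, via the elementary principle that a monotonic notion of security cannot coincide with a non-monotonic one. Note first that $\mname{NDI}$ in the statement denotes the setwise notion $\mname{NDI}_{sw}$ (as in the monotonicity proposition of this subsection), not the two-domain notion; this is the correct reading for the argument to go through.

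The first step is to record the non-strict containments. Theorem~\ref{thm:LNDI-implies-NDI} already shows that $\mname{L{-}NDI}$ implies $\mname{NDI}_{sw}$, giving $\mname{L{-}NDI} \subseteq \mname{NDI}_{sw}$, and the preceding proposition that $\mname{H{-}NDI}$ implies $\mname{NDI}_{pw}$ gives $\mname{H{-}NDI} \subseteq \mname{NDI}_{pw}$. The second step is to rule out equality in each case. If two notions $A$ and $B$ were literally the same property, then $A$ would be monotonic precisely when $B$ is. The immediately preceding theorem establishes that neither $\mname{L{-}NDI}$ nor $\mname{H{-}NDI}$ is monotonic, while the monotonicity proposition establishes that both $\mname{NDI}_{sw}$ and $\mname{NDI}_{pw}$ are monotonic. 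Hence $\mname{L{-}NDI} \neq \mname{NDI}_{sw}$ and $\mname{H{-}NDI} \neq \mname{NDI}_{pw}$, and combining with the containments of the first step yields $\mname{L{-}NDI} \subsetneq \mname{NDI}_{sw}$ and $\mname{H{-}NDI} \subsetneq \mname{NDI}_{pw}$.

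There is no genuine technical obstacle, since every ingredient has already been proved; the work is entirely in checking that the pieces are assembled in the correct direction. The one point deserving attention is that the non-monotonic notion must sit on the \emph{stronger} (contained) side and the monotonic notion on the \emph{weaker} (containing) side: the implications $\mname{L{-}NDI} \Rightarrow \mname{NDI}_{sw}$ and $\mname{H{-}NDI} \Rightarrow \mname{NDI}_{pw}$ run exactly this way, so the monotonicity dichotomy separates each pair in the direction required to turn containment into strict containment. One must also be careful not to appeal to an explicit separating example here, as the monotonicity contrast already supplies the needed inequality without constructing a concrete witness system.
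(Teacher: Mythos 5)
Your proposal is correct and is essentially identical to the paper's own argument: the paper likewise combines the containments $\mname{L{-}NDI} \Rightarrow \mname{NDI}_{sw}$ and $\mname{H{-}NDI} \Rightarrow \mname{NDI}_{pw}$ with the fact that $\mname{NDI}_{sw}$ and $\mname{NDI}_{pw}$ are monotonic while $\mname{L{-}NDI}$ and $\mname{H{-}NDI}$ are not, concluding that the notions cannot be equal and hence the containments are strict. Your reading of $\mname{NDI}$ as the setwise notion, and your observation that the non-monotonic notion must lie on the contained side, both match the paper exactly.
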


Together with the results on containment relationships obtained in the previous subsections,
our results on monotonicity now yield proofs for Theorem \ref{thm:main}.

} 
{
}

\section{Conclusion}
\label{sec:concl} 

In this work we have discussed several variants of Generalized Noninterference and Nondeducibility on Inputs
for multi-domain policies that use reductions to the two-level case,
including a technique proposed by Ryan.
We have found that this technique leads to a stricter notion in the case of Generalized Noninterference,
but behaves counter-intuitively in the case of Nondeducibility on Inputs,
where it yields a notion that is incomparable to a natural variant for multi-domain policies.
We have found evidence that seems to suggest that considering all cuts is a more robust choice as a reduction technique.
Some notions we obtained break our intuitions in the sense that they are not preserved under removing noninterference constraints.

These results have left open a question about how to handle the general case of collusion, as
reductions to $H \not\pol L$ are a special case of collusion where two coalitions are operating,
while general abstractions can model an arbitrary number of coalitions.
It seems natural to extend the theory such that it can handle general abstractions,
but then we leave the area of transitive noninterference.
For example, consider the transitive policy $\pol$ that contains the relations $A \pol B$ and $C \pol D$ only, and
the abstraction ${\cal D}$ that forms the coalitions $\set{A}$, $\set{B,C}$ and $\set{D}$.
The resulting policy $\pol^{\cal D}$ is intransitive, as it 
has edges $\set{A} \pol^{\cal D} \set{B,C}$ and $\set{B,C} \pol^{\cal D}\set{D}$, but 
lacks the edge $\set{A} \pol^{\cal D} \set{D}$.
In this case, it seems reasonable to say that information may get from $A$ to $D$,
as domains $B$ and $C$ collude and share their observations, but it needs intermediate behaviour by them
in order to forward the information.
Adding the edge $\set{A} \pol^{\cal D} \set{D}$ clashes with this reasoning,
as it would express that $A$ may \emph{directly} communicate with $D$.
This suggests that dealing with general abstractions requires techniques from the theory of intransitive noninterference.
Semantics for intransitive noninterference that build in types of collusion 
have been considered in a few works \cite{BackesP03,RonZhangEngelhardt2012},
but the relationship of these definitions to abstractions remains to be studied.

\bibliographystyle{splncs}
\bibliography{main,fastni}

\end{document}